\newtheorem{theorem}{Theorem}[section]
\newtheorem{lemma}[theorem]{Lemma}
\newtheorem{corollary}[theorem]{Corollary}
\newtheorem{definition}[theorem]{Definition}
\newtheorem{proposition}[theorem]{Proposition}
\newtheorem{observation}[theorem]{Observation}
\newtheorem*{rep@theorem}{\rep@title}
\newcommand{\newreptheorem}[2]{%
\newenvironment{rep#1}[1]{%
 \def\rep@title{#2 \ref{##1}}%
 \begin{rep@theorem}}%
 {\end{rep@theorem}}}
\DeclareMathOperator{\rank}{rank}
\DeclareMathOperator{\conv}{conv}
\DeclareMathOperator{\poly}{poly}
\DeclareMathOperator{\spn}{span}
\DeclareMathOperator{\argmax}{argmax}
\newcommand\E{\mathbb{E}}
\newcommand{\defcal}[1]{\expandafter\newcommand\csname c#1\endcsname{{\mathcal{#1}}}}
\newcommand{\defbb}[1]{\expandafter\newcommand\csname b#1\endcsname{{\mathbb{#1}}}}
\newcounter{calBbCounter}
    \edef\letter{\Alph{calBbCounter}}
\newcommand{\rz}[1]{{\color{blue}RZ: #1}}
\newcommand{\eps}{{\varepsilon}}
\newcommand{\ie}{{\it i.e.}}
\newcommand{\eg}{{\it e.g.}}
\newcommand{\characteristic}{{\mathbf{1}}}
\title{Online Contention Resolution Schemes}
\author{Moran Feldman%
\thanks{School of Computer and Communication Sciences, EPFL. 
Email:
\href{mailto:moranfe3@gmail.com}{moranfe3@gmail.com}.
Supported by ERC Starting Grant 335288-OptApprox.}
\and
Ola Svensson\thanks{School of Computer and Communication Sciences, EPFL.
Email:
\href{mailto:ola.svensson@epfl.ch}{ola.svensson@epfl.ch}.
Supported by ERC Starting Grant 335288-OptApprox.}
\and
Rico Zenklusen%
\thanks{Department of Mathematics, ETH Zurich,
and Department of Applied Mathematics
and Statistics, Johns Hopkins University.
Email:
\href{mailto:ricoz@math.ethz.ch}{ricoz@math.ethz.ch}.
}%
}
\date{\today}
\begin{document}

\maketitle


\begin{abstract}\ifbool{SODACamera}{\small\baselineskip=9pt}{} 
We introduce a new rounding technique designed for online optimization
problems, which is related to contention resolution schemes, a technique
initially introduced in the context of submodular function maximization.
Our rounding technique, which we call \emph{online contention
resolution schemes} (OCRSs), is applicable to many online
selection problems, including Bayesian online selection,
oblivious posted pricing mechanisms, and stochastic probing
models. It allows for handling a wide set of constraints,
and shares many strong properties of offline contention
resolution schemes. In particular, OCRSs for different
constraint families can be combined to obtain an OCRS
for their intersection. Moreover, we can approximately
maximize submodular functions in the online settings
we consider.

We, thus, get a broadly applicable framework for several online selection
problems, which improves on previous approaches in terms of the types of
constraints that can be handled, the objective functions that can be dealt
with, and the assumptions on the strength of the adversary.  Furthermore, we
resolve two open problems from the literature; namely, we present the first
constant-factor constrained oblivious posted price mechanism for matroid
constraints, and the first constant-factor algorithm for weighted stochastic
probing with deadlines.

\end{abstract}

\medskip
\noindent
{\small \textbf{Keywords:}
contention resolution schemes, online algorithms,
matroids, 
prophet inequalities, oblivious posted pricing,
stochastic probing
}

\thispagestyle{empty}

\pagenumbering{Alph}
\newpage

\pagenumbering{arabic}


\section{Introduction}

Recently, interest has surged in Bayesian and stochastic online optimization
problems. These are problems where we can use limited a priori information to
select elements arriving online, often subject to 
classical combinatorial constraints such as matroids, matchings and
knapsack constraints.
Examples include posted pricing
mechanisms~\cite{chawla_2010_multi-parameter,yan_2010_mechanism,kleinberg_2012_matroid},
prophet inequalities~\cite{kleinberg_2012_matroid},
probing models~\cite{gupta_2013_stochastic,adamczyk_2014_submodular},
stochastic matchings~\cite{bansal_2012_lp},
and secretary
problems~\cite{babaioff_2007_matroids,lachish_2014_competitive,%
feldman_2015_simple}.\footnote{Strictly speaking, secretary problems have no a priori information. However, as items arrive in a random order, most algorithms first observe a fraction of the elements (serving as the a priori information), and then devise an online strategy based on this information.}
Simultaneous with this development, interest has arose also
in generalizing the optimization of linear objective functions to relevant
nonlinear objective
functions.
A particular focus was set on
submodular functions, which is a function class that
captures the property of diminishing returns, a very
natural property in many of the above-mentioned
settings~\cite{lehmann_2006_combinatorial,%
adamczyk_2014_submodular,bateni_2013_submodular,%
feldman_2015_submodular}. 


A very successful approach  for these problems is based on first using the
a priori information to formulate an (often linear) relaxation whose optimal
fractional solution $x^*$ upper bounds the performance of any online (or even
offline) algorithm. Then, $x^*$ is used to devise an online algorithm whose
goal is to recover a solution of a similar objective value as $x^*$.
%
%
Such an online algorithm can also be
interpreted as an online rounding procedure
for rounding $x^*$.
In particular, online rounding approaches have
recently been used to obtain nearly optimal 
and surprisingly elegant results for
stochastic matchings
(see Bansal et al.~\cite{bansal_2012_lp}),
and for a very general probing
model introduced by Gupta
and Nagarajan~\cite{gupta_2013_stochastic}
with applications in posted pricing mechanisms, online
matching problems and beyond.

A key ingredient in the general rounding algorithms 
presented in~\cite{gupta_2013_stochastic} are so-called
contention resolution schemes (CRSs), a rounding technique
introduced by Chekuri et al.~\cite{chekuri_2014_submodular}
in the context of (offline) submodular function
maximization.
CRSs  are defined with respect
to a constraint family, like matroids, matching, or
knapsack constraints.
Interestingly, the existence of a so-called \emph{ordered}
CRS for the given constraint family
is all that is needed to apply the techniques
of~\cite{gupta_2013_stochastic}.
Whereas this generality is very appealing, there are some
inherent barriers in current
CRSs that hinder a broader
applicability to online settings
beyond the probing model defined
in~\cite{gupta_2013_stochastic}.
More precisely, most settings considered
in~\cite{gupta_2013_stochastic} require that
the \emph{algorithm can choose the order}
in which to obtain new online information about an
underlying ground set over which the objective is optimized.
This is due to the fact that most CRSs 
need to round the components of a fractional
point $x^*$ step by step in a particular order.

In this paper we introduce a stronger
notion of contention resolution schemes that overcomes this restriction and allows for the online
information to  be presented \emph{adversarially}. 
We show that such schemes exist for many interesting
constraint families, including matroid constraints
and knapsack constraints.
As we discuss in Section~\ref{subsec:app}, this leads to a broadly applicable
online rounding framework that works in considerably more
general settings than previous approaches. Furthermore,
our techniques answer two open problems from the literature:
we show the existence of a constrained oblivious
posted-pricing mechanism (COPM)
for matroids, intersection of matroids, and further constraints
families (a question first
raised in~\cite{chawla_2010_multi-parameter});
and we get an $O(1)$-competitive algorithm
for the weighted probing problem with deadlines introduced
in~\cite{gupta_2013_stochastic}.
Additionally, our rounding approach yields optimal guarantees (up to moderate constant
factors) for a class of online submodular function maximization problems.


\medskip

Before we formally define our rounding framework (Section~\ref{subsec:ocrs}), state our results (Section~\ref{subsec:results}) and describe the aforementioned applications (Section~\ref{subsec:app}), it is helpful to introduce our rounding framework in the light of a concrete example.
Consider the following Bayesian online selection problem studied
by Kleinberg and Weinberg~\cite{kleinberg_2012_matroid}
in the context of prophet inequalities.
There is a finite set $N$ of items or elements,
and a nonnegative 
random variable $Z_e$ for each $e\in N$,
where all $\{Z_e\}_{e\in N}$ are independent.
The distributions of all $Z_e$ are known and for
simplicity we assume they are continuous.
Furthermore, a matroid $M=(N,\mathcal{F})$ on
$N$ is given.\footnote{We recall that
a matroid $M=(N,\mathcal{F})$ consists of a finite
ground set $N$ and a nonempty family
$\mathcal{F}\subseteq 2^N$ of subsets of $N$
satisfying: (i) If $I\in \mathcal{F}, J\subseteq I$,
then $J\in \mathcal{F}$, and
(ii) if $I,J\in \mathcal{F}, |J|>|I|$, then
$\exists e\in J\setminus I$ with $I\cup\{e\}\in \mathcal{F}$.
If not stated otherwise, we assume that matroids are
given by an independence oracle that, for every
$I\subseteq N$, returns whether $I\in \mathcal{F}$.}
Let $\{z_e\}_{e\in N}$ be realizations of the random
variables $\{Z_e\}_{e\in N}$.
The goal is to select a subset $I\subseteq N$
of the elements that is independent, \ie,
$I\in \mathcal{F}$, and whose value $z(I):=\sum_{e\in E}z_e$
is as large as possible.
The way how elements can be selected works as follows.
Elements $e\in N$ reveal one by one their realization $z_e$,
in a fixed prespecified order that is unknown to the algorithm.
Whenever a value $z_e$ is revealed, one has to choose whether
to select $e$ or discard it, before the next element reveals
its realization.

A natural way to approach this problem is to define a threshold
$t_e \geq 0$ for each $e\in N$ and only accept elements
$e\in N$ whose realization is at least the threshold, \ie,
$z_e \geq t_e$; we call such elements \emph{active}.
Let $x_e$ be the probability of $e$ being active, \ie,
$x_e = \Pr[Z_e \geq t_e]$.
Notice that the set of all active elements is distributed
like a random set that contains each element $e$ independently
with probability $x_e$. We denote such a set by $R(x)$.
\ifbool{shortVersion}{As we discuss in the long version
of the paper}{As we show in Section~\ref{sec:app}},
using a convex relaxation one can find thresholds
$t_e$ such that: (i) $x=(x_e)_{e\in N}$ is in the
matroid polytope $P_\mathcal{F}$,\footnote{The matroid polytope
  $P_\mathcal{F}$ is the convex hull of all characteristic vectors
of independent sets. In particular
it can be described by
$P_\mathcal{F}=\{x\in \mathbb{R}_{\geq 0}^N \mid x(S) \leq \rank(S)
\;\forall S\subseteq N\}$, where
$\rank(S)=\max\{|I|\mid I\subseteq S, I\in \mathcal{F}\}$
is the \emph{rank function} of $M$.
} and
(ii) an algorithm that disregards the matroid constraint
and accepts any active element would have an expected
return at least as good as the one of an optimal offline
algorithm.

Our goal is to design an online algorithm that only selects
active elements, such that an independent set $I$ is
obtained where $\Pr[e\in I] \geq c\cdot x_e$
for all $e\in E$,
where $c\in (0,1]$ is a constant
as large as possible. It is not hard to check
that such a procedure would lead to an objective
value of at least $c$ times the offline optimum.
The guarantee we are seeking
closely resembles the notion of $c$-balanced
CRSs as defined in~\cite{chekuri_2014_submodular},
which is an \emph{offline} algorithm that
depends on $x$ and returns for any set
$S\subseteq N$ a (potentially random)
subset $\pi(S)\subseteq S$ with $\pi(S)\in \mathcal{F}$ 
such that $\Pr[e\in \pi(R(x))]\geq c\cdot x_e$.
The only reason why this procedure is not applicable
in the above context is that, in general, $\pi$ needs
to know the realization of the full set $R(x)$ in advance
to determine $\pi(R(x))$.
However, $R(x)$ is revealed element by element in the
above selection problem.
A key observation in~\cite{gupta_2013_stochastic} is that
some CRSs do not need to know the full set $R(x)$
upfront, but can round step by step if the elements come
in some prescribed order chosen by the algorithm.
However, in the above setting, as in many other 
combinatorial online problems, the order cannot be chosen
freely.

We overcome this limitation through a considerably
stronger notion of CRSs, which we
call \emph{online contention resolution schemes}
(OCRSs).

\subsection{Online contention resolution schemes\ifbool{SODACamera}{.}{}}
\label{subsec:ocrs}

OCRSs, like classical contention resolution schemes,
are defined with respect to a relaxation of the
feasible sets of a combinatorial optimization problem.
Consider a finite ground set $N=\{e_1,\dots, e_n\}$,
and a family of
\emph{feasible} subsets $\mathcal{F}\subseteq 2^N$,
which is down-closed, \ie, if $I\in \mathcal{F}$
and $J\subseteq I$ then $J\in \mathcal{F}$.
Let $P_{\mathcal{F}}\subseteq [0,1]^N$
be the polytope corresponding
to the feasible sets $\mathcal{F}$, \ie, $P_{\mathcal{F}}$
is the convex hull of all characteristic vectors
of feasible sets:
\begin{equation*}
P_{\mathcal{F}} = \conv(\{\characteristic_F \mid I\in \mathcal{F}\})
\enspace.
\end{equation*}
We highlight that throughout this paper we focus on
down-closed feasibility constraints.

\begin{definition}[relaxation]
We say that a polytope $P\subseteq [0,1]^N$ is a relaxation
of $P_{\mathcal{F}}$ if it contains the
same $\{0,1\}$-points, i.e.,
$P\cap \{0,1\}^N = P_{\mathcal{F}} \cap \{0,1\}^N$.
\end{definition}

%

We start by defining online contention resolution schemes (OCRS)
simply as algorithms that can be applied to the online
selection problem highlighted in the introduction.
The performance of  an OCRS is then characterized by additional
properties that we define later.

\begin{definition}[\ifbool{SODACamera}{OCRS}{Online contention resolutions scheme (OCRS)}]
Let us consider the following online selection setting. A
point $x\in P$ is given and let $R(x)$ be a random subset
of \emph{active elements}.
The elements $e\in N$ reveal one by one
whether they are active, i.e., $e\in R(x)$, and the decision whether to select an active element is taken irrevocably before the next element is revealed. 
An OCRS for $P$ is an online algorithm that
selects a subset $I\subseteq R(x)$ such that
$\characteristic_I \in P$.
\end{definition}

Most of the OCRSs that we present follow a common
algorithmic theme, which leads us to the definition
of \emph{greedy OCRS}.

\begin{definition}[Greedy OCRS]
Let $P\subseteq [0,1]^N$ be a relaxation for the
feasible sets $\mathcal{F}\subseteq 2^N$.
A greedy OCRS $\pi$ for $P$ 
is an OCRS that for any $x\in P$
defines a   down-closed subfamily of feasible sets
$ \mathcal{F}_{\pi, x} \subseteq \mathcal{F}$,
and an element $e$ is selected when it arrives 
if, together with the already selected elements,
the obtained set is in $\mathcal{F}_{\pi,x}$.

If the choice of $\mathcal{F}_{\pi, x}$ given $x$
is randomized, we talk about a \emph{randomized} greedy OCRS;
otherwise, we talk about a \emph{deterministic}
greedy OCRS. We also simplify notation and abbreviate $\mathcal{F}_{\pi, x}$ by
$\mathcal{F}_x$ when the OCRS $\pi$ is clear from the context.
\end{definition}

For simplicity of presentation, and because all
our main results are based on greedy OCRSs, we
restrict our attention to this class of OCRSs,
and focus on greedy OCRSs when defining and
analyzing properties.

As mentioned in the example shown in the introduction,
a desirable property of OCRSs would be that every
element $e\in N$ gets selected with probability
at least $c\cdot x_e$ for a constant $c>0$ as large
as possible. This property is called
\emph{$c$-balancedness} in the context of classical
contention resolution schemes.
However, to be precise about such properties in the
online context that we consider, one has to specify
the power of the adversary who chooses the order
of the elements.
Adversaries of different strengths have been
considered in various online settings.
For example, one arguably weak type of
adversary is an \emph{offline adversary},
who has to choose the order of the elements upfront,
before any elements get revealed.
On the other end, the most powerful adversary
that can be considered is what we call the
\emph{almighty adversary}; an almighty adversary
knows upfront the outcomes of all random events, which includes the realization
of $R(x)$ and  the outcome of the random bits that the algorithm may query. An
almighty adversary can thus calculate exactly how the algorithm will behave  and reveal the elements in a worst case order.
%
A typical adversary type that is in between these
two extremes is an \emph{online adversary}, who
can choose the next element to reveal online depending on
what happened so far; thus, it has the same information
as the online algorithm.
Throughout this paper, when not indicated otherwise,
we assume to play against the almighty adversary.

In the context of greedy OCRSs, we define a considerably
stronger notion than $c$-balancedness, which we call
\emph{$c$-selectability}, and which leads to results
even against the almighty adversary.
In words, a greedy OCRS is $c$-selectable if with probability
at least $c$, the random set $R(x)$ is such that an element
$e$ is selected no matter what other elements $I$ of
$R(x)$ have been selected so far as long
as $I\in \mathcal{F}_x$. Thus, it guarantees that an element is selected with
probability at least $c$ against any (even the almighty) adversary.

\begin{definition}[$c$-selectability]\label{def:cSelectable}
Let $c\in [0,1]$.
A greedy OCRS for $P$ is $c$-selectable if for
any $x\in P$ we have 
\begin{equation*}
\Pr[ I \cup \{e\} \in \mathcal{F}_x \;\;\;
\forall \;I\subseteq R(x),
I\in \mathcal{F}_x]
\geq c \ifbool{SODACamera}{\quad}{\qquad} \forall e\in N
\enspace.
\end{equation*}
\end{definition}
We highlight that the probability in
Definition~\ref{def:cSelectable} is over the random
outcomes of $R(x)$ when dealing with a deterministic
greedy OCRS; when the greedy OCRS is randomized,
then the probability is over $R(x)$ and the random
choice of $\mathcal{F}_x$.
We call an element $e\in N$ \emph{selectable}
for a particular realization of $R(x)$ and random
choice of $\mathcal{F}_x$ if $I\cup\{e\}\in \mathcal{F}_x$
for all $I\subseteq R(x)$ with $I \in \mathcal{F}_x$.

As aforementioned, the $c$-selectability is a very strong property
that implies guarantees against any adversary.
Despite this strong definition, we show
that $\Omega(1)$-selectable greedy OCRSs exist for
many natural constraints.

\medskip

Often, a larger factor $c$ can be achieved when $x$ is
supposed to be in a down-scaled version of $P$.
This is similar to the situation in classical contention resolution
schemes.

\begin{definition}[$(b,c)$-selectability]
Let $b,c \in [0,1]$.
A greedy OCRS for $P$ is $(b,c)$-selectable if for
any $x\in b\cdot P$ we have 
\begin{equation*}
\Pr[ I \cup \{e\} \in \mathcal{F}_x \;\;\;
\forall \;I\subseteq R(x),
I\in \mathcal{F}_x]
\geq c \ifbool{SODACamera}{\quad}{\qquad} \forall e\in N
\enspace.
\end{equation*}
\end{definition}

Notice that a $(b,c)$-selectable greedy OCRS implies a randomized $bc$-selectable greedy OCRS because we can ``scale down'' $x$ online by only considering each element $e$ with probability $b$ independent of the other elements. 
\begin{observation} \label{obs:pair_to_single_selectability}
  A $(b,c)$-selectable greedy OCRS for $P$ implies a (randomized) $bc$-selectable greedy OCRS for $P$. 
\end{observation}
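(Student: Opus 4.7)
The plan is to build the desired $bc$-selectable randomized greedy OCRS $\pi$ directly on top of the given $(b,c)$-selectable greedy OCRS $\pi'$, by subsampling each element independently with probability $b$ so that the effective input to $\pi'$ is the scaled point $bx \in bP$. Concretely, given $x \in P$, $\pi$ draws up front two independent sources of randomness: a set $S \subseteq N$ containing each element independently with probability $b$, and the (possibly random) family $\mathcal{F}_{\pi',bx}$ that $\pi'$ would use on $bx$. I then define
\[
\mathcal{F}_{\pi,x} := \{ I \in \mathcal{F}_{\pi',bx} : I \subseteq S\}.
\]
This family is down-closed because $\mathcal{F}_{\pi',bx}$ is, so $\pi$ is a valid randomized greedy OCRS. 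Operationally, when an active element $e \in R(x)$ arrives, it is forwarded to $\pi'$ precisely when $e \in S$, and the set of elements forwarded is distributed as $R(bx)$.

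To verify $bc$-selectability, fix $e \in N$. If $e \notin S$, then $\{e\} \not\subseteq S$, so $\{e\} \notin \mathcal{F}_{\pi,x}$; taking $I = \emptyset$ shows $e$ is not selectable. On the complementary event $\{e \in S\}$, which has probability $b$, one has $I \cup \{e\} \subseteq S$ whenever $I \subseteq S$, so for every $I \subseteq R(x)$ the conditions $I \in \mathcal{F}_{\pi,x}$ and $I \cup \{e\} \in \mathcal{F}_{\pi,x}$ reduce to $I \subseteq R(x) \cap S,\, I \in \mathcal{F}_{\pi',bx}$ and $I \cup \{e\} \in \mathcal{F}_{\pi',bx}$, respectively. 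Thus, conditional on $e \in S$, $e$ is selectable for $\pi$ with active set $R(x)$ if and only if it is selectable for $\pi'$ with active set $R(x) \cap S$.

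Two facts close the argument: (i) $R(x) \cap S$ contains each element independently with probability $b x_e$, so it has the distribution of $R(bx)$; and (ii) the inner selectability event depends only on $R(bx) \setminus \{e\}$, since whenever $e \in I \in \mathcal{F}_{\pi',bx}$ the containment $I \cup \{e\} = I \in \mathcal{F}_{\pi',bx}$ is automatic. Fact (ii), combined with the independence of $S$ from the randomness of $\pi'$, implies that the inner selectability probability is unchanged when we condition on $e \in S$, and the $(b,c)$-selectability of $\pi'$ gives it value at least $c$. Therefore $\Pr[e \text{ selectable under } \pi] \geq b \cdot c$, as required. The only mildly delicate step, and the one I would write out most carefully, is exactly this last independence argument, since one must avoid double-counting the Bernoulli$(b)$ coin for $e$, which enters both $\{e \in S\}$ and the event $\{e \in R(x) \cap S\}$.
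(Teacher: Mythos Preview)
Your proof is correct and follows exactly the approach sketched in the paper: the paper's entire justification is the sentence preceding the observation, namely that one can ``scale down $x$ online by only considering each element $e$ with probability $b$ independent of the other elements.'' Your construction with the random subsampling set $S$ and the family $\mathcal{F}_{\pi,x} = \{I \in \mathcal{F}_{\pi',bx} : I \subseteq S\}$ is precisely a careful formalization of this one-line idea, including the independence argument that the paper leaves implicit.
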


The existence of OCRSs is interesting even regardless of efficiency issues. Still, in many applications it is important to have efficient OCRSs.
\begin{definition}[efficiency]
A greedy OCRS $\pi$ is \emph{efficient} if there exists a polynomial
time algorithm that, for a given input $x$,  samples an efficient independence oracle for the set
$\cF_{\pi,x}$. That is, an oracle that answers in polynomial time queries of the form: is a set $S\subseteq N$  in $\mathcal{F}_{\pi,x}$?
\end{definition}

We next summarize our technical results before highlighting the
implications of our results to various
online settings.

\subsection{Our results\ifbool{SODACamera}{.}{}}\label{subsec:results}

Our first technical result proves the existence of greedy OCRSs with constant
selectability for relaxations of several interesting families of constraints.
All the greedy OCRSs described by Theorem~\ref{thm:direct_OCRS} are either
efficient, or can be made efficient at the cost of an arbitrarily small
constant $\eps > 0$ loss in the selectability guarantee. 

\begin{theorem} \label{thm:direct_OCRS}
There exist:
\vspace{-0.5em}
\begin{itemize}
\setlength\itemsep{0.0em}
	\item For every $b \in [0, 1]$, a $(b, 1 - b)$-selectable deterministic greedy OCRS for matroid polytopes.
	\item For every $b \in [0, 1]$, a $(b, e^{-2b})$-selectable randomized greedy OCRS for matching polytopes.\footnote{Our greedy OCRS works also for a weaker relaxation of matching which only bounds the degree of each node.}
	\item For every $b \in [0, \nicefrac{1}{2}]$, a $(b, \frac{1 - 2b}{2 - 2b})$-selectable randomized greedy OCRS for the natural relaxation of a knapsack constraint.
\end{itemize}
\end{theorem}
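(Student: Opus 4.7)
The plan is to handle each of the three cases via a common template: define a (possibly random) down-closed family $\mathcal{F}_x \subseteq \mathcal{F}$, and show that for an arbitrary element $e$, with probability at least the target $c$, adding $e$ to \emph{any} $I \subseteq R(x)$ with $I \in \mathcal{F}_x$ still lies in $\mathcal{F}_x$. The key input in each case is the polytope-defining inequality, scaled down by the factor $b$.

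For the matroid polytope, I would use the deterministic choice $\mathcal{F}_x = \mathcal{F}$. An element $e$ is then selectable iff $e \notin \spn(R(x) \setminus \{e\})$, since otherwise some independent $I \subseteq R(x) \setminus \{e\}$ already spans $e$ and blocks it. The target thus reduces to showing $\Pr[e \in \spn(R(x) \setminus \{e\})] \leq b$. I would prove this from the rank inequalities $x(S) \leq b \cdot \rank(S)$ via a genuine matroid-theoretic argument, e.g., an induction on $|N|$ that contracts or deletes a single element at a time, with the inductive step using the rank inequality applied to a well-chosen flat containing $e$. The main obstacle is that a direct Markov bound on the flat $\mathrm{cl}(\{e\})$ only captures spans by elements parallel to $e$, so the argument must honestly account for longer circuits through $e$.

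For matching polytopes (and in fact the weaker vertex-degree relaxation), the assumption $x \in b \cdot P$ yields $\sum_{f \ni v} x_f \leq b$ at every vertex $v$. I would use a randomized greedy OCRS in which each edge $f$ is independently declared ``surviving'' with a calibrated probability $q_f$, and take $\mathcal{F}_x$ to be the matchings using only surviving edges. A clean choice is $q_f = (1 - e^{-x_f})/x_f$, so that $1 - x_f q_f = e^{-x_f}$. Then for $e = (u,v)$, selectability requires $e$ itself to survive and no other active edge incident to $u$ or $v$ to also survive, and by independence this probability factors as $q_e \cdot \prod_{f \in \partial u \setminus \{e\}}(1 - x_f q_f) \cdot \prod_{f \in \partial v \setminus \{e\}}(1 - x_f q_f) \geq q_e \cdot e^{-2b + 2x_e}$. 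A short analytic check that $q_e \geq e^{-2x_e}$ (equivalently $e^{2x_e}(1 - e^{-x_e}) \geq x_e$, which is immediate by comparing derivatives at $0$) then delivers the claimed $e^{-2b}$ bound.

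For the knapsack polytope $\{x : \sum a_e x_e \leq 1\}$, I would split elements into ``large'' ($a_e > 1/2$) and ``small'' ($a_e \leq 1/2$) and randomize between two regimes: with probability $\lambda$ take $\mathcal{F}_x = \{S \subseteq \text{large} : |S| \leq 1\}$, and otherwise take $\mathcal{F}_x$ to be a suitable family of small-element packings. For a large $e$ in the first regime, the bound $\sum_{f \text{ large}} x_f \leq 2b$ (which follows from $\sum a_f x_f \leq b$ together with $a_f > 1/2$) gives selectability $\geq 1 - 2b$ via a Markov/union step. For a small $e$ in the second regime, a Markov argument on $\sum a_f x_f \leq b$ yields a bound of similar form. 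Tuning $\lambda$ together with the exact definition (and threshold) of the small-regime family then yields $\frac{1 - 2b}{2 - 2b}$; the main delicacy is picking the small regime to be tight enough to give the denominator $2 - 2b$ rather than the naive $2$ that a crude split would produce.
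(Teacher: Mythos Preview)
Your matching argument is essentially the paper's proof, and your knapsack sketch points in the right direction (though the paper's key refinement is that the mixing probability $\lambda$ must depend on $b_{\text{big}} := \sum_{e:\,a_e>1/2} a_e x_e$, namely $\lambda=(1-2b+2b_{\text{big}})/(2-2b)$; a fixed $\lambda$ independent of $x$ will not hit $\tfrac{1-2b}{2-2b}$).

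The matroid part, however, has a genuine gap: the claim that $\mathcal{F}_x=\mathcal{F}$ works, i.e., that $\Pr[e\in\spn(R(x)\setminus\{e\})]\le b$ for every $e$ whenever $x\in b\cdot P_{\mathcal{F}}$, is \emph{false}. Consider the graphic matroid on the graph with vertices $u,v,w_1,\dots,w_n$ and edges $e=(u,v)$, $f_i=(u,w_i)$, $g_i=(v,w_i)$. Set $x_e=0$ and $x_{f_i}=x_{g_i}=p:=b(n+1)/(2n)$. The tightest rank constraint is on $\{f_1,g_1,\dots,f_n,g_n\}$, which has rank $n+1$, giving $2np\le b(n+1)$; all other constraints are looser, so $x\in bP_{\mathcal{F}}$. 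But $e$ is spanned whenever some pair $\{f_i,g_i\}$ is active, so
\[
\Pr[e\in\spn(R(x)\setminus\{e\})]\;\ge\;1-(1-p^2)^n\;\xrightarrow[n\to\infty]{}\;1,
\]
since $np^2\approx nb^2/4\to\infty$. Thus for any fixed $b<1$ the spanning probability exceeds $b$ once $n$ is large, and no induction on $|N|$ can rescue a claim that is simply not true. The ``longer circuits through $e$'' that you flag as an obstacle are exactly what breaks the approach.

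The paper's proof is structurally different: it does \emph{not} take $\mathcal{F}_x=\mathcal{F}$. Instead it builds a chain $\varnothing=N_\ell\subsetneq\cdots\subsetneq N_0=N$ by repeatedly peeling off the set of elements whose spanning probability (in the current matroid, with previously peeled elements contracted) exceeds $b$, and lets $\mathcal{F}_x$ be the sets that are independent in each layer $(M/N_{i+1})|_{N_i}$. The nontrivial content is that this peeling terminates, which is proved via an averaging inequality $\sum_e x_e\Pr[e\in\spn(R(x)\cup S)]\le b\,(x(N)+(1-b)\rank(S))$. In the counterexample above, $e$ gets peeled into $N_1$ immediately, and in the contracted matroid $M/\{e\}$ each pair $\{f_i,g_i\}$ becomes parallel, restoring the simple per-element bound.
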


Interestingly, it turns out that there is no $(b, c)$-selectable
\emph{deterministic} greedy OCRS for the natural relaxation of a knapsack
constraint for any constants $b$ and $c$. This stands in contrast to the
case of the matching polytope, for which the randomized greedy OCRS given by
Theorem~\ref{thm:direct_OCRS} can be made deterministic at the cost of only
a small loss in the selectability.

Like offline CRSs, greedy OCRSs can be combined to form greedy OCRSs for more involved constraints.

\begin{theorem}\label{thm:combineOCRSs}
If $\pi^1$ is a $(b,c_1)$-selectable greedy OCRS for a polytope $P_1$, and $\pi^2$ is a $(b, c_2)$-selectable greedy OCRS for a polytope $P_2$, then there exists a $(b, c_1 \cdot c_2)$-selectable greedy OCRS for the polytope $P_1 \cap P_2$. Moreover, the last greedy OCRS is efficient if $\pi^1$ and $\pi^2$ are.
\end{theorem}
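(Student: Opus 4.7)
The plan is to construct a combined greedy OCRS $\pi$ for $P_1\cap P_2$ by taking, for $x\in b\cdot(P_1\cap P_2)\subseteq bP_1\cap bP_2$, the down-closed family
\begin{equation*}
\mathcal{F}_{\pi,x} \;:=\; \mathcal{F}_{\pi^1,x}\,\cap\,\mathcal{F}_{\pi^2,x}\,,
\end{equation*}
using independent random bits for $\pi^1$ and $\pi^2$ when either is randomized. The intersection of two down-closed families is down-closed, and membership in $\mathcal{F}_{\pi,x}$ can be tested by querying the two independence oracles provided by $\pi^1$ and $\pi^2$, so efficiency is inherited. Note also that any set $I\in\mathcal{F}_{\pi,x}$ lies in both $\mathcal{F}_{\pi^1,x}$ and $\mathcal{F}_{\pi^2,x}$, which will be essential when invoking the selectability guarantees of the two component schemes.

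Fix $e\in N$, and for $i\in\{1,2\}$ let $A_i$ denote the event that $e$ is selectable with respect to $\pi^i$, i.e.\ $I\cup\{e\}\in\mathcal{F}_{\pi^i,x}$ for every $I\subseteq R(x)$ with $I\in\mathcal{F}_{\pi^i,x}$. The first step is to observe that $A_1\cap A_2$ implies the analogous event for $\pi$: indeed, if $I\subseteq R(x)$ satisfies $I\in\mathcal{F}_{\pi,x}$, then $I\in\mathcal{F}_{\pi^i,x}$ for both $i$, so on $A_1\cap A_2$ we get $I\cup\{e\}\in\mathcal{F}_{\pi^i,x}$ for both $i$, hence $I\cup\{e\}\in\mathcal{F}_{\pi,x}$. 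It therefore suffices to prove $\Pr[A_1\cap A_2]\geq c_1c_2$.

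The main (and only nontrivial) obstacle is that $A_1$ and $A_2$ are not independent: they both depend on the common random set $R(x)$. The key observation to get around this is that each $A_i$ is a \emph{decreasing} event in $R(x)$: if $A_i$ holds for some realization of $R(x)$ and we shrink $R(x)$ to a subset $R'\subseteq R(x)$, the collection of pairs $(I,I\cup\{e\})$ that need to be checked only shrinks, so $A_i$ still holds. Since $R(x)$ is a product distribution (elements are independent), and the randomness of $\pi^1$ and $\pi^2$ is independent of each other and of $R(x)$, we can condition on the realizations of the two OCRSs' internal random bits; under this conditioning, $A_1$ and $A_2$ become deterministic decreasing events on the independent coordinates of $R(x)$, so the FKG (Harris) inequality gives
\begin{equation*}
\Pr[A_1\cap A_2\mid \text{coins}] \;\geq\; \Pr[A_1\mid\text{coins}]\cdot\Pr[A_2\mid\text{coins}]\,.
\end{equation*}
Taking expectation over the independent random bits and using $\E[\Pr[A_i\mid\text{coins}]] = \Pr[A_i] \geq c_i$ (from $(b,c_i)$-selectability applied at $x\in bP_i$), together with the fact that, conditioned on $\pi^1$'s coins, $\Pr[A_2\mid\text{coins}]$ does not depend on those coins, we conclude $\Pr[A_1\cap A_2]\geq c_1c_2$, completing the proof of $(b,c_1c_2)$-selectability.
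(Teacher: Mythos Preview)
Your proof is correct and follows essentially the same approach as the paper: define $\mathcal{F}_{\pi,x}=\mathcal{F}_{\pi^1,x}\cap\mathcal{F}_{\pi^2,x}$ and apply the FKG inequality to two events that are decreasing in $R(x)$. The only cosmetic difference is that the paper decomposes the selectability event for $\pi$ \emph{exactly} as a product of two indicators $\chi_e(R(x),\mathcal{F}_{\pi^i,x},\mathcal{F}_{\pi^1,x}\cap\mathcal{F}_{\pi^2,x})$ (checking $I\cup\{e\}\in\mathcal{F}_{\pi^i,x}$ only for $I$ in the intersection), and then bounds each factor below by $c_i$ via monotonicity in the constraint family; you instead use the stronger events $A_i$ (the original selectability events of $\pi^i$) and show their intersection is contained in the target event. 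Both routes land on the same FKG step, and your explicit conditioning on the internal coins of the two schemes is, if anything, a bit more careful than the paper's presentation.
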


Notice that Theorem~\ref{thm:combineOCRSs} can be applied repeatedly to combine several OCRSs. Thus, Theorems~\ref{thm:direct_OCRS} and~\ref{thm:combineOCRSs} prove together the existence of constant selectability greedy OCRSs for any constant intersection of matroid, matching and knapsack constraints.

It is easy to see that, given a non-negative increasing linear objective function, a $(b, c)$-selectable greedy OCRS for a polytope $P$ can be used to round online a vector $x \in bP$ while losing only a factor of $c$ in the objective. Theorem~\ref{thm:submodular_basic} proves this result formally, and extends it to nonnegative submodular\footnote{A set function $f\colon 2^N \to \bR$ is submodular if $f(A) + f(B) \geq f(A \cup B) + f(A \cap B)$ for every two sets $A, B \subseteq N$.} functions. To state this theorem, we need to define some notation. Given a function $f\colon 2^N \to \bR$, the \emph{multilinear} extension of $f$ is a function $F\colon [0,1]^N \to \bR$ whose value for a vector $x \in [0, 1]^N$ is $F(x) = \bE[f(R(x))]$. Informally, $F(x)$ is the expected value of $f$ over a set obtained by randomly rounding every coordinate of $x$ independently.

\begin{theorem} \label{thm:submodular_basic}
Given a nonnegative monotone\footnote{A set function $f\colon 2^N \to \bR$ is
monotone if $f(A) \leq f(B)$ for every two sets $A \subseteq B \subseteq N$.}
submodular function $f\colon 2^N \to \bR_{\geq 0}$ and a $(b, c)$-selectable
greedy OCRS for a polytope $P$, applying the greedy OCRS to an input $x \in bP$
results in a random set $S$ satisfying $\bE[f(S)]
\geq c \cdot F(x)$, where $F$ is the multilinear extension of $f$. Moreover,
even if $f$ is not monotone, $\bE[f(R(\nicefrac{1}{2} \cdot \characteristic_S))] \geq (c/4) \cdot F(x)$,
where the random decisions used to calculate $R(\nicefrac{1}{2} \cdot
\characteristic_S)$ are considered part of the algorithm, and thus, known to
the almighty adversary.
\end{theorem}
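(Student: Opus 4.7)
The plan is to use $(b,c)$-selectability to identify a random subset of $S$ whose marginals are at least $c\,x_e$, and then lift this element-wise control into an $F(x)$-style bound via properties of the multilinear extension, with an additional Bernoulli$(1/2)$ subsampling handling the non-monotone case.

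For each $e\in N$, let $\mathcal{A}_e=\{I\cup\{e\}\in\mathcal{F}_x\text{ for all }I\subseteq R(x)\text{ with }I\in\mathcal{F}_x\}$ be the selectability event of Definition~\ref{def:cSelectable}. By $(b,c)$-selectability applied to $x\in bP$, $\Pr[\mathcal{A}_e]\geq c$. Crucially, $\mathcal{A}_e$ depends only on $\mathcal{F}_x$ and on $R(x)\setminus\{e\}$, so it is stochastically independent of the Bernoulli$(x_e)$ coin $\eta_e:=\characteristic[e\in R(x)]$. Greediness of the OCRS implies that whenever $\eta_e=1$ and $\mathcal{A}_e$ holds, $e$ is irrevocably accepted; thus $X:=\{e\in R(x):\mathcal{A}_e\}$ is a subset of $S$ with $\Pr[e\in X]=\Pr[\mathcal{A}_e]\cdot x_e\geq c\,x_e$.

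For monotone $f$, monotonicity gives $\E[f(S)]\geq\E[f(X)]$, so it suffices to prove $\E[f(X)]\geq cF(x)$. Marginal control alone does not suffice here: simple two-element examples exist in which a random set with marginals $c\,x_e$ has $\E[f(X)]<cF(x)$ for monotone submodular $f$. The argument must therefore exploit the specific structure $X_e=\eta_e\cdot\characteristic[\mathcal{A}_e]$ together with the coordinate-wise independence of the two factors. My approach would be a hybrid: expose $\mathcal{F}_x$ and $\{\eta_{e'}\}_{e'\neq e}$ first, which fixes $\characteristic[\mathcal{A}_e]$, and then reveal $\eta_e$; conditional on this exposure, $X_e$ is Bernoulli$(\characteristic[\mathcal{A}_e]\cdot x_e)$ and, unfolded element by element, these conditional trials can be compared directly to the trials defining $R(x)$. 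Combining this with concavity of $F$ along non-negative directions (so $F(cx)\geq cF(x)+(1-c)f(\emptyset)\geq cF(x)$) converts the $c$-loss in each coordinate into a global $c$-loss in $F$, yielding $\E[f(X)]\geq cF(x)$.

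For non-monotone $f$, the inequality $f(S)\geq f(X)$ fails, so one applies an extra independent Bernoulli$(1/2)$ subsample, $R(\tfrac12\characteristic_S)$. Conditional on $S$ this is a product sample, and concavity of $F$ along the non-negative direction $\characteristic_S$ starting at $0$ together with $F(0)=f(\emptyset)\geq 0$ gives $\E[f(R(\tfrac12\characteristic_S))\mid S]=F(\tfrac12\characteristic_S)\geq\tfrac12 f(S)$, so the subsampling costs at most a factor $1/2$. The Bernoulli$(1/2)$ noise additionally thins out the ``extra'' elements in $S\setminus X$: each appears only with probability $1/2$ in the final sample, so a Feige-type non-monotone sampling estimate lets one substitute $R(\tfrac12\characteristic_S)$ in place of $R(\tfrac12\characteristic_X)$ at the cost of a further factor $1/2$, which replaces the use of $f(S)\geq f(X)$ from the monotone case. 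Together with the selectability factor $c$ from the previous paragraph applied to the subsampled set, this produces the claimed $(c/4)F(x)$ bound.

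The hard step is the monotone upgrade from the marginal bound $\Pr[e\in X]\geq cx_e$ to the $F$-bound $\E[f(X)]\geq cF(x)$: the elements of $X$ are correlated through the shared dependence of the selectability indicators on $R(x)$, and generic $c$-balanced random sets can violate the $F$-bound. The proof must therefore make essential use of the coordinate-wise independence $\characteristic[\mathcal{A}_e]\perp\eta_e$, a structural feature specific to greedy OCRSs, and marry it with the per-coordinate concavity of the multilinear extension.
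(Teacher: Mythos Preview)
Your set $X=\{e\in R(x):\mathcal{A}_e\}$ is exactly what the paper calls the output of the \emph{characteristic CRS} $\bar\pi$ of the greedy OCRS, and the observations $X\subseteq S$ and $\Pr[e\in X]\geq c\,x_e$ are correct. But the core step---going from these marginals to $\E[f(X)]\geq cF(x)$---is not established by your hybrid argument, and this is precisely the step you yourself flag as hard.

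The hybrid you sketch (``expose $\mathcal{F}_x$ and $\{\eta_{e'}\}_{e'\neq e}$, then reveal $\eta_e$'') cannot be iterated over elements: once you expose $\{\eta_{e'}\}_{e'\neq e}$ you have also exposed $X_{e'}$ for every $e'\neq e$, since $\mathcal{A}_{e'}$ depends on $\eta_e$. There is no consistent ordering in which each $X_e$ becomes a fresh Bernoulli conditionally on the previously revealed ones, so the comparison to the independent trials defining $R(x)$ does not go through. The independence $\mathcal{A}_e\perp\eta_e$ is a single-coordinate fact and does not by itself disentangle the joint correlation structure of $X$.

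What the paper uses instead is a stronger structural property you do not mention: the characteristic CRS is \emph{monotone} in the offline-CRS sense, i.e., $e\in\bar\pi(A_2)$ implies $e\in\bar\pi(A_1)$ whenever $e\in A_1\subseteq A_2$. (This follows directly from the definition of $\mathcal{A}_e$ as a decreasing event in $R(x)\setminus\{e\}$.) Monotonicity is exactly the hypothesis needed to invoke the Chekuri--Vondr\'ak--Zenklusen black box: for any nonnegative submodular $f$ there is a pruning map $\eta_f$ with $\E[f(\eta_f(\bar\pi(R(x))))]\geq cF(x)$. Setting $S'=\eta_f(\bar\pi(R(x)))\subseteq X\subseteq S$, monotonicity of $f$ then gives $\E[f(S)]\geq\E[f(S')]\geq cF(x)$. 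Your argument is missing both the monotonicity observation and the pruning $\eta_f$; without the latter, even a correct coupling would only reach something like $\E[f(X)]$, which need not dominate $cF(x)$ for non-monotone $f$ and is not obviously bounded by your method for monotone $f$ either.

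For the non-monotone part your outline is in the right spirit---the paper also splits the factor $4$ as $2\cdot 2$, one half from subsampling $S'$ (Feige--Mirrokni--Vondr\'ak) and one half from the extra elements $S\setminus S'$ landing with probability at most $1/2$ (Buchbinder--Feldman--Naor--Schwartz)---but it rests on the set $S'$ produced by the CVZ pruning, not on $X$, so it inherits the gap above.
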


In many applications the use of Theorem~\ref{thm:submodular_basic} requires finding \emph{offline}, using the available a priori information, a vector $x$ (approximately) maximizing the multilinear extension $F$. This can often be done using known algorithms. For example, C{\u{a}}linescu et al.~\cite{culinescu_2011_maximizing} proved that given a non-negative monotone submodular function $f\colon 2^N \to \mathbb{R}_{\geq 0}$ and a solvable\footnote{A polytope is \emph{solvable} if one can optimize linear functions over it.} polytope $P \subseteq [0, 1]^N$, one can efficiently find a fractional point $x \in P$ for which $F(x) \geq (1 - e^{-1}) \cdot \max\{f(S) \mid \characteristic_S \in P\}$. Chekuri et al.~\cite{chekuri_2014_submodular} showed that even
when $f$ is not monotone, an analogous result can be obtained with 
a worse constant factor of $0.325$ instead of $1-e^{-1}$ when
$P$ is solvable and down-closed.
A simpler procedure with a stronger constant factor
was later presented by
Feldman et al.~\cite{feldman_2011_unified}, implying that one can efficiently find a fractional point $x \in P$ for which $F(x) \geq (e^{-1} - o(1)) \cdot \max\{f(S) \mid \characteristic_S \in P\}$ as long as the polytope $P$ is solvable and down-closed.

The result of Theorem~\ref{thm:submodular_basic} for a non-monotone submodular objective can sometimes be improved when assuming an online adversary (instead of an almighty one). The class of OCRSs for which this can be done is a bit involved to define,
\ifbool{shortVersion}{and we provide a precise definition in the
long version of the paper.}%
{and we defer its definition to \ifbool{SODACamera}{the full version of this paper~\cite{feldman_2015_online}}{Section~\ref{sec:selectability_to_approximation}}.}
We state here only the following special case of the result we prove.

\begin{theorem} \label{thm:non_negative_special}
  Let $\pi$ be a $(b,c)$-selectable greedy OCRS $\pi$ for a polytope $P$ that was obtained by using Theorem~\ref{thm:combineOCRSs} to combine the OCRS of Theorem~\ref{thm:direct_OCRS}. 
Then, for every given
non-negative submodular function $f\colon 2^N \to \bR_{\geq 0}$ there exists an
OCRS $\pi'$ for $P$ that for every input vector $x \in bP$ and online adversary selects
a random set $S$ such that $\bE[f(S))] \geq
c \cdot F(x)$.
\end{theorem}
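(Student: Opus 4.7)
The plan is to leverage that against an online adversary the algorithm may hide random choices from the adversary, thus avoiding the factor-$4$ loss in Theorem~\ref{thm:submodular_basic}. The starting observation is structural: for the OCRSs produced by Theorem~\ref{thm:direct_OCRS}, and for their combinations via Theorem~\ref{thm:combineOCRSs}, the random choice of $\mathcal{F}_x$ can be drawn \emph{once} and \emph{independently} of $R(x)$, so for each element $e$ the event ``$e$ is selectable'' (i.e.\ $I\cup\{e\}\in\mathcal{F}_x$ for every $I\subseteq R(x)\setminus\{e\}$ with $I\in\mathcal{F}_x$) depends only on $\mathcal{F}_x$ and $R(x)\setminus\{e\}$; in particular it is independent of the indicator $\characteristic[e\in R(x)]$. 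By $c$-selectability this event has probability at least $c$, so unconditionally $\Pr[e\in S]\geq c\cdot x_e$ whenever $\pi$ selects every admissible active element.

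Next I would construct $\pi'$ so that each active element is retained with probability \emph{exactly} $c$, using fresh per-element coins $\{B_e\}$ independent of everything else. Concretely, letting $q_e(\mathcal{F}_x,R(x)\setminus\{e\})\geq c$ denote the conditional probability that $e$ is selectable given the (revealed) randomness, I would declare $B_e=1$ with probability $c/q_e$ and include $e$ in $S$ precisely when $e\in R(x)$, $e$ is selectable, and $B_e=1$. This enforces $\Pr[e\in S\mid e\in R(x)]=c$, and since $B_e$ is drawn independently at arrival time, an online adversary—who must commit to the next element before seeing $B_e$—cannot use $B_e$ to steer the ordering.

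The heart of the argument is then to convert these marginal and conditional-independence guarantees into $\E[f(S)]\geq c\cdot F(x)$ for nonnegative submodular $f$. I would do this by a coupling/averaging argument: condition on $\mathcal{F}_x$ and on the adversary's (random) permutation $\sigma$; in the resulting probability space, every element $e$ appears in $S$ independently with probability $c\cdot x_e$, because (a) its membership in $R(x)$ is independent across elements and (b) the conditional Bernoulli $B_e$ is fresh. Thus $S$ is distributed as a product distribution with marginals $c x_e$, and the standard multilinearity/convexity inequality $\E[f(S)]\geq F(c\cdot x)\geq c\cdot F(x)$ (the last step using $F(0)\geq 0$ and concavity of $F$ along nonnegative directions from the origin, a property of the multilinear extension of any nonnegative submodular function) finishes the proof. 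Averaging over $\mathcal{F}_x$ and $\sigma$ preserves the bound.

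The step I expect to be the main obstacle is the third one: justifying that, conditional on $\mathcal{F}_x$ and on an online-adversary strategy, the indicators $\{\characteristic[e\in S]\}_{e\in N}$ really do form an independent family with marginals $c x_e$. The potential danger is that the adversary, while blind to $\{B_e\}$ in advance, may adaptively reveal elements in a way that correlates the inclusion events through $\mathcal{F}_x$; ruling this out requires exactly the decoupling property of $\mathcal{F}_x$ noted in the first paragraph, together with the fact that once $B_e=1$ and $e$ is selectable, $e$ is actually added (i.e.\ no later arrival can veto it). A careful per-element martingale-style argument, scanning arrivals in the adversary's order, should make this rigorous; the restriction in the theorem statement to OCRSs built from Theorem~\ref{thm:direct_OCRS} (via Theorem~\ref{thm:combineOCRSs}) is precisely what ensures the required decoupling structure.
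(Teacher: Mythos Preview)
Your plan has a genuine gap at the step you yourself flagged as the obstacle: the claim that, after conditioning on $\mathcal{F}_x$ (and on the adversary's strategy), the indicators $\{\characteristic[e\in S]\}_{e\in N}$ are mutually independent with marginals $cx_e$ is false. Consider the matroid OCRS on a rank-$1$ uniform matroid with two elements $e_1,e_2$ and a deterministic $\mathcal{F}_x=\{\varnothing,\{e_1\},\{e_2\}\}$. Then ``$e_1$ is selectable'' holds iff $e_2\notin R(x)$, and symmetrically for $e_2$. Hence the events $\{e_1\in S\}$ and $\{e_2\in S\}$ (each requiring the other element to be inactive) are mutually exclusive, so $\Pr[e_1\in S,\,e_2\in S]=0$, whereas independence would give $c^2x_1x_2>0$. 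No amount of per-element coin-flipping $B_e$ can repair this, because the dependence enters through selectability itself, not through $\characteristic[e\in R(x)]$. There is also a secondary issue: as written, your $\pi'$ accepts $e$ only if $e$ is selectable with respect to \emph{all} of $R(x)\setminus\{e\}$, which is unknown at $e$'s arrival time, so the scheme is not online; and if $q_e$ is meant to be computed from $A^{<e}$ only, then the normalisation $c/q_e$ no longer forces $\Pr[e\in S\mid e\in R(x)]=c$.

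The paper takes a rather different route. It first recasts the OCRSs of Theorem~\ref{thm:direct_OCRS} as \emph{element-monotone} OCRSs, where acceptance of $e$ is decided by whether $A^{<e}\in\mathcal{F}_e$ for a down-closed family $\mathcal{F}_e\subseteq 2^{N\setminus\{e\}}$. The new scheme $\pi'$ then, upon $e$'s arrival, \emph{completes} $A^{<e}$ with fresh independent samples for the not-yet-revealed elements and accepts $e$ iff the completed set lies in $\mathcal{F}_e$; this makes $\Pr[e\in S\mid e\in A]=\Pr[R(x)\setminus\{e\}\in\mathcal{F}_e]\geq c$ regardless of the adversary's order. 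The bound $\E[f(S)]\geq c\,F(x)$ is then obtained not via independence but by (i) first reducing to a \emph{non-reducible} $x$ (zeroing coordinates with negative marginal along a fixed reference order $\sigma_a$), and (ii) proving, by an induction over the adversary's decision tree, that $\E[f(e\mid A\cap\sigma_a^{<e})\mid e\in S]\geq F(e\mid x\wedge\characteristic_{\sigma_a^{<e}})$ for each $e$. Summing these marginal contributions along $\sigma_a$ yields the result. The completion trick and the tree-induction lemma are the missing ideas in your proposal.
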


The OCRS $\pi'$ guaranteed by Theorem~\ref{thm:non_negative_special} is not efficient. However, if $\pi$ is efficient then $\pi'$ can be made efficient at the cost of an additive loss of $|N|^{-d} \cdot \max\{f(\{e\}) \mid x_e > 0\}$ in the guarantee (where $d$ is any positive constant). 


\subsection{Applications\ifbool{SODACamera}{.}{}}\label{subsec:app}

In this section we present a few applications for our technical results. All these applications were previously studied in the literature, and connections have been found between them. In this work we show that all three applications can be reduced to finding appropriate OCRSs. In addition to proving new results, we believe that these reductions into one common setting clarify the connections between the three applications.

\subsubsection*{Prophet inequalities for
Bayesian online selection problems\ifbool{SODACamera}{.}{}}

Consider again the Bayesian online selection 
problem we sketched earlier in the introduction.
We recall that the setting in this problem consists of a matroid $M=(N,\mathcal{F})$
and independent non-negative random variables $Z_e$
for every $e\in N$ with known distributions.
Moreover, the random variables $Z_e$ satisfy
$\max_{e\in N} \E[Z_e]< \infty$.
An offline adversary chooses upfront the order
in which the elements $e\in N$ reveal a realization
$z_e$ of $Z_e$.
The task is to select online an independent
set of elements $I\in\mathcal{F}$ with 
total weight $z(I)=\sum_{e\in I} z_e$ as high
as possible.

A fundamental result about the relative power
of offline and online algorithms in a Bayesian setting
was obtained by Krengel, Sucheston and Garling
(see~\cite{krengel_1978_semiamarts}) for the
special case when $M$ is the uniform matroid
of rank one, \ie, precisely one element can be selected.
They showed that there exists a selection algorithm
returning a single element of expected weight
as least
$\frac{1}{2}\E[\max_{e\in N} Z_e]$, \ie, half of
the weight of the best offline solution, which is
the best solution obtainable
by an algorithm that knows all realizations upfront.
Recently, Kleinberg and
Weinberg~\cite{kleinberg_2012_matroid} extended this
result considerably by showing that the
same guarantee can be obtained when selecting multiple
elements that have to be independent in the matroid $M$,
\ie, there exists an online algorithm returning a
set $I\in \mathcal{F}$ satisfying
\begin{equation}\label{eq:prophetMatroid}
\E\left[\sum_{e\in I} Z_e\right] \geq
\frac{1}{2} \E\left[ \max\left\{\sum_{e\in S} Z_e
\;\middle\vert\; S\in \mathcal{F}\right\} \right]
\enspace.
\end{equation}
Inequalities of type~\eqref{eq:prophetMatroid} are
often called \emph{prophet inequalities} due to the
interpretation of the offline adversary as a prophet.
Moreover, Kleinberg and Weinberg generalized their result
to the setting where $\mathcal{F}$ are the common
independent sets in the intersection of $p$ matroids.
For this setting, they present an online algorithm
whose expected profit is at least $\frac{1}{4p-2}$
times the expected maximum weight of a feasible set.
Kleinberg and Weinberg's algorithms work not just
against an offline adversary, which is the adversary
type typically assumed in Bayesian online selection,
but also against an online adversary.

Using a simple, yet very general link between
greedy OCRSs and prophet inequalities we can generate
prophet inequalities from greedy OCRSs.

\begin{theorem}\label{thm:prophet}
Let $\mathcal{F}\subseteq 2^N$ be a down-closed
set family and $P$ be a relaxation of $\mathcal{F}$.
If there exists a $c$-selectable greedy OCRS for $P$
then there is an online algorithm for the Bayesian
online selection problem with almighty adversary
that returns a set $I\in \mathcal{F}$ satisfying
\begin{equation*}
\E\left[\sum_{e\in I} Z_e\right] \geq
c \cdot \E\left[ \max\left\{\sum_{e\in S} Z_e
\;\middle\vert\; S\in \mathcal{F}\right\} \right]
\enspace.
\end{equation*}
\end{theorem}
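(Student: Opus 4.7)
The plan is to reduce the prophet inequality to the online contention resolution scheme via a standard threshold trick. Let $\OPT^*$ be an optimal offline set for the realization of $(Z_e)_{e \in N}$, and set $y_e := \Pr[e \in \OPT^*]$. Since $\OPT^* \in \mathcal{F}$ always, $y$ is a convex combination of characteristic vectors of feasible sets and so $y \in P_{\mathcal{F}}$; and because $P$ and $P_{\mathcal{F}}$ share the same $\{0,1\}$-points, convexity yields $P_{\mathcal{F}} = \conv(P_{\mathcal{F}} \cap \{0,1\}^N) = \conv(P \cap \{0,1\}^N) \subseteq P$, so $y \in P$. For every $e$, pick a threshold $t_e$ with $\Pr[Z_e \geq t_e] = y_e$ (using continuous distributions for simplicity; atoms can be resolved with independent coin flips). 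With these choices, the set of ``active'' elements $\{e : Z_e \geq t_e\}$ is distributed exactly as $R(y)$.

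The algorithm samples $\mathcal{F}_y$ from the greedy OCRS $\pi$ offline, and when element $e$ is revealed with realization $z_e$, declares $e$ active iff $z_e \geq t_e$ and, if active, adds $e$ to the running set $I$ whenever $I \cup \{e\} \in \mathcal{F}_y$. Let $S_e$ be the event that $I' \cup \{e\} \in \mathcal{F}_y$ holds for every $I' \subseteq R(y)$ with $I' \in \mathcal{F}_y$; by $c$-selectability, $\Pr[S_e] \geq c$. The event $S_e$ depends only on $R(y) \setminus \{e\}$ and on the internal randomness used to sample $\mathcal{F}_y$, hence is independent of $Z_e$. Moreover, whenever $e$ is active and $S_e$ occurs, the greedy rule must accept $e$ no matter how the almighty adversary orders the remaining elements---this is exactly where the full strength of $c$-selectability over the weaker $c$-balancedness is used---so $\mathbf{1}[e \in I] \geq \mathbf{1}[Z_e \geq t_e] \cdot \mathbf{1}[S_e]$.

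Since $Z_e \geq 0$, the previous inequality together with the independence of $Z_e$ and $S_e$ gives
\[
\E\bigl[Z_e \, \mathbf{1}[e \in I]\bigr] \;\geq\; \E\bigl[Z_e \, \mathbf{1}[Z_e \geq t_e] \, \mathbf{1}[S_e]\bigr] \;=\; \Pr[S_e] \cdot \E\bigl[Z_e \, \mathbf{1}[Z_e \geq t_e]\bigr] \;\geq\; c \cdot \E\bigl[Z_e \, \mathbf{1}[Z_e \geq t_e]\bigr].
\]
A standard rearrangement argument---among all $\{0,1\}$-valued random variables $B$ with $\E[B] = y_e$, the quantity $\E[Z_e B]$ is maximized by $B = \mathbf{1}[Z_e \geq t_e]$---applied to $B = \mathbf{1}[e \in \OPT^*]$ gives $\E[Z_e \, \mathbf{1}[Z_e \geq t_e]] \geq \E[Z_e \, \mathbf{1}[e \in \OPT^*]]$. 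Summing over $e \in N$ yields $\E[\sum_{e \in I} Z_e] \geq c \cdot \E[\sum_{e \in \OPT^*} Z_e]$, which is the desired prophet inequality. The main delicate point is isolating the independence between $Z_e$ and $S_e$ and verifying that selectability indeed survives any adversarial ordering; both fall out cleanly from the way greedy OCRSs and Definition~\ref{def:cSelectable} are set up.
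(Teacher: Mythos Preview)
Your proof is correct and follows essentially the same approach as the paper: pick marginals $y_e=\Pr[e\in\OPT^*]\in P$, choose thresholds so that activity has law $R(y)$, run the greedy OCRS on active elements, and combine $c$-selectability with the rearrangement inequality $\E[Z_e\,\mathbf{1}[Z_e\ge t_e]]\ge \E[Z_e\,\mathbf{1}[e\in\OPT^*]]$. The paper packages the same idea through the concave relaxation $\max_{p\in P}\sum_e g_e(p_e)$ (with $g_e(p_e)$ being exactly your $\E[Z_e\,\mathbf{1}[Z_e\ge t_e]]$), which it introduces mainly to discuss efficient implementation; your direct use of $y=p^*$ is the existential version of the same argument. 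One point where your write-up is actually a bit cleaner: you lower bound $\mathbf{1}[e\in I]$ by $\mathbf{1}[Z_e\ge t_e]\cdot\mathbf{1}[S_e]$ and use that the selectability event $S_e$ depends only on $R(y)\setminus\{e\}$ and $\mathcal{F}_y$, hence is independent of $Z_e$---this is exactly the robustness to the almighty adversary, whereas the paper's phrasing ``the distribution of $Z_e$ conditioned on $e\in I$ equals that conditioned on $e\in A$'' is a touch informal in that setting.
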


\ifbool{shortVersion}{As we discuss in the long
version of the paper,}%
{As we discuss in Section~\ref{sec:app},}
the above theorem
can be made constructive in many cases, assuming that
the OCRS is efficient and some natural optimization problems
involving the distributions of the random weights $Z_e$ can
be solved efficiently.

Our results show that constant-factor prophet inequalities
are often possible even against an almighty adversary.
Moreover, we get $\Theta(1)$-factor prophet inequalities for a wide
set of new constraint families.

\begin{corollary}
There are $\Theta(1)$-factor prophet inequalities 
for the Bayesian online selection problem against
the almighty adversary for any constraint
family that is an intersection
of a constant number of matroid, knapsack, and
matching constraints.
\end{corollary}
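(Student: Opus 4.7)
The plan is to chain together the three main technical results (Theorems~\ref{thm:direct_OCRS}, \ref{thm:combineOCRSs}, and~\ref{thm:prophet}), together with Observation~\ref{obs:pair_to_single_selectability}, to manufacture a single constant-selectable greedy OCRS for the intersection polytope, and then to translate it into a prophet inequality.

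First, I would fix a convenient scaling parameter, say $b = 1/2$, and invoke Theorem~\ref{thm:direct_OCRS} for each constituent constraint. This gives a $(1/2, 1/2)$-selectable deterministic greedy OCRS for every matroid polytope and a $(1/2, e^{-1})$-selectable randomized greedy OCRS for every matching polytope. For knapsack, $b = 1/2$ yields a vacuous guarantee, so I would instead take $b = 1/4$ uniformly across all constraint families; Theorem~\ref{thm:direct_OCRS} then delivers selectability $3/4$, $e^{-1/2}$, and $1/3$ respectively, each of which is a positive constant. The requirement of a common $b$ is exactly what Theorem~\ref{thm:combineOCRSs} needs.

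Second, I would apply Theorem~\ref{thm:combineOCRSs} repeatedly $p-1$ times to an intersection of $p$ such constraints. Since each individual OCRS is $(1/4, c_i)$-selectable for some constant $c_i > 0$, the combined greedy OCRS for the intersection polytope $P = P_1 \cap \dots \cap P_p$ is $(1/4, \prod_{i=1}^p c_i)$-selectable. Because $p$ is a constant, the product $c := \prod_{i=1}^p c_i$ is also a positive constant. Then Observation~\ref{obs:pair_to_single_selectability} turns this into a $(c/4)$-selectable randomized greedy OCRS for the full polytope $P$, no longer requiring a down-scaled input.

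Finally, I would feed this $(c/4)$-selectable greedy OCRS into Theorem~\ref{thm:prophet}, which directly yields an online algorithm for the Bayesian online selection problem against the almighty adversary whose expected reward is at least $c/4$ times the expected optimum value of any feasible set, giving the claimed $\Theta(1)$-factor prophet inequality. The step I would expect to require the most care is the bookkeeping in the second step: ensuring that the \emph{same} parameter $b$ is usable across matroid, matching, and knapsack OCRSs simultaneously so that Theorem~\ref{thm:combineOCRSs} can be applied in a single pass, and verifying that the relaxation whose intersection is taken is indeed a relaxation of the intersection of the feasibility families (which follows from each $P_i$ containing the same $\{0,1\}$-points as its $\mathcal{F}_i$).
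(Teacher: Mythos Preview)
Your proposal is correct and follows exactly the route the paper intends: the corollary is not given a separate proof in the paper but is stated as an immediate consequence of Theorems~\ref{thm:direct_OCRS}, \ref{thm:combineOCRSs}, and~\ref{thm:prophet} (together with Observation~\ref{obs:pair_to_single_selectability}), precisely as you chain them. Your care in choosing a common $b$ (noting that $b=1/2$ kills the knapsack guarantee and switching to $b=1/4$) and in checking that $\bigcap_i P_i$ is a relaxation of $\bigcap_i \mathcal{F}_i$ fills in exactly the details the paper leaves implicit.
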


In contrast, so far, the most general prophet
inequality was the $\frac{1}{4p-2}$-factor
prophet inequality of Kleinberg and Weinberg for the
intersection of $p$ matroids.
Interestingly, even for this specific setting of
the intersection of $p$ matroids, considered
by Kleinberg and Weinberg, our general approach allows
for obtaining a better constant for $p\geq 4$ (against
a stronger adversary).

\begin{corollary}
There is an $\frac{1}{e(p+1)}$-factor prophet inequality
for the Bayesian online selection problem against
the almighty adversary when the feasible sets are
described by the intersection of $p$ matroids.
\end{corollary}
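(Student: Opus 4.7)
The plan is to chain together the four results already at our disposal: the matroid OCRS from Theorem~\ref{thm:direct_OCRS}, the composition theorem (Theorem~\ref{thm:combineOCRSs}), the scaling observation (Observation~\ref{obs:pair_to_single_selectability}), and the OCRS-to-prophet reduction (Theorem~\ref{thm:prophet}). The intersection of $p$ matroids $M_1 = (N, \mathcal{F}_1),\dots,M_p=(N,\mathcal{F}_p)$ has as natural relaxation the intersection $P_1 \cap \cdots \cap P_p$ of the corresponding matroid polytopes, which is indeed a relaxation of $\mathcal{F} = \mathcal{F}_1 \cap \cdots \cap \mathcal{F}_p$. So it suffices to exhibit an $\frac{1}{e(p+1)}$-selectable greedy OCRS for $P_1\cap\cdots\cap P_p$ and invoke Theorem~\ref{thm:prophet}.

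For every $b \in [0,1]$, Theorem~\ref{thm:direct_OCRS} supplies a $(b, 1-b)$-selectable deterministic greedy OCRS for each $P_i$. Iterating Theorem~\ref{thm:combineOCRSs} $p-1$ times (which preserves the scaling parameter $b$ and multiplies the selectability constants) produces a $(b,(1-b)^p)$-selectable greedy OCRS for $P_1\cap\cdots\cap P_p$. Then Observation~\ref{obs:pair_to_single_selectability} converts this into a (randomized) $b(1-b)^p$-selectable greedy OCRS for the same polytope.

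It only remains to choose $b$ optimally. Differentiating $b(1-b)^p$ gives
\begin{equation*}
\tfrac{d}{db}\bigl[b(1-b)^p\bigr] = (1-b)^{p-1}\bigl(1-(p+1)b\bigr),
\end{equation*}
so the maximum is attained at $b^\star = \frac{1}{p+1}$, where
\begin{equation*}
b^\star(1-b^\star)^p = \frac{1}{p+1}\left(\frac{p}{p+1}\right)^p.
\end{equation*}
The standard inequality $\bigl(1+\tfrac{1}{p}\bigr)^p \leq e$ gives $\bigl(\tfrac{p}{p+1}\bigr)^p \geq \tfrac{1}{e}$, hence $b^\star(1-b^\star)^p \geq \frac{1}{e(p+1)}$. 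Plugging the resulting $\frac{1}{e(p+1)}$-selectable greedy OCRS into Theorem~\ref{thm:prophet} delivers the claimed prophet inequality against the almighty adversary.

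There is no real obstacle here: once the OCRS machinery is in place, the corollary is an exercise in composing the right black boxes and optimizing a one-parameter expression. The only care needed is to confirm that the composition theorem preserves the same $b$ across all $p$ matroids (so that selectabilities genuinely multiply) and to apply the $(1+1/p)^p \leq e$ bound in the right direction when converting $(p/(p+1))^p$ into $1/e$.
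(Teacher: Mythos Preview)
Your proof is correct and follows essentially the same route as the paper: combine the $(b,1-b)$-selectable matroid OCRSs via Theorem~\ref{thm:combineOCRSs} to obtain a $(b,(1-b)^p)$-selectable OCRS for the intersection (this is the paper's Corollary~\ref{cor:ocrsSeveralMatroids}), convert to a $b(1-b)^p$-selectable OCRS, optimize at $b=\tfrac{1}{p+1}$, and feed the result into Theorem~\ref{thm:prophet}. The paper records the same choice of $b$ and the same bound $\tfrac{1}{p+1}\bigl(1-\tfrac{1}{p+1}\bigr)^p \geq \tfrac{1}{e(p+1)}$ in the corollary immediately following Corollary~\ref{cor:ocrsSeveralMatroids}.
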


That our results hold against an almighty adversary
is in particular of importance for applications of prophet
inequalities to posted pricing mechanisms. Indeed, one of
the main technical difficulties that Kleinberg and
Weinberg~\cite{kleinberg_2012_matroid} had to overcome
to apply their results to posted pricing mechanism,
was the fact that their results were 
only with respect to an online adversary.

%
%
%
%
%

\subsubsection*{Oblivious posted pricing mechanisms\ifbool{SODACamera}{.}{}}

We start by introducing the Bayesian single-parameter
mechanism design setting (short BSMD), largely
following~\cite{chawla_2010_multi-parameter}.
There is a single seller providing a set
$N$ of services, and for each service
$e\in N$ there is one agent interested in $e$,
whose valuation is drawn from a nonnegative random
variable $Z_e$. The $Z_e$ are independent and
have known distributions. Furthermore, there is
a down-closed family $\mathcal{F}\subseteq 2^N$
representing feasibility constraints faced by
the seller, \ie, the seller can provide
any set of services $S\in \mathcal{F}$.
The setting is called \emph{single-parameter} because
every agent is interested in precisely one service.
The goal in this setting is to find truthful mechanisms maximizing
the expected revenue.

From a theoretical point of view, this setting
is well understood and optimally solved by Myersons's
mechanism~\cite{myerson_1981_optimal}.
Unfortunately, the resulting mechanism is impractical,
and thus, rarely employed. Furthermore, it does not
extend to multi-parameter settings where an agent
may, for example, be interested in buying one out of
several items, a setting known as \emph{Bayesian
multi-parameter unit-demand mechanism design}
(BMUMD).
Therefore, Chawla et
al.~\cite{chawla_2010_multi-parameter} suggested 
considerably simpler and more robust alternatives
having many advantages
while maintaining an almost optimal
performance~\cite{chawla_2010_multi-parameter,%
yan_2010_mechanism,%
kleinberg_2012_matroid}.
The idea is to offer to the agents sequentially
take-it-or-leave-it prices as follows.
Agents are considered one by one, in an
order chosen by the algorithm. Whenever an agent $e\in N$
is considered, the algorithm either makes no offer to
$e$---and thus $e$ does not get served---or,
if $e$ can be added to the elements selected so
far without violating feasibility, an offer
$p_e\in \mathbb{R}_{\geq 0}$ is made to $e$. Agent
$e$ will then accept the offer if $Z_e\geq p_e$ and
decline if $Z_e < p_e$.%

This type of mechanism, with the additional freedom
that the algorithm can choose the order in which
to consider the agents, is called a
\emph{sequential posted price mechanism}.
A natural stronger version of sequential
posted price mechanisms, called
\emph{constrained oblivious posted price mechanisms}
(COPM), suggested by Chawla et
al.~\cite{chawla_2010_multi-parameter}, allows for
dealing with the multi-parameter setting BMUMD, and
has many further interesting properties.
Formally, a COPM is defined by a tuple
$(p\in \mathbb{R}_{\geq 0}^E, \mathcal{F}')$,
where $p$ are the take-it-or-leave-it prices,
and $\mathcal{F}'\subseteq \mathcal{F}$.
A COPM defined by $(p,\mathcal{F}')$ works as follows.
Consider the moment when a new agent $e$ arrives and
let $S$ be the set of agents served so far. If
$S \cup \{e\} \not\in \mathcal{F}'$, then $e$ is skipped;
otherwise, $e$ is offered the price $p_e$.
In short the COPM maintains a feasible set in the
more restricted family $\mathcal{F}'$, and greedily selects
any agent $e$ that does not destroy feasibility
in $\mathcal{F}'$ and has a valuation of at least $p_e$.
Furthermore, the order of the agents is chosen by
an adversary at the beginning of the
procedure, knowing all valuations $z_e$, the prices
$p_e$ and the family $\mathcal{F}'$.
A COPM can also be randomized, in which case the
tuple $(p, \mathcal{F}')$ is chosen at random
at the beginning of the algorithm.

So far, COPMs with an $O(1)$ gap with respect to the
optimal mechanism were only known for very restricted
types of matroids, and the intersection of two
partition matroids~\cite{chawla_2010_multi-parameter}.
For general matroids, the best previously
known COPM was a non-efficient
procedure with an optimality gap of $O(\log(\rank))$,
where $\rank$ is the size of a largest feasible
set of the matroid~\cite{chawla_2010_multi-parameter}.
In particular, the existence of an COPM for general
matroids with constant optimality gap remained
an open problem.

Exploiting a link between greedy OCRSs and COPMs,
we resolve the open question
about $O(1)$-optimal COPMs for matroids raised
in~\cite{chawla_2010_multi-parameter}, and show that
even much more general constraint families admit
$O(1)$-optimal COPMs.

\begin{theorem}\label{thm:copm}
Let $\mathcal{F}\subseteq 2^N$ be a down-closed family
and $P$ be a relaxation of $\mathcal{F}$.
If there is a $c$-selectable greedy OCRS for $P$,
then there is a COPM for $\mathcal{F}$ that,
even against an almighty adversary,
is at most a factor of $c$ worse than the
optimal truthful mechanism.
\end{theorem}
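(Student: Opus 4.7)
The plan is to instantiate the COPM by combining Myerson-style pricing with the greedy OCRS $\pi$. Let $M^*$ be the optimal truthful mechanism for the given Bayesian single-parameter setting, and for each $e \in N$ let $q_e := \Pr[M^*\text{ serves agent } e]$. Since $M^*$ always outputs a feasible set, the vector $q = (q_e)_{e \in N}$ lies in $P_{\mathcal{F}}$, and in particular in the relaxation $P$ (as $P_{\mathcal{F}} \subseteq P$ whenever $P$ contains every characteristic vector of a feasible set). Thus $\pi$ can legitimately be run on input $q$.

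For each $e \in N$ I would then pick a (possibly randomized) price $p_e \geq 0$ such that: (i) $\Pr[Z_e \geq p_e] = q_e$, and (ii) $p_e \cdot q_e$ is at least the expected revenue that $M^*$ extracts from $e$. For regular distributions the first condition already implies the second by Myerson's characterization; for arbitrary distributions this is obtained via Myerson's ironing together with randomization over prices, a well-established tool in posted-pricing reductions. The COPM is then the pair $(p, \mathcal{F}_{\pi,q})$, with $\mathcal{F}_{\pi,q}$ being the (possibly random) down-closed family produced by $\pi$ on input $q$.

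To analyze this mechanism, note that agent $e$ accepts the offered price iff $e$ is \emph{active}, i.e., $Z_e \geq p_e$, which happens with probability $q_e$ independently across elements; hence the set of accepting agents is distributed exactly as $R(q)$. The key observation is that the event
\begin{equation*}
B_e := \bigl\{\,I \cup \{e\} \in \mathcal{F}_{\pi,q} \text{ for every } I \subseteq R(q) \text{ with } I \in \mathcal{F}_{\pi,q}\,\bigr\}
\end{equation*}
depends only on $\mathcal{F}_{\pi,q}$ and on the coordinates of $R(q)$ other than $e$ (the condition becomes trivial whenever $e \in I$), so by $c$-selectability $\Pr[B_e] \geq c$. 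Conditional on $B_e$ and on $e$ being active, the greedy feasibility test of the COPM accepts $e$ no matter which order the almighty adversary picks and no matter which other elements were accepted before. Using the independence of $B_e$ from $\{e \text{ active}\}$ one obtains $\Pr[e\text{ served}] \geq c \cdot q_e$, and summing over $e$ gives
\begin{equation*}
\E[\text{revenue of COPM}] \;=\; \sum_{e \in N} p_e \cdot \Pr[e\text{ served}] \;\geq\; c \sum_{e \in N} p_e \cdot q_e \;\geq\; c \cdot \mathrm{Rev}(M^*).
\end{equation*}

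The main obstacle I anticipate is the price-selection step. The OCRS side is essentially automatic: $c$-selectability is engineered to give a per-element guarantee against any (including the almighty) adversary, so $\Pr[e\text{ served}] \geq c \cdot q_e$ falls out once independence of $B_e$ from $\{e \in R(q)\}$ is checked. What requires genuine care is ensuring that $p_e \cdot q_e$ dominates the per-agent revenue extracted by $M^*$; this is the only place where properties of the mechanism-design setting (as opposed to the online rounding) enter, and it is also what forces us to allow randomized prices in the non-regular case.
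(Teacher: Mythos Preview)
Your proposal is correct and matches the paper's proof: both take the Myerson serving probabilities $q \in P$, use ironed (two-price) distributions $\mathcal{D}_e(q_e)$ that realize those probabilities while maximizing per-agent revenue $R_e(q_e)$, set $\mathcal{F}' = \mathcal{F}_{\pi,q}$, and conclude via $c$-selectability that each agent contributes at least $c \cdot R_e(q_e)$ in expectation. One minor point: your final displayed equation is written as if $p_e$ is deterministic; in the randomized-price case the correct per-agent quantity is $\E[p_e \cdot \mathbb{1}[e \text{ served}]]$, which factors as $\E[p_e \cdot \mathbb{1}[Z_e \ge p_e]] \cdot \Pr[B_e] \ge c \cdot R_e(q_e)$ by exactly the independence you identified.
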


Using the reduction from the multi-parameter setting
to the single-parameter setting presented
in~\cite{chawla_2010_multi-parameter} we obtain results
for BMUMD under very general feasibility constraints.
\begin{corollary}
Let $\mathcal{F}$ be the intersection of a constant
number of matroid, knapsack, and matching constraints.
Then there is a posted price mechanism for BMUMD
on $\mathcal{F}$ 
whose optimality gap with respect to
the optimal truthful mechanism is at most a constant.
\end{corollary}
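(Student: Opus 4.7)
The plan is to assemble the corollary by stacking four previously established pieces: the constant-selectability greedy OCRSs for each individual constraint family, the OCRS composition theorem, the OCRS-to-COPM reduction for the single-parameter setting, and finally the known multi-parameter to single-parameter reduction of Chawla et al.~\cite{chawla_2010_multi-parameter}.

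First I would fix an arbitrary constant $b\in(0,1)$ (say $b=\nicefrac{1}{4}$) and invoke Theorem~\ref{thm:direct_OCRS} to obtain, for each of the constantly many matroid, matching, and knapsack constraints defining $\mathcal{F}$, a $(b,c_i)$-selectable greedy OCRS for the corresponding polytope, with $c_i$ a positive constant depending only on $b$ and the type of constraint. Applying Theorem~\ref{thm:combineOCRSs} iteratively then yields a $(b,\prod_i c_i)$-selectable greedy OCRS for the polytope $P$ obtained as the intersection of the individual relaxations; since the number of constraints is constant, $c:=b\cdot\prod_i c_i$ is a positive constant. By Observation~\ref{obs:pair_to_single_selectability}, this combined scheme gives a (randomized) $c$-selectable greedy OCRS for $P$, and $P$ is clearly a relaxation of $\mathcal{F}$.

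Next I would apply Theorem~\ref{thm:copm} to this OCRS, obtaining a COPM for the single-parameter setting BSMD on $\mathcal{F}$ whose expected revenue is within a factor $c$ of the optimal truthful mechanism, even against an almighty adversary. To move from BSMD to BMUMD, I would invoke the black-box reduction of Chawla et al.~\cite{chawla_2010_multi-parameter}, which takes any COPM for BSMD with a given constant approximation guarantee on a down-closed family and produces a posted price mechanism for BMUMD on the same family with only a constant-factor further loss with respect to the optimal truthful BMUMD mechanism. Chaining the two constants gives the claimed $O(1)$ optimality gap.

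The main thing to verify, rather than a hard step, is that the feasibility family $\mathcal{F}$ arising in BMUMD after the Chawla et al.\ reduction is still of the ``constant intersection of matroid, knapsack, matching'' type required by Theorem~\ref{thm:direct_OCRS}; this is exactly the structural property preserved by their reduction, so it is essentially immediate. The only other point worth a sentence of care is the adversary model: Theorem~\ref{thm:copm} handles the almighty adversary precisely because $c$-selectability is a per-element guarantee independent of the order, so no additional work is needed to match the adversary power assumed in the Chawla et al.\ framework. Thus no new technical obstacle arises; the corollary follows from a clean composition of the preceding results.
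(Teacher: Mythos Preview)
Your proposal is correct and matches the paper's approach: the corollary is stated immediately after the sentence ``Using the reduction from the multi-parameter setting to the single-parameter setting presented in~\cite{chawla_2010_multi-parameter} we obtain results for BMUMD under very general feasibility constraints,'' so the intended proof is exactly the composition you describe---build a constant-selectability greedy OCRS via Theorems~\ref{thm:direct_OCRS} and~\ref{thm:combineOCRSs}, feed it into Theorem~\ref{thm:copm}, and then invoke the Chawla et al.\ reduction. Your remark that the reduction may add a partition-matroid constraint on the copies (which is absorbed into the ``constant number of matroid constraints'') is the only nontrivial check, and you handle it correctly.
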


\ifbool{shortVersion}{
Moreover, as we discuss in the long version of the paper,
}
{Moreover, as we highlight in Section~\ref{sec:app},}
the mechanisms obtained through our greedy OCRSs can
be implemented efficiently under mild assumptions.

\subsubsection*{Stochastic probing\ifbool{SODACamera}{.}{}}

Recently, Gupta and Nagarajan~\cite{gupta_2013_stochastic}
introduced the following stochastic probing model.
Given is a finite ground set $N$ and each element
$e\in N$ is \emph{active} with a given
probability $p_e\in [0,1]$, independently of the
other elements.
Furthermore, there is a weight function
$w:N\rightarrow \mathbb{Z}$, and two down-closed constraint
families $\mathcal{F}_{in}, \mathcal{F}_{out}\subseteq 2^N$
on $N$, which are called the \emph{inner} and
\emph{outer constraints}, respectively.
The goal is to select a subset of active elements of
high weight according to the following rules.
An algorithm must first \emph{probe} an element $e\in N$
to select it. If a probed element $e$ is active,
then $e$ is selected, otherwise it is not.
The algorithm can choose the order in which elements
are probed.
The set $Q$ of all probed elements must satisfy
$Q\in \mathcal{F}_{out}$ and the set $S\subseteq Q$
of all selected elements must satisfy
$S\in \mathcal{F}_{in}$. Hence, at any step of the
algorithm an element can only be probed if adding it
to the currently probed elements does not violate
$\mathcal{F}_{out}$, and adding it to the elements
selected so far does not violate $\mathcal{F}_{in}$.
Gupta and Nagarajan~\cite{gupta_2013_stochastic} showed
that this model captures numerous applications.
Furthermore, they show how so-called
\emph{ordered CRSs} can be used to get
approximately optimal approximation factors for
many constraint families including the intersection
of a constant number of matroids.
However, due to their use of ordered CRSs, the
presented algorithms crucially rely on the fact that
the order in which elements are probed can be chosen
freely. 

Replacing their use of ordered CRSs with OCRSs
we can drop this requirement.

\begin{theorem}\label{thm:ocrsProbing}
Let $\mathcal{F}_{in}, \mathcal{F}_{out} \subseteq 2^N$
be two down-closed families.
If there are a $(b, c_{in})$-selectable greedy OCRS $\pi_{in}$ for a relaxation $P_{in}$ of $\cF_{in}$ and a $(b, c_{out})$-selectable greedy OCRS $\pi_{out}$ for a relaxation $P_{out}$ of $\cF_{out}$, then there
is a $(b \cdot c_{in} \cdot c_{out})$-approximation for the weighted stochastic
probing problem where the order in which elements
can be probed is chosen by an almighty adversary and the inner and outer
constraints are given by
$\mathcal{F}_{in}$ and $\mathcal{F}_{out}$,
respectively.\footnote{Similar to~\cite{gupta_2013_stochastic} one
can strengthen the theorem, and only assume
an offline CRS for $\mathcal{F}_{out}$ and
an OCRS for $\mathcal{F}_{in}$.
The fact that an offline CRS suffices
for $\mathcal{F}_{out}$ can sometimes be used to get
better approximation factors.
For simplicity of presentation,
we do not go into these details here,
and also in Theorem~\ref{thm:ocrsProbDeadline}
and~\ref{thm:ocrsProbSubm}.}
Moreover, if $\pi_{in}$ and $\pi_{out}$ are efficient and there are separation oracles for $P_{in}$ and $P_{out}$, then the above algorithm has a polynomial time complexity.%
\end{theorem}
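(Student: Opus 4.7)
The plan is to formulate a convex relaxation that upper-bounds OPT and round its optimum online using $\pi_{out}$ and $\pi_{in}$ in tandem. Consider the LP $\max \sum_{e \in N} w_e p_e y_e$ subject to $y \in P_{out}$ and $(p_e y_e)_{e \in N} \in P_{in}$. For any feasible probing algorithm let $q_e := \Pr[e \text{ is probed}]$; since the probing decision for $e$ is made before observing $e$'s activation bit, the probed set lies in $\cF_{out}$ almost surely (so $(q_e)_e \in P_{out}$), and the selected set, whose marginals are $(p_e q_e)_e$, lies in $\cF_{in}$ almost surely (so $(p_e q_e)_e \in P_{in}$). Hence the LP value is an upper bound on the expected weight of any feasible algorithm, in particular OPT. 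Let $y^*$ be an LP optimum, $x^{out} := b y^* \in b P_{out}$, and $x^{in} := b(p_e y^*_e)_e \in b P_{in}$.

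Draw independent $D_e \sim \mathrm{Bern}(b y^*_e)$ in advance, and let $A_e$ denote the activation of $e$ (with $\Pr[A_e=1]=p_e$, revealed only by probing). Couple the inputs of the two OCRSs by declaring $R(x^{out}) := \{e : D_e=1\}$ and $R(x^{in}) := \{e : D_e=1,\, A_e=1\}$, which carry the correct product marginals. When the adversary reveals $e$, probe $e$ iff (i)~$D_e=1$, (ii)~$\pi_{out}$ accepts $e$ relative to its current probed set $Q$, i.e., $Q \cup \{e\} \in \cF_{\pi_{out}, x^{out}}$, and (iii)~$\pi_{in}$ \emph{pre}-accepts $e$ relative to its current selected set $S$, i.e., $S \cup \{e\} \in \cF_{\pi_{in}, x^{in}}$. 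If $e$ is probed and $A_e=1$, then $e$ is added to $S$. Feasibility is immediate by down-closedness: the actual probed set is contained in $\pi_{out}$'s probed set (which lies in $\cF_{out}$) and the actual selected set is contained in $\pi_{in}$'s selected set (which lies in $\cF_{in}$).

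Write $\pi_{out}(e)$ and $\pi_{in}(e)$ for the events that the respective OCRS accepts $e$. Since $D_e$ and $A_e$ are independent of the randomness affecting these two events (the latter are measurable with respect to $(D_j, A_j)_{j<e}$ and the OCRSs' internal bits),
\[
\Pr[e \text{ is selected}] \;=\; b\, y^*_e \cdot p_e \cdot \Pr\!\left[\pi_{out}(e) \wedge \pi_{in}(e)\right].
\]
The delicate step is the joint lower bound $\Pr[\pi_{out}(e) \wedge \pi_{in}(e)] \geq c_{out} \cdot c_{in}$. Each individual acceptance is implied by the corresponding ``always selectable'' event of Definition~\ref{def:cSelectable}, which has probability at least $c_{out}$ (resp.\ $c_{in}$); moreover both selectability events are monotone \emph{decreasing} functions of $(D_j, A_j)_{j \in N}$, since enlarging $R(\cdot)$ can only create new obstructions to $e$ in a down-closed family. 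Taking the internal randomness of $\pi_{out}$ and $\pi_{in}$ to be independent, the FKG inequality on the product measure of $(D, A)$ yields the clean product bound. Summing produces $\E[w(\text{selected})] \geq b\, c_{out}\, c_{in} \sum_e w_e p_e y^*_e \geq b\, c_{out}\, c_{in} \cdot \E[\text{OPT}]$. The main obstacle is precisely this correlation: $\pi_{out}(e)$ and $\pi_{in}(e)$ share the randomness $(D_j)_{j<e}$, so one cannot simply multiply the marginal selectability bounds, and the positive-association argument via FKG is what restores the factor $c_{out}\, c_{in}$. For the efficiency clause, given separation oracles for $P_{in}$ and $P_{out}$ the LP is solvable in polynomial time, and the online rule merely invokes the efficient independence oracles for $\cF_{\pi_{in}, x^{in}}$ and $\cF_{\pi_{out}, x^{out}}$ at each arrival.
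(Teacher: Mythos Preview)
Your approach is essentially the paper's: the same LP relaxation, the same algorithm (probe $e$ iff it survives the $b$-thinning and both greedy families accept it), and the same FKG argument to decouple the two selectability events. The paper packages the analysis through its ``characteristic CRS'' machinery (Observation~\ref{obs:subset_pi}, Lemmata~\ref{lem:monotone} and~\ref{lem:balanceness}), but the content is identical.

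One imprecision to fix: the displayed equality
\[
\Pr[e \text{ is selected}] \;=\; b\, y^*_e \cdot p_e \cdot \Pr\!\left[\pi_{out}(e) \wedge \pi_{in}(e)\right]
\]
is not justified as written. You defined $\pi_{out}(e),\pi_{in}(e)$ as the \emph{actual} acceptance events, which depend on the sets $Q,S$ at the time $e$ arrives; against an \emph{almighty} adversary the arrival order may depend on $(D_e,A_e)$, so these events are not independent of $(D_e,A_e)$ and the factoring fails. The correct statement (and what the paper does via its characteristic CRS) is to pass immediately to the ``always selectable'' events $\chi_{out}(e)=\{I\cup\{e\}\in\cF_{\pi_{out},x^{out}}\ \forall I\subseteq R(x^{out}),\,I\in\cF_{\pi_{out},x^{out}}\}$ and $\chi_{in}(e)$ analogously. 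These depend only on $(D_j,A_j)_{j\neq e}$ and the OCRSs' internal coins, so they are genuinely independent of $(D_e,A_e)$, and since $Q\subseteq R(x^{out})$, $Q\in\cF_{\pi_{out},x^{out}}$ (and similarly for $S$) one gets
\[
\Pr[e \text{ is selected}] \;\geq\; b\, y^*_e \cdot p_e \cdot \Pr\!\left[\chi_{out}(e) \wedge \chi_{in}(e)\right],
\]
after which your FKG step goes through verbatim. With this adjustment your argument is complete and matches the paper.
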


It turns out that the extension to arbitrary probing orders
resolves an open question of~\cite{gupta_2013_stochastic}
about stochastic probing with deadline.
In a probing problem with deadlines there is a
deadline $d_e\in \mathbb{Z}_{\geq 1}$ for each element
$e\in N$, indicating that $e$ can
only be probed as one of the first $d_e$ elements that
get probed.
Using a clever technique, Gupta and
Nagarajan~\cite{gupta_2013_stochastic} presented an
$O(1)$-approximation for this problem setting for
the unweighted case, \ie, $w$ is the all-ones vector,
when $\mathcal{F}_{in}, \mathcal{F}_{out}$ are
$k$-systems\footnote{A $k$-system $\mathcal{F}\subseteq 2^N$
is a down-closed family such that, for any $S\subseteq N$,
the ratio of the sizes of any two maximal sets
of $\mathcal{F}$ that are contained
in $S$ is at most $k$.
In particular, $k$-systems generalize the intersection of
$k$ matroids.},
for $k=O(1)$.
They left it as an open question how to approach
the weighted version of stochastic probing with
deadlines. Using OCRSs we can leverage
Theorem~\ref{thm:ocrsProbing} to consider elements
in increasing order of their deadlines, which allows
for addressing this open question.
%

\begin{theorem}\label{thm:ocrsProbDeadline}
Let $\mathcal{F}_{in}, \mathcal{F}_{out} \subseteq 2^N$
be two down-closed families.
If there are a $(b, c_{in})$-selectable greedy OCRS $\pi_{in}$ for a relaxation $P_{in}$ of $\cF_{in}$ and a $(b, c_{out})$-selectable greedy OCRS $\pi_{out}$ for a relaxation $P_{out}$ of $\cF_{out}$, then there
is a $(b(1 - b) \cdot c_{in} \cdot c_{out})$-approximation for the weighted stochastic
probing with deadlines problem where the inner and outer
constraints are given by
$\mathcal{F}_{in}$ and $\mathcal{F}_{out}$,
respectively.
Moreover, if $\pi_{in}$ and $\pi_{out}$ are efficient and there are separation oracles for $P_{in}$ and $P_{out}$, then the above algorithm has a polynomial time complexity.
\end{theorem}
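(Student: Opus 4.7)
The plan is to reduce Theorem~\ref{thm:ocrsProbDeadline} to Theorem~\ref{thm:ocrsProbing} by encoding deadline feasibility as an additional laminar-matroid constraint on the outer side. If the algorithm probes elements in nondecreasing order of their deadlines, then a probed set $Q \subseteq N$ respects every deadline if and only if $|Q \cap N_t| \leq t$ for every integer $t \geq 1$, where $N_t := \{e \in N : d_e \leq t\}$. Indeed, in deadline order an element $e \in Q$ appears in position at most $|Q \cap N_{d_e}|$, so the latter must be at most $d_e$; conversely, if all these inequalities hold then deadline order fits every element in time. The family $\cF_{dead} := \{S \subseteq N : |S \cap N_t| \leq t \;\forall t\}$ is therefore a laminar matroid on $N$ (since $N_1 \subseteq N_2 \subseteq \dots$), with matroid polytope $P_{dead} = \{x \in [0,1]^N : \sum_{e \in N_t} x_e \leq t \;\forall t\}$.

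By Theorem~\ref{thm:direct_OCRS}, $P_{dead}$ admits an efficient $(b, 1-b)$-selectable deterministic greedy OCRS $\pi_{dead}$. Combining $\pi_{dead}$ with $\pi_{out}$ via Theorem~\ref{thm:combineOCRSs} yields an efficient $(b, (1-b)c_{out})$-selectable greedy OCRS $\pi'_{out}$ for $P_{out} \cap P_{dead}$, which is a relaxation of $\cF_{out} \cap \cF_{dead}$.

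Next, I would consider the natural LP relaxation of the deadline problem,
\begin{equation*}
\max \sum_{e \in N} w_e p_e x_e
\quad \text{s.t.} \quad
x \in P_{out},\; p \cdot x \in P_{in},\; \sum_{e \in N_t} x_e \leq t \;\forall t \enspace,
\end{equation*}
whose optimum $\sum_e w_e p_e x^*_e$ upper bounds $\OPT$: for every deadline-respecting policy the marginals $x_e := \Pr[e \text{ is probed}]$ are LP-feasible since $|Q \cap N_t| \leq t$ holds deterministically. By construction $x^* \in P_{out} \cap P_{dead}$ and $p \cdot x^* \in P_{in}$, so $x^*$ is a legal input to the algorithm of Theorem~\ref{thm:ocrsProbing} with outer relaxation $P_{out} \cap P_{dead}$ and inner relaxation $P_{in}$.

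I would then run that algorithm using $\pi'_{out}$ as the outer OCRS, $\pi_{in}$ as the inner OCRS, and $x^*$ as the LP solution, processing elements in deadline order. Theorem~\ref{thm:ocrsProbing} then produces an expected value of at least $b \cdot (1-b) c_{out} \cdot c_{in} \cdot \sum_e w_e p_e x^*_e \geq b(1-b) c_{in} c_{out} \cdot \OPT$. The main conceptual point to verify is that Theorem~\ref{thm:ocrsProbing} is phrased with the probe order chosen by an almighty adversary, whereas here the algorithm itself commits to deadline order; this is harmless, since an algorithm-fixed order is a special case covered by the almighty adversary, so the guarantee transfers without loss. Efficiency is preserved throughout: all three OCRSs in play are efficient, $P_{dead}$ admits a trivial separation oracle, and the LP can be solved in polynomial time given separation oracles for $P_{in}$ and $P_{out}$.
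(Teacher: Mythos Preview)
Your proposal is correct and follows essentially the same approach as the paper: encode the deadline constraint as a laminar matroid, obtain a $(b,1-b)$-selectable OCRS for it via Theorem~\ref{thm:direct_OCRS}, combine it with $\pi_{out}$ via Theorem~\ref{thm:combineOCRSs}, and then invoke Theorem~\ref{thm:ocrsProbing} while processing elements in increasing deadline order. The only cosmetic difference is that you spell out the augmented LP explicitly, whereas the paper observes directly that any deadline-respecting policy is also a policy for the $(\cF_{in}, \cF_{out}\cap\cF_{dead})$-probing problem, so Theorem~\ref{thm:ocrsProbing} compares against it automatically.
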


We highlight that the stochastic probing problem
with monotone submodular objectives but without
deadlines---in short,
\emph{submodular stochastic probing}---was 
%
considered by Adamczyk et
al.~\cite{adamczyk_2014_submodular}
, who presented
for this setting a
$(1-1/e)/(k_{in} + k_{out} + 1)$-approximation when
$\mathcal{F}_{in}$ and $\mathcal{F}_{out}$ are the
intersection of $k_{in}$ and $k_{out}$ matroids,
respectively.
Using our techniques we obtain $O(1)$-approximations
for considerably more general settings of submodular
stochastic probing. More precisely, 
we can handle a very broad set of constraints,
with a probing order chosen by an
almighty adversary (instead of being choosable
by the algorithm).


\begin{theorem}\label{thm:ocrsProbSubm}
Let $\mathcal{F}_{in}, \mathcal{F}_{out} \subseteq 2^N$
be two down-closed families.
If there are a $(b, c_{in})$-selectable greedy OCRS $\pi_{in}$ for a relaxation $P_{in}$ of $\cF_{in}$ and a $(b, c_{out})$-selectable greedy OCRS $\pi_{out}$ for a relaxation $P_{out}$ of $\cF_{out}$, then there
is a $((1 - e^{-b} - o(1)) \cdot c_{in} \cdot c_{out})$-approximation for the submodular stochastic
probing problem where the order in which elements
can be probed is chosen by an almighty adversary and the inner and outer
constraints are given by
$\mathcal{F}_{in}$ and $\mathcal{F}_{out}$,
respectively. Moreover, if $\pi_{in}$ and $\pi_{out}$ are efficient and there are separation oracles for $P_{in}$ and $P_{out}$, then the above algorithm has a polynomial time complexity.
\end{theorem}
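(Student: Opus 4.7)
The plan is to formulate a natural convex programming relaxation whose value upper bounds $\OPT$, solve a down-scaled version of it approximately by continuous greedy, and then round online by invoking $\pi_{out}$ on the probing decisions and $\pi_{in}$ on the selection decisions; the factor $(1-e^{-b}-o(1))\cdot c_{in}\cdot c_{out}$ will arise as the product of the continuous greedy loss with the two OCRS losses.

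\emph{Step 1 (relaxation).} For each $e\in N$ let $y_e$ denote the probability that the (adaptive) strategy probes $e$. Any feasible strategy induces a vector with $y\in P_{out}$ (since the probed set lies in $\cF_{out}$) and $p\cdot y\in P_{in}$ (since $\Pr[e\text{ selected}]=p_ey_e$ and the selected set lies in $\cF_{in}$), and its expected value is at most $G(y):=F(p\cdot y)$ where $F$ is the multilinear extension of $f$. Hence $\OPT\le\OPT_{frac}:=\max\{G(y):y\in P_{out},\ p\cdot y\in P_{in}\}$. Note that $G$ itself is the multilinear extension of the monotone submodular function $g(S):=\E[f(S\cap R(p))]$, and the polytope $P':=\{y\in P_{out}:p\cdot y\in P_{in}\}$ is down-closed and, given separation for $P_{in}$ and $P_{out}$, solvable.

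\emph{Step 2 (continuous greedy).} Running continuous greedy on $P'$ and stopping at time $b$ yields $y^*\in bP'$ (so $y^*\in bP_{out}$ and $p\cdot y^*\in bP_{in}$) with $G(y^*)\ge(1-e^{-b}-o(1))\cdot\OPT_{frac}$.

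\emph{Step 3 (online rounding).} For each $e$, sample independent coins $A_e\sim\mathrm{Ber}(y^*_e)$ and $Z_e\sim\mathrm{Ber}(p_e)$, where $Z_e$ is the true activation, observed only upon probing. Process elements in the adversary's order, maintaining a probed set $Q$ and selected set $S$. When $e$ arrives: if $A_e=1$ and $Q\cup\{e\}\in\cF_{\pi_{out},y^*}$, probe $e$; then if $Z_e=1$ and $S\cup\{e\}\in\cF_{\pi_{in},p\cdot y^*}$, select $e$. The greedy property of both OCRSs guarantees $Q\in\cF_{out}$ and $S\in\cF_{in}$, so the output is feasible.

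\emph{Step 4 (analysis).} The input to $\pi_{in}$ is the set $B':=\{e\in Q:Z_e=1\}$, which under the natural coupling is a subset of $R(p\cdot y^*)=\{e:A_eZ_e=1\}$. A simple observation is that selectability for a greedy OCRS is monotone in the active set: shrinking $R(x)$ only removes possible obstructing sets $I$, so any $e$ selectable w.r.t.\ $R(p\cdot y^*)$ is also selectable w.r.t.\ $B'$. Combining this with the standard multilinear-extension analysis underlying Theorem~\ref{thm:submodular_basic}---conditioning on $R(p\cdot y^*)$ and then applying the two OCRS selectability guarantees through their independent internal randomness---yields $\E[f(S)]\ge c_{in}\cdot c_{out}\cdot G(y^*)$, and chaining with Step 2 gives the claimed $(1-e^{-b}-o(1))\cdot c_{in}\cdot c_{out}$ factor. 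Efficiency of continuous greedy on $P'$ (using the separation oracles) and of the per-element online checks (using the OCRS oracles) yields the running-time claim.

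\emph{Main obstacle.} The delicate point is Step 4: one must argue that the individual guarantees of $\pi_{out}$ and $\pi_{in}$ truly multiply in the final expectation, despite the fact that $\pi_{out}$-selectability (which depends on $A_{-e}$) and $\pi_{in}$-selectability (which depends on $A_{-e}$ and $Z_{-e}$) are coupled through the shared activations. I expect this to require the same ``conditioning on $R(p\cdot y^*)$ and using the independent pools of OCRS randomness'' argument used in proving Theorem~\ref{thm:submodular_basic}, adapted so that the outer OCRS is viewed as operating on the conditional distribution of $R(y^*)$ given $R(p\cdot y^*)$. Once this conditional decomposition is in place, the monotone submodular inequality $\E[f(S)\mid R(p\cdot y^*)]\ge c_{in}c_{out}\,f(R(p\cdot y^*))$ follows from concavity of $F$ along axis-parallel directions, and averaging gives the desired bound.
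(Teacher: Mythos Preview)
Your Step~1 contains a genuine error: the inequality $\E[f(S)]\le F(p\cdot y)$ is false for adaptive strategies, so your $\OPT_{frac}$ does \emph{not} upper bound $\OPT$. Concretely, take $N=\{1,2\}$, $p_1=p_2=2/3$, $f(S)=\mathbb{1}[S\neq\varnothing]$, $\cF_{out}=2^N$, and $\cF_{in}=\{S:|S|\le 1\}$. The adaptive strategy ``probe $1$; if inactive probe $2$'' achieves $\E[f(S)]=8/9$. On the other hand your feasible region for $q=p\cdot y$ is $Q=\{q:q_1+q_2\le 1,\ 0\le q_i\le 2/3\}$, and $\max_{q\in Q}F(q)=\max_{q\in Q}(q_1+q_2-q_1q_2)=7/9$, attained at $(2/3,1/3)$. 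So $\OPT=8/9>7/9=\OPT_{frac}$. The point is that adaptivity creates negative correlation in $S$, which for a submodular objective can push $\E[f(S)]$ strictly above the multilinear value at the marginals.

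The paper repairs this by replacing $F$ with the concave closure $f^+$ in the relaxation (their $(R2)$): one always has $\E[f(S)]\le f^+(\text{marginals of }S)$, so $\OPT\le\max_{y\in P'}f^+(p\cdot y)$. The factor $1-e^{-b}-o(1)$ then comes from the standard continuous-greedy guarantee (stopped at time $b$), which compares the achieved \emph{multilinear} value $F(p\cdot\tilde x)$ to the \emph{concave-closure} optimum $f^+(p\cdot x^*)$; this is exactly the Adamczyk--Sviridenko--Ward / C\u{a}linescu--Chekuri--P\'al--Vondr\'ak statement, not the weaker $F$-to-$F$ comparison you invoke.

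Your Step~4 is also too vague to be a proof. The paper's argument is: pass to the \emph{characteristic CRSs} $\bar\pi_{in},\bar\pi_{out}$ of the two greedy OCRSs, show that $\bar\pi(A_{in}):=\bar\pi_{in}(A_{in})\cap\bar\pi_{out}(A_{out})$ (with $A_{out}$ resampled from $A_{in}$) is a \emph{monotone} CRS, prove via the FKG inequality that it is $(p\cdot\tilde x,\ c_{in}c_{out})$-balanced, and then invoke the Chekuri--Vondr\'ak--Zenklusen lemma that a monotone $c$-balanced CRS yields $\E[f(\eta_f(\bar\pi(R(x))))]\ge c\,F(x)$. Your ``conditioning on $R(p\cdot y^*)$'' intuition points in the right direction, but the FKG step and the reduction to the offline monotone-balanced framework are the actual content, and they are missing.
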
 

We remark that the same idea used to derive Theorem~\ref{thm:ocrsProbDeadline} from Theorem~\ref{thm:ocrsProbing} can also be used to derive from Theorem~\ref{thm:ocrsProbSubm} a result for the submodular stochastic
probing with deadlines problem.

\ifbool{shortVersion}{
\section{Online Contention Resolution Scheme for Matroids}

In this section we present an OCRS for matroids, which
leads to the result claimed in Theorem~\ref{thm:direct_OCRS}
about matroids. Due to space constraints, OCRSs for further
constraints and missing proofs are deferred to the long
version of the paper.

}{
\section{Constructing \ifbool{SODACamera}{OCRSs}{Online Contention Resolution Schemes}}

In this section we prove the existence (or non-existence) of OCRSs for various polytopes. Sections~\ref{ssc:matroids}, \ref{ssc:matching} and~\ref{ssc:knapsack} study OCRSs for matroid polytopes, matching polytopes and the natural relaxation of knapsack constraints, respectively. The results proved in these sections prove together Theorem~\ref{thm:direct_OCRS}. Theorem~\ref{thm:combineOCRSs}, which shows that greedy OCRSs for different polytopes can be combined to create greedy OCRSs for the intersection of these polytopes, is proved in Section~\ref{ssc:combine}.

\subsection{OCRS for matroids\ifbool{SODACamera}{.}{}} \label{ssc:matroids}

In this section we give a greedy OCRS for matroid polytopes. }
For standard matroidal
concepts such as $\spn$, $\rank$, contraction and restriction, we refer the
reader to
\ifbool{shortVersion}{Appendix~A in the long version of the paper}{Appendix~\ref{app:matroid_definitions}}.
Also recall that, for a given matroid $M=(N,
\mathcal{F})$, the matroid polytope $P_\mathcal{F}$ is defined by $\{x\in
  \mathbb{R}_{\geq 0}^N \mid \forall S \subseteq N \, \sum_{e\in S} x_e \leq \rank(S)\}$.
The main result of this section can be stated as follows.

\begin{theorem}
  Let $b\in [0,1]$. There exists a $(b,1-b)$-selectable deterministic greedy OCRS
  for any matroid polytope $P_\mathcal{F} \subseteq [0,1]^N$ on ground set $N$.
  \label{thm:matroidocrs}
\end{theorem}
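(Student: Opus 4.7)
My plan is to take $\mathcal{F}_x$ to be the independent sets of a submatroid $M_x$ of $M$ (meaning $\mathcal{F}_{M_x} \subseteq \mathcal{F}$) constructed deterministically from $x$; the OCRS then greedily selects each arriving active element whose addition keeps the selected set $M_x$-independent. The advantage of this matroidal choice is that the $(b,1-b)$-selectability condition reduces to the familiar matroid requirement
\[
\Pr\bigl[e \notin \spn_{M_x}(R(x))\bigr] \;\geq\; 1-b \qquad \forall\, e \in N \text{ not an } M_x\text{-loop},
\]
since any maximal $M_x$-independent subset $I \subseteq R(x)$ extends by $e$ exactly when $e \notin \spn_{M_x}(R(x))$.

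First I would use the hypothesis $x \in bP_\mathcal{F}$, equivalently $y := x/b \in P_\mathcal{F}$, which gives $x(S) \leq b \cdot \rank(S)$ for every $S \subseteq N$ (and in particular $x_e \leq b$). Second, I would aim to build $M_x$ so that for every element $e$ the total $x$-mass of the $M_x$-circuits through $e$ is at most $b$; then a union bound yields
\[
\Pr\bigl[e \in \spn_{M_x}(R(x))\bigr] \;\leq\; \sum_{C \ni e,\ C\ M_x\text{-circuit}} \prod_{f \in C \setminus \{e\}} x_f \;\leq\; b,
\]
matching the selectability bound. The cleanest incarnation of this plan is when $M_x$ places each element $e$ in an $M_x$-rank-one flat $G_e$ with $x(G_e) \leq b$; then the $M_x$-circuits through $e$ are pairs $\{e,f\}$ with $f \in G_e \setminus \{e\}$, and the bound reduces to $x(G_e \setminus \{e\}) \leq b$.

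Third, the crux is constructing such an $M_x$. A first attempt --- a partition matroid whose parts have $x$-mass $\leq b$ and whose transversals are $M$-independent --- works cleanly when $M$ is uniform but can fail for general matroids; for example, in the graphic matroid of $K_4$ with $x = \tfrac{1}{4}\mathbf{1}$ and $b = \tfrac{1}{2}$, every partition of the six edges into three mass-$\tfrac{1}{2}$ classes has a transversal forming a triangle (hence $M$-dependent). This is the step I expect to be the heart of the proof, and I would address it by exploiting the lattice of tight sets of $y$ in $P_\mathcal{F}$ (the family $\{S : y(S) = \rank(S)\}$ is closed under $\cup$ and $\cap$ by submodularity of $\rank$), recursively contracting/restricting $M$ along this structure. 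Inside each resulting atomic block the point is strictly fractional, leaving enough slack to embed the parallel-class structure via the matroid exchange property.

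Once $M_x$ has been built, the union-bound argument above completes the proof of $(b,1-b)$-selectability, and efficiency of the OCRS follows from the polynomial-time computability of the tight-set decomposition together with standard matroid restriction/contraction algorithms.
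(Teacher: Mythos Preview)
Your high-level framework is right and actually matches the paper: take $\mathcal{F}_x$ to be the independent sets of a submatroid $M_x$ of $M$, so that $(b,1-b)$-selectability reduces to showing $\Pr[e\in\spn_{M_x}(R(x))]\le b$ for every non-loop $e$. The paper does exactly this, with $M_x$ a direct sum of minors of $M$ along a chain $\varnothing=N_\ell\subsetneq\cdots\subsetneq N_0=N$.

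The gap is in your construction of $M_x$. Your own $K_4$ example with $b=\tfrac12$ and $x=\tfrac14\mathbf{1}$ already kills the tight-set idea: here $y=x/b=\tfrac12\mathbf{1}$ has $y(N)=3=\rank(N)$ as its \emph{only} tight constraint (every proper subset $S$ has $y(S)=|S|/2<\rank(S)$), so the tight-set lattice is just $\{\varnothing,N\}$ and your recursive contract/restrict step does nothing. You are then back to the partition-matroid attempt you already showed fails. More generally, whenever $y$ lies in the relative interior of the base polytope the tight-set lattice is trivial, so this decomposition carries no information. The circuit union bound is also too weak to fall back on: outside the rank-one-flat special case, the number of $M_x$-circuits through $e$ can be large and $\sum_{C\ni e}\prod_{f\in C\setminus\{e\}}x_f$ need not be $\le b$.

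What the paper does instead is construct the chain by a \emph{probabilistic} criterion rather than a polyhedral one. Starting from $S=\varnothing$, it repeatedly adds to $S$ any element $e$ with $\Pr[e\in\spn((R(x)\cup S)\setminus\{e\})]>b$, and sets $N_1=S$; then it recurses on the contracted matroid $M|_{N_1}$. By construction every $e\in N_i\setminus N_{i+1}$ satisfies $\Pr[e\in\spn_i((R(x)\cap(N_i\setminus N_{i+1}))\setminus\{e\})]\le b$, which is exactly the selectability bound. The nontrivial content is termination ($N_1\subsetneq N$), proved via the inequality
\[
\sum_{e\in N} x_e\,\Pr[e\in\spn(R(x)\cup S)]\;\le\;b\bigl(x(N)+(1-b)\rank(S)\bigr),
\]
strict when $S\neq\varnothing$, which forces $x(S)<b\cdot\rank(N)=x(N)$. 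So the chain is driven by where the random set $R(x)$ is likely to span, not by where $x$ is tight against rank constraints; these are genuinely different decompositions, and only the former gives the uniform $1-b$ bound.
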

Combining Theorem~\ref{thm:combineOCRSs} with
Theorem~\ref{thm:matroidocrs}, we also get a greedy OCRS
for the intersection of $k$ matroids.

\begin{corollary}\label{cor:ocrsSeveralMatroids}
  Let $b\in [0,1]$, and let
  $P_1, \dots, P_k \subseteq [0,1]^N$ be $k$ matroid
  polytopes over a common ground set $N$. 
  Then there exists a $(b,(1-b)^k)$-selectable deterministic greedy
  OCRS for $P=\cap_{i=1}^k P_i$. 
\end{corollary}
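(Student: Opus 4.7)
The plan is to derive the corollary as a direct consequence of Theorem~\ref{thm:matroidocrs} together with an inductive application of Theorem~\ref{thm:combineOCRSs}. First I would observe that since $b \cdot P = b \cdot \bigcap_{i=1}^k P_i = \bigcap_{i=1}^k (b \cdot P_i)$, any input vector $x \in b \cdot P$ lies in $b \cdot P_i$ for every $i$, so each matroid OCRS can legitimately be applied to $x$. Then Theorem~\ref{thm:matroidocrs} supplies, for each $i \in \{1,\dots,k\}$, a deterministic $(b, 1-b)$-selectable greedy OCRS $\pi^i$ for $P_i$.

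Next I would run the induction. The base case $k=1$ is Theorem~\ref{thm:matroidocrs} itself. For the inductive step, suppose we have already produced a $(b, (1-b)^{k-1})$-selectable greedy OCRS $\pi^{1\ldots k-1}$ for $\bigcap_{i=1}^{k-1} P_i$. Applying Theorem~\ref{thm:combineOCRSs} with $\pi^{1\ldots k-1}$ and $\pi^k$ yields a $(b, (1-b)^{k-1} \cdot (1-b)) = (b, (1-b)^k)$-selectable greedy OCRS for $\bigl(\bigcap_{i=1}^{k-1} P_i\bigr) \cap P_k = P$, which closes the induction.

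The only non-mechanical point is that the corollary asks for a \emph{deterministic} OCRS, while Theorem~\ref{thm:combineOCRSs} states selectability without explicitly tracking determinism. I would verify this by recalling the construction implicit in Theorem~\ref{thm:combineOCRSs}: the combined feasibility family on input $x$ is (essentially) $\mathcal{F}^{\cap}_x = \mathcal{F}^1_x \cap \mathcal{F}^2_x$, obtained by declaring a set feasible for the combined scheme exactly when it is feasible for both component schemes. If each $\mathcal{F}^i_x$ is a deterministic function of $x$—which it is for the matroid OCRS of Theorem~\ref{thm:matroidocrs}—then $\mathcal{F}^{\cap}_x$ is too, so no randomness is introduced in the combination step. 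Iterating $k-1$ times preserves this property, giving a deterministic greedy OCRS for $P$.

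The main (minor) obstacle is the determinism bookkeeping just described; everything else is black-box use of the two earlier results, and no further selectability analysis is needed. If one were worried about the implicit construction in Theorem~\ref{thm:combineOCRSs}, an alternative is to note that $\mathcal{F}^{\cap}_x \subseteq \mathcal{F}^i_x$ for each $i$, so the event that an element $e$ is selectable for all component schemes simultaneously implies it is selectable for the combined scheme; the stated $(b, (1-b)^k)$-selectability then follows because the component schemes' randomness is over the disjoint coordinates of $R(x)$ but here there is no randomness in $\mathcal{F}^i_x$, and the selectability events for distinct matroids are over the same (external) randomness $R(x)$ and can be intersected in the manner prescribed by Theorem~\ref{thm:combineOCRSs}.
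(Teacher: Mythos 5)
Your proposal is correct and follows exactly the paper's route: the corollary is obtained by taking the deterministic $(b,1-b)$-selectable matroid OCRS of Theorem~\ref{thm:matroidocrs} for each $P_i$ and applying Theorem~\ref{thm:combineOCRSs} repeatedly ($k-1$ times). Your determinism bookkeeping is also consistent with the paper, since the combination is defined there by $\mathcal{F}_{\pi,x}=\mathcal{F}_{\pi^1,x}\cap\mathcal{F}_{\pi^2,x}$, which is deterministic whenever the component families are.
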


Note that the above corollary implies (by Observation~\ref{obs:pair_to_single_selectability}) that there is a
$b(1-b)^k$-selectable greedy OCRS for $P$.
Moreover, 
choosing $b=\frac{1}{1+k}$ in
Corollary~\ref{cor:ocrsSeveralMatroids}, we obtain
the following.
\begin{corollary}
  Let $P_1,\dots, P_k\subseteq [0,1]^N$ be
  matroid polytopes over a common ground set $N$,
  let $P=\cap_{i=1}^k P_i$,
  and let $c=\frac{1}{k+1} (1-\frac{1}{1+k})^k\geq
  \frac{1}{e(k+1)}$.
  Then there exists a $c$-selectable greedy
  OCRS for $P$. 
\end{corollary}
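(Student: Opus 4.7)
The plan is to derive this corollary almost immediately from Corollary~\ref{cor:ocrsSeveralMatroids} (the $(b,(1-b)^k)$-selectability for the intersection of $k$ matroids) combined with Observation~\ref{obs:pair_to_single_selectability} (which converts $(b,c)$-selectability into $bc$-selectability by independent subsampling). The only real choice is that of the scaling parameter $b$, and the only nontrivial bit of calculation is the claimed lower bound $\frac{1}{k+1}(1-\frac{1}{k+1})^k\geq \frac{1}{e(k+1)}$.

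Concretely, I would set $b=\frac{1}{k+1}$ in Corollary~\ref{cor:ocrsSeveralMatroids} to obtain a $\bigl(\frac{1}{k+1},\,(1-\frac{1}{k+1})^k\bigr)$-selectable deterministic greedy OCRS for $P$. Applying Observation~\ref{obs:pair_to_single_selectability} to this OCRS then produces a (randomized) $c$-selectable greedy OCRS for $P$ with
\[
c \;=\; \frac{1}{k+1}\Bigl(1-\frac{1}{k+1}\Bigr)^{\!k},
\]
which is exactly the constant stated in the corollary. Note that the choice $b=\frac{1}{k+1}$ is not arbitrary: a quick derivative of $b(1-b)^k$ shows it is maximized at $b=\frac{1}{k+1}$, so this choice is the natural optimum of the bound we are carrying forward.

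It remains to verify the inequality $(1-\frac{1}{k+1})^k \geq e^{-1}$, which, after dividing both sides by $k+1$, gives the advertised bound $c\geq \frac{1}{e(k+1)}$. This is a standard estimate: using $\ln(1-x)\geq -\frac{x}{1-x}$ for $x\in[0,1)$ with $x=\frac{1}{k+1}$ yields
\[
k\,\ln\!\Bigl(1-\frac{1}{k+1}\Bigr) \;\geq\; -k\cdot\frac{1/(k+1)}{k/(k+1)} \;=\; -1,
\]
and exponentiating finishes the argument. I don't anticipate a real obstacle here; all the substantive work was done in establishing Theorem~\ref{thm:matroidocrs} and the combination result Theorem~\ref{thm:combineOCRSs}, and this corollary is merely an explicit tuning of their consequences.
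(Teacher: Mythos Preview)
Your proposal is correct and follows essentially the same approach as the paper: the paper simply remarks that Corollary~\ref{cor:ocrsSeveralMatroids} together with Observation~\ref{obs:pair_to_single_selectability} yields a $b(1-b)^k$-selectable greedy OCRS, and then specializes to $b=\frac{1}{k+1}$, stating the bound $c\geq \frac{1}{e(k+1)}$ without further justification. Your additional remarks (that $b=\frac{1}{k+1}$ is the maximizer of $b(1-b)^k$, and the explicit verification of $(1-\frac{1}{k+1})^k\geq e^{-1}$) are correct and go slightly beyond what the paper writes out.
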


The rest of this section is devoted to the proof of Theorem~\ref{thm:matroidocrs}.
Consider a matroid $M = (N, \mathcal{F})$ and let $P_\mathcal{F}$ be the associated polytope.
Our greedy OCRS is based on using $x\in P_\mathcal{F}$ to find a  \emph{chain decomposition} of the elements
\begin{align*}
  \varnothing = N_\ell \subsetneq N_{\ell-1} \subsetneq \dots \subsetneq N_1 \subsetneq N_0 = N
	\enspace.
\end{align*}
It then accepts an active element $e\in N_{i} \setminus N_{i+1}$ if $e$ together with the
already accepted elements in $N_i \setminus N_{i+1}$ forms an independent set in
the matroid $(M/N_{i+1})|_{N_i}$, \ie, the matroid obtained from $M$ by
contracting $N_{i+1}$ and then restricting to $N_i$. To see that this OCRS is a greedy OCRS, note that the above algorithm is
equivalent to defining the family $\mathcal{F}_x = \{I\subseteq N: \forall i\,
  I\cap (N_i \setminus N_{i+1})\mbox{ is independent in
  } (M/N_{i+1})|_{N_i}\}$. The family $\mathcal{F}_x$ is clearly  a down-closed
  family of sets (since each $(M/N_{i+1})|_{N_i}$ is a matroid and its
  independent sets are, thus, down-closed).  Moreover, $\mathcal{F}_x$  is
  a subset of feasible sets because (see, \eg, Theorem~$5.1$ in~\cite{soto_2013_matroid})  if  $I_i$ is an independent set of
  $(M/N_{i+1})|_{N_i}$ for every $i$, then the set $\cup_{i} I_i$ is independent
  in $M$.
  Even though we do not need this fact, we highlight that
  $\mathcal{F}_x$ itself describes a family of independent sets
  of a matroid. This follows from the fact that
  $\mathcal{F}_x$ is the family of all (disjoint) unions of
  independent sets in the matroids $(M/N_{i+1})|_{N_i}$
  for $i\in \{0,\dots, \ell-1\}$, implying that 
  $\mathcal{F}_x$ are the independent sets
  of the union matroid obtained by taking the union of
  all matroids  $(M/N_{i+1})|_{N_i}$ for
  $i\in \{0,\dots, \ell-1\}$.

  Having shown that the above algorithm is a greedy OCRS for any chain
  decomposition of the elements, we turn our attention to the task of defining
  a chain that maximizes the selectability of our greedy OCRS. For a fixed
  chain, we have that the selectability of an element $e\in N_{i}\setminus N_{i+1}$ is
  \ifbool{SODACamera}{
  \begin{align*}
    \Pr[ I \cup & \{e\}  \in \mathcal{F}_x \;\;\;
    \forall \;I\subseteq R(x),
    I\in \mathcal{F}_x] \\
    & = \Pr[e\not \in \spn_i\left((R(x) \cap (N_i \setminus N_{i+1})) \setminus \{e\}\right)]
		\enspace,
  \end{align*}
  }
  {
  \begin{align*}
    \Pr[ I \cup \{e\} \in \mathcal{F}_x \;\;\;
    \forall \;I\subseteq R(x),
    I\in \mathcal{F}_x]
    & = \Pr[e\not \in \spn_i\left((R(x) \cap (N_i \setminus N_{i+1})) \setminus \{e\}\right)]
		\enspace,
  \end{align*}
  }
  where $\spn_i(\cdot)$ denotes the span function of matroid $(M/N_{i+1})|_{N_i}$.
Our objective is therefore to construct a chain decomposition
that
maximizes $\Pr[e\not
\in \spn_i\left((R(x) \cap (N_i \setminus N_{i+1})) \setminus \{e\}\right)]$
or equivalently minimizes
\begin{align}
  \Pr[e \in \spn_i\left((R(x) \cap (N_i \setminus N_{i+1})) \setminus \{e\}\right)]
	\enspace. 
  \label{eq:spanprob}
\end{align}

We now describe an iterative procedure for constructing  a chain decomposition so
that~\eqref{eq:spanprob} is at most $b$ for each element. Initially,
the chain only consists of the ground set $N_0 = N$. We need to refine the
chain if there exists an element $e\in N$ such that 
$\Pr[e \in \spn(R(x) \setminus \{e\})] > b$. We do that in a ``minimal'' way as follows:
\begin{itemize}
\setlength\itemsep{0em}
  \item Let $S = \varnothing$.
   \item While there exists $e\in N_{0} \setminus S$ such that $\Pr[ e\in \spn\left( (R(x)
     \cup S) \setminus \{e\} \right)] > b$, add $e$ to $S$.
\end{itemize}

We then let $N_1 = S$. Note that if $N_1$ is a strict subset of $N_0$ then we
have made progress and we repeat the above procedure (on the matroid induced by $N_1$) to obtain $N_2\subsetneq
N_1$, and so on, until we obtain $N_\ell = \varnothing$.
Thus, if we assume that the procedure terminates, then, by construction, the chain satisfies
\ifbool{SODACamera}
{
\begin{align*}
  \Pr[\spn_{i}\left( (R(x) \cap (N_{i} \setminus N_{i+1}) ) \setminus \{e\} \right) ] \leq b
\end{align*}
 for all  $i$ and  $e\in N_{i} \setminus N_{i+1}$.
}
 {
\begin{align*}
  \Pr[\spn_{i}\left( (R(x) \cap (N_{i} \setminus N_{i+1}) ) \setminus \{e\} \right) ] \leq b \qquad \mbox{for all } i \mbox{ and } e\in N_{i} \setminus N_{i+1}
	\enspace.
\end{align*}
}

\ifbool{shortVersion}{
The long version of the paper shows why the above chain
construction terminates which, together with the above discussion,
implies a $(b,1-b)$-selectable greedy OCRS for matroids,
thus proving Theorem~\ref{thm:matroidocrs}.
}{
The rest of this section is devoted to show that the chain construction always
terminates. As described above, this implies
a $(b,1-b)$-selectable greedy OCRS, and thus proves Theorem~\ref{thm:matroidocrs}.

\subsubsection{Proof of termination of chain construction\ifbool{SODACamera}{.}{}}

To prove that the chain decomposition terminates, it is sufficient to show that
$N_1 = S \subsetneq N_0 = N$ for any (non-empty) matroid  as the chain decomposition
then recurses on the matroid induced by $N_1$ (so the same argument implies that
$N_2 \subsetneq N_1$ assuming $N_1 \neq \varnothing$, and so on). Notice that the definition of $S$ implies that $S$ can only increase as coordinates of $x$ are increased. Hence, it is safe to assume that $x \in b\cdot P_{\cB}$,
where $P_{\cB} = \{x\in P_\mathcal{F} \mid x(N) = \rank(N)\}$ is the base
polytope of the matroid $M$,
which is the set of all maximal vectors in $P_{\mathcal{F}}$.

For proofs in the rest of the section it will be convenient to have the following equivalent view of the construction of $S = N_1$ in the refinement procedure.
\begin{itemize}
\setlength\itemsep{0em}
  \item Let $S_0 = \varnothing$, and let $S_1 = \{e \in N\mid \Pr[e \in \spn\left( R(x) \setminus \{e\}
    \right)] > b\}$ be the set of elements that are ``likely'' to be spanned.
  \item Assuming we have defined $S_0, S_1, \ldots, S_{i-1}$,  let
    \ifbool{SODACamera}
    {$S_i$ be
    \begin{align*}
      \{e \in N\mid \Pr[e \in \spn\left( (R(x) \cup S_{i-1}) \setminus \{e\} \right)] > b \}
			,
    \end{align*}
    }
    {
    \begin{align*}
      S_i  & =  \{e \in N\mid \Pr[e \in \spn\left( (R(x) \cup S_{i-1}) \setminus \{e\} \right)] > b \}
			\enspace,
    \end{align*}
  }
    that is, $S_i$ contains those elements that are likely to be spanned assuming
    that the elements of $S_{i-1}$ are contracted (or equivalently appear with
    probability $1$).
\end{itemize}
Notice that $S_{i-1} \subseteq S_{i}$ for every $i \geq 1$, and $S_i = S_{i - 1}$ implies $S_i = S_j$ for every $j > i$. Thus, we must have $N_1 = S = S_{|N|}$. The key technical part of the termination analysis is the following lemma.
\begin{lemma} \label{lem:technical}
  It always holds that:
  \ifbool{SODACamera}{
  \begin{align*}
    \sum_{e\in N} x_e \Pr[&e\in \spn\left( R(x) \cup S \right)] \\
    & \leq b\cdot\left(  x(N)  + (1-b) \rank(S) \right)
		\enspace.
  \end{align*}
  }
  {
  \begin{align*}
    \sum_{e\in N} x_e \Pr[e\in \spn\left( R(x) \cup S \right)] \leq b\cdot\left(  x(N)  + (1-b) \rank(S) \right)
		\enspace.
  \end{align*}
  }
  Moreover, the inequality is strict if $S\neq \varnothing$. 
\end{lemma}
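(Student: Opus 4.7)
The strategy is to combine the matroid polytope inequality with two structural observations about $S$. First, $S$ is closed under the matroid span, i.e., $\spn(S)=S$: any $e\in \spn(S)\setminus S$ would be spanned by $S$ deterministically, giving $\Pr[e\in \spn((R(x)\cup S)\setminus\{e\})]=1>b$, forcing $e\in S$---a contradiction. Second, by the monotonicity reduction already stated in the excerpt, we may assume $x\in b\cdot P_{\mathcal{B}}$, whence $x(N)=b\cdot \rank(N)$.

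The workhorse is the polytope inequality $x(U)\leq b\cdot \rank(U)$ for every $U\subseteq N$, which follows directly from $x\in b\cdot P_{\mathcal{F}}$. Applying it with $U=\spn(A)$ for the random set $A=R(x)\cup S$ and using $\rank(\spn(A))=\rank(A)$ yields
\[
  \sum_{e\in N} x_e \Pr[e\in \spn(R(x)\cup S)] = \E[x(\spn(R(x)\cup S))] \leq b\cdot \E[\rank(R(x)\cup S)].
\]
Using that $S$ is closed, I decompose $\rank(R(x)\cup S)=\rank(S)+\rank_{M/S}(R(x)\cap T)$ with $T=N\setminus S$, turning the right-hand side into $b\cdot \rank(S)+b\cdot \E[\rank_{M/S}(R(x|_T))]$.

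The technical heart is bounding $\E[\rank_{M/S}(R(x|_T))]$ so that the target $b(x(N)+(1-b)\cdot \rank(S))$ emerges. The naive estimate $\E[\rank_{M/S}(R(x|_T))]\leq \E[|R(x|_T)|]=x(T)$ produces $b(\rank(S)+x(N)-x(S))$; matching this with the target reduces to showing $x(S)\geq b\cdot \rank(S)$, i.e., that $S$ is tight in the matroid polytope. Establishing this tightness is the main obstacle I anticipate. I would attack it by induction along the chain $S_0\subseteq S_1\subseteq \cdots =S$: each set $D_i=S_i\setminus S_{i-1}$ arises as the first-iteration output of the same construction applied to the contracted matroid $M/S_{i-1}$ with vector $x|_{N\setminus S_{i-1}}$, so a base-case lemma showing that first-iteration outputs are always tight in the matroid polytope in which they are produced would propagate tightness through the chain and give $x(S)=b\cdot \rank(S)$. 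Once the target bound is in hand, the strict inequality for $S\neq \varnothing$ follows by tracking the strict $>b$ in the defining condition of any $e^{*}\in S$ (say $e^{*}\in S_1$): this strict slack injects a positive gap into the polytope inequality $x(\spn(A))<b\cdot \rank(A)$ on the positive-probability event that $e^{*}\in \spn(A)$, strictly lowering the overall bound.
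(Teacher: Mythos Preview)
Your opening steps are fine and coincide with the paper: rewriting the left-hand side as $\E[x(\spn(R(x)\cup S))]$ and applying the polytope inequality to get $\le b\cdot\E[\rank(R(x)\cup S)]$. The divergence, and the gap, is in how you bound $\E[\rank(R(x)\cup S)]$.

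Your route decomposes $\rank(R(x)\cup S)=\rank(S)+\rank_{M/S}(R(x)\cap T)$ and uses the naive bound $\rank_{M/S}(\cdot)\le|\cdot|$, yielding $\rank(S)+x(T)$. To reach the target you then need $x(S)\ge b\cdot\rank(S)$, which you do not prove. The proposed induction does not close the gap: to even set up the inductive step you need $x|_{N\setminus S_{i-1}}\in b\cdot P_{M/S_{i-1}}$, and checking this on a set $T\subseteq N\setminus S_{i-1}$ already requires $x(S_{i-1})\ge b\cdot\rank(S_{i-1})$; your ``base-case lemma'' that first-iteration outputs are tight is exactly the unproved statement, just in a smaller matroid. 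Note also that your naive bound never touches the defining property $\Pr[e\in\spn((R(x)\cup S_{i-1})\setminus\{e\})]>b$ of elements of $S$; this is why your strictness argument does not go through either---the polytope inequality $x(\spn(A))\le b\cdot\rank(\spn(A))$ need not be strict, and there is no place in your chain of inequalities where the strict $>b$ enters.

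The paper avoids the tightness question entirely by bounding $\E[\rank(R(x)\cup S)]$ differently. It picks a basis $S'=\{e_1,\ldots,e_k\}$ of $M|_S$ greedily, first exhausting $S_0$, then $S_1$, and so on, and builds a random set $A$ by starting from $R(x)$ and adding each $e_j$ in turn only if $e_j\notin\spn(A)$. Then $\spn(A)=\spn(R(x)\cup S)$, so $\E[\rank(R(x)\cup S)]\le\E[|A|]=x(N)+\sum_{j}\Pr[e_j\notin\spn(R(x)\cup\{e_1,\ldots,e_{j-1}\})]$. The greedy ordering guarantees that if $e_j\in S_i\setminus S_{i-1}$ then $\{e_1,\ldots,e_{j-1}\}$ already spans $S_{i-1}$, so $\Pr[e_j\in\spn(R(x)\cup\{e_1,\ldots,e_{j-1}\})]\ge\Pr[e_j\in\spn(R(x)\cup S_{i-1})]>b$ by the very definition of $S_i$. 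Hence each summand is strictly below $1-b$, giving $\E[|A|]\le x(N)+(1-b)\rank(S)$ with strict inequality when $S\neq\varnothing$. This is the missing idea: feed the strict $>b$ condition directly into a count of how many basis elements of $S$ are \emph{not} already spanned by $R(x)$, rather than trying to recover a polytope tightness statement about $S$.
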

Before proving the lemma, let us see that it implies what we want:
\begin{corollary}
If $N\neq \varnothing$  then $N_1 = S \subsetneq N$.
\end{corollary}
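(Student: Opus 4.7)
The plan is to derive the conclusion by contradiction from the key lemma. Suppose, for the sake of contradiction, that $S = N$. Since we reduced to the case $x \in b \cdot P_{\cB}$, we have $x(N) = b \cdot \rank(N)$, so substituting $\rank(S) = \rank(N)$ into the right-hand side of Lemma~\ref{lem:technical} yields
\begin{equation*}
b \cdot (x(N) + (1-b)\rank(S)) = b \cdot (b \cdot \rank(N) + (1-b)\rank(N)) = b \cdot \rank(N) = x(N).
\end{equation*}
Because $S = N \neq \varnothing$, the lemma's strict version applies, so $\sum_{e \in N} x_e \Pr[e \in \spn(R(x) \cup S)] < x(N)$.

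Next I would bound the left-hand side from below under the same assumption $S = N$. If $S = N$, then $R(x) \cup S = N$ deterministically, and $\spn(N) = N$, so every element $e \in N$ lies in $\spn(R(x) \cup S)$ with probability $1$. The left-hand side therefore equals $\sum_{e \in N} x_e = x(N)$, which directly contradicts the strict inequality obtained above. Hence $S \subsetneq N$, so $N_1 = S \subsetneq N_0 = N$.

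There is essentially no obstacle once Lemma~\ref{lem:technical} is in hand: the corollary is a one-line consequence obtained by plugging $S = N$ into the lemma, using the tight identity $x(N) = b \cdot \rank(N)$ available on the base polytope, and exploiting that $N$ trivially spans itself. The real work is packaged into the lemma, whose strict inequality is precisely what forces progress.
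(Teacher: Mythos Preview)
Your proof is correct and uses essentially the same ingredients as the paper: the strict form of Lemma~\ref{lem:technical}, the identity $x(N)=b\cdot\rank(N)$ from the base-polytope reduction, and the trivial span containment. The only cosmetic difference is that the paper argues directly---bounding $x(S)\leq \sum_{e}x_e\Pr[e\in\spn(R(x)\cup S)] < b\rank(N)=x(N)$ for any nonempty $S$---while you frame it as a contradiction under $S=N$; the logic is the same.
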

\begin{proof}
The corollary is clearly true if $S= \varnothing$. Otherwise, by Lemma~\ref{lem:technical}, 
\ifbool{SODACamera}
{
\begin{align*}
  x(S) &\leq{}  x(\spn\left( R(x) \cup S \right)) \\
  &= \sum_{e\in N} x_e \Pr[e\in \spn\left( R(x) \cup S \right)]  \\ & <{}   b (x(N)  + (1-b)
      \rank(S)) \leq  b \rank(N)
			\enspace.
\end{align*}
}
{
\begin{align*}
  x(S) \leq{} & x(\spn\left( R(x) \cup S \right)) = \sum_{e\in N} x_e \Pr[e\in \spn\left( R(x) \cup S \right)]  \\<{} &  b (x(N)  + (1-b)
      \rank(S)) \leq  b \rank(N)
			\enspace.
\end{align*}
}
As $x(S) < b\cdot\rank(N)$ and $x(N) = b\cdot\rank(N)$ by our assumption that $x \in bP_\cB$, we get $x(S) < x(N)$, which implies $N\setminus S \neq \varnothing$.
\end{proof}

Let us now continue with the proof of the our technical lemma.
\begin{proof}[Proof of Lemma~\ref{lem:technical}]
  Let $S' = \{e_1, e_2, \dots, e_k\}$ be a basis of the matroid $M|_{S}$
   obtained by first greedily selecting elements from $S_0$, then greedily adding elements from $S_1$, and so on.
  Consider the distribution $\mu$ over $2^N$ defined according to the following
  sampling procedure: 
  \begin{compactenum}
    \item Let $A$ be a random set originally distributed like $R(x)$.
    \item For $j=1, 2, \dots, k$, if $e_j \not \in \spn(A)$, add $e_j$ to $A$.
    \item Output $A$.
  \end{compactenum}
  Observe that this sampling procedure guarantees that the distributions of $\spn(S \cup R(x))$,  $\spn(S' \cup R(x))$ and  $\spn(A)$ are identical. Therefore,
  \begin{align*}
    \Pr[e\in \spn\left( R(x) \cup S \right)]  = {\textstyle \Pr}_{A\sim \mu} [e \in \spn(A)]
		\enspace.
  \end{align*}
  Now simple calculations yield
  \ifbool{SODACamera}
  {
    \begin{align*}
      {\textstyle \sum}_{e\in N}\mspace{-9mu}&\mspace{9mu} x_e {\textstyle \Pr}_{A \sim \mu}[e \in \spn(A)]  
       = \E_{A \sim \mu} [ x(\spn(A))]  \\
      & \leq b\cdot \E_{A \sim \mu} [\rank(\spn(A))]  \mbox{ \quad  (using $x\in b \cdot P_\cI$)}  \\
      & = b\cdot \E_{A \sim \mu} [\rank(A)]\leq b\cdot \E_{A \sim \mu} [|A|]
			\enspace.
    \end{align*}
  }
  {
    \begin{align*}
      {\textstyle \sum}_{e\in N} x_e {\textstyle \Pr}_{A \sim \mu}[e \in \spn(A)] 
      & = \E_{A \sim \mu} [ x(\spn(A))]  \\
      & \leq b\cdot \E_{A \sim \mu} [\rank(\spn(A))] & \mbox{(using $x\in b \cdot P_\cI$)}  \\
      & = b\cdot \E_{A \sim \mu} [\rank(A)]\leq b\cdot \E_{A \sim \mu} [|A|]
			\enspace.
    \end{align*}
  }
  We complete the proof by showing that $\E_{A \sim \mu}[|A|] \leq x(N) + (1-b) \rank(S)$. To see this, note that
  \ifbool{SODACamera}{$\E_{A \sim \mu}[|A|]$ equals
  \begin{align*}
    x(N&) + \sum_{j=1}^k \Pr[e_j \not \in \spn(R(x) \cup \{e_1, \dots, e_{j-1}\})]  \\
     &\leq x(N) + (1-b) k = x(N) + (1-b) \cdot \rank(S)
		\enspace,
  \end{align*}
  }
  {
  \begin{align*}
    \E_{A \sim \mu}[|A|]  &= x(N) + \sum_{j=1}^k \Pr[e_j \not \in \spn(R(x) \cup \{e_1, \dots, e_{j-1}\})]  \\
    & \leq x(N) + (1-b) k = x(N) + (1-b) \cdot \rank(S)
		\enspace,
  \end{align*}
}
  where the inequality follows from the following argument. If we let $i$ be the smallest index
  so that $e_j \in S_i$, then by the construction of $S'$ we have $\Pr[e_j \in
  \spn(R(x) \cup \{e_1, \dots, e_{j-1}\})] \geq \Pr[e_j \in \spn(R(x) \cup
  S_{i-1})] > b$. Finally, we remark that the strict  inequality $\Pr[e_j \in \spn(R(x) \cup
  S_{i-1})] > b$  implies that the inequality in the statement is strict
  if $S \neq \varnothing$.
\end{proof}

\subsubsection*{Efficient implementation\ifbool{SODACamera}{.}{}}

The only step that is not constructive in
the description of our OCRS for matroids is
the computation of probabilities of the type
$\Pr[e\in \spn((R(x)\cup S_{i-1})\setminus \{e\})]$.
One can easily get around this issue by using
good estimates through Monte-Carlo sampling,
leading to the following.

\begin{lemma}
For any $\epsilon > 0$ and $\alpha > 0$, there
is a randomized construction of a chain
$\emptyset = N_\ell \subsetneq N_{\ell-1}
\subsetneq \ldots \subsetneq N_1 \subsetneq N_0 = N$
such that with probability at least
$1-|N|^{-\alpha}$
the greedy OCRS defined by the chain
is $(b,1-b-\epsilon)$-selectable.
Furthermore, the time needed for this construction is
$O(\alpha \cdot \frac{1}{\epsilon^2} \cdot \poly(|N|)
 \cdot T)$, where $T$ is the
time for a single call to the independence
oracle of the matroid, and we assume that we
can sample a Bernoulli random variable in $O(1)$ time.
\end{lemma}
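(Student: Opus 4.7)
The plan is to replace every exact probability computation of the form $p = \Pr[e \in \spn((R(x) \cup S_{i-1}) \setminus \{e\})]$ by a Monte-Carlo estimate $\hat{p}$, and to shift the threshold used in the refinement procedure from $b$ to $b + \epsilon/2$ so that estimation error cannot cause the termination argument in Lemma~\ref{lem:technical} to break. A single Monte-Carlo sample is obtained by drawing $R(x)$ element-by-element in $O(|N|)$ time (each coordinate is an independent Bernoulli) and then testing whether $e \in \spn((R(x) \cup S_{i-1}) \setminus \{e\})$; this last test reduces to a rank comparison and can be carried out with $\poly(|N|)$ calls to the independence oracle, so a single sample costs $\poly(|N|) \cdot T$ time.

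Next, I would bound the number of samples and the number of distinct probabilities that must be estimated. By Hoeffding's inequality, taking $m = \Theta(\alpha \epsilon^{-2} \log|N|)$ independent samples yields an estimate $\hat{p}$ with $|\hat{p} - p| \leq \epsilon/2$ except with probability at most $|N|^{-(\alpha+c)}$ for any desired constant $c$. Over the whole chain construction, at most $|N|$ elements are ever added to some $S$, and between two consecutive additions we need to re-estimate the probability of at most $|N|$ remaining elements at each of at most $|N|$ chain levels, giving at most $\poly(|N|)$ estimates in total. A union bound then guarantees that, with probability at least $1 - |N|^{-\alpha}$, all estimates are simultaneously within $\epsilon/2$ of their true values; call this the \emph{good event} and condition on it in what follows.

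Under the good event, the modified refinement procedure---which adds $e$ to the current set $S$ whenever $\hat{p}_e > b + \epsilon/2$---satisfies two properties. First, any element that gets added to $S$ has true probability strictly greater than $b$, so the argument underlying Lemma~\ref{lem:technical} still applies verbatim and guarantees that each refinement produces a strict subset, hence the chain construction terminates after at most $|N|$ levels. Second, when the inner loop stops with the current $S$ becoming $N_{i+1}$, every element $e \in N_i \setminus N_{i+1}$ has $\hat{p}_e \leq b + \epsilon/2$ and thus true probability at most $b + \epsilon$; plugging this into the selectability expression~\eqref{eq:spanprob} yields selectability at least $1 - b - \epsilon$. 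Multiplying the cost of one sample by the total number of samples ($\poly(|N|) \cdot m$) gives the claimed running time $O(\alpha \epsilon^{-2} \poly(|N|) \cdot T)$, absorbing the $\log|N|$ factor into $\poly(|N|)$.

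The main obstacle is that the termination proof (Lemma~\ref{lem:technical}) depends crucially on the strict inequality $\Pr[e_j \in \spn(R(x) \cup S_{i-1})] > b$ for every element picked into the basis $S'$. Naively plugging estimated probabilities into that argument would compromise the strict inequality, and hence termination. The remedy is precisely the one-sided safety margin described above: using threshold $b + \epsilon/2$ for the estimate, together with estimation error $\epsilon/2$, guarantees that on the good event every element added to $S$ has true probability exceeding $b$, which is exactly what the termination proof needs, while the complementary bound $p_e \leq b + \epsilon$ for the elements kept outside $S$ yields the slightly weakened selectability $1 - b - \epsilon$ stated in the lemma.
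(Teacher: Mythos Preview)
Your proposal is correct and follows essentially the same approach as the paper: Monte-Carlo estimation of the span probabilities, a union bound over $\poly(|N|)$ estimates, and the observation that on the good event the termination argument of Lemma~\ref{lem:technical} still goes through while the surviving elements have selectability at least $1-b-\epsilon$. The only cosmetic difference is in how the one-sided safety margin is realized: the paper uses a one-sided estimator $\hat{p}_{e,S}\in[p_{e,S}-\epsilon,\,p_{e,S}]$ and keeps the original threshold $b$, whereas you use a symmetric Hoeffding bound $|\hat p - p|\le \epsilon/2$ and shift the threshold to $b+\epsilon/2$; both devices accomplish the same thing, namely that added elements have true probability $>b$ (termination) and excluded elements have true probability $\le b+\epsilon$ (selectability).
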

\ifbool{SODACamera}
{The proof of the above lemma is deferred to the full version of this paper~\cite{feldman_2015_online}.
}
{
\begin{proof}
For simplicity we define
$p_{e,S} = \Pr[e\in \spn((R(x)\cup S)\setminus \{e\})]$
for $e\in N$ and $S\subseteq N$.
As before, we focus on the construction of $N_1$; the
algorithm is then applied recursively.
We recall that in the above-mentioned construction
we set $N_1 = S$, where $S$ was constructed from
$S=\varnothing$ by adding elements 
$e\in N$ satisfying $p_{e,S} > b$.
To perform this step constructively we will,
whenever we need a probability $p_{e,S}$, use
an estimate $\hat{p}_{e,S}$ obtained
through Monte-Carlo sampling.
By standard results on Monte-Carlo sampling
(see, \eg, \cite{dagum_2000_optimal})
it suffices to use
$O(\alpha \cdot \frac{1}{\epsilon^2} \cdot \log|N|)$
many samples to obtain a value $\hat{p}_{e,S}$ such
that
\begin{align*}
\Pr\left[\hat{p}_{e,S} \in [p_{e,S}-\epsilon,
 p_{e,S}]\right] \geq 1-|N|^{-3-\alpha}
\enspace.
\end{align*}
Hence, to construct $S$, we start with $S=\varnothing$
and successively add elements $e\in N\setminus S$ with
$\hat{p}_{e,S} > b$.
There are at most $|N|$ elements we add to $S$, and
to add one element to $S$ we may have to check
the values $\hat{p}_{e,S}$ of all elements in
$N\setminus S$. Hence, to construct $S$ we use
at most $O(|N|^2)$ estimates of the type $\hat{p}_{e,S}$.
At the end of this procedure we set $N_1=S$ and
repeat. Since there are at most $O(|N|)$
sets in the final chain
$\emptyset=N_\ell \subsetneq \ldots \subsetneq N_0=N$,
the total number of estimates we need is bounded
by $O(|N|^3)$, which implies the claimed running time
of our algorithm. Moreover, with probability at least
$1-|N|^{-\alpha}$, all our estimates $\hat{p}_{e,S}$ satisfy
$\hat{p}_{e,S} \in [p_{e,S}-\epsilon, p_{e,S}]$.
To prove the lemma, we assume from now on that all our
estimates fulfil this property and show that this
implies that the OCRS we obtain
is $(b,1-b-\epsilon)$-selectable.

Notice that during our construction of $S$
we only add elements $e$ to $S$ satisfying
$b<\hat{p}_{e,S}\leq p_{e,S}$. Hence, the elements
we add would also have been added to $S$ in the
construction that uses the true probabilities $p_{e,S}$.
Therefore, for the same reasons showing that $S\subsetneq N$
when using the true probabilities $p_{e,S}$, we also have
$S\subsetneq N$.
It remains to observe that at the end of the construction
of $S$, \ie, when we set $N_1=S$, the probability
$1-p_{e,S}=\Pr[e\not\in \spn((S\cup R(x))\setminus \{e\})]$
of an element $e\in N\setminus S$ being selectable
is at least $1-b-\epsilon$. This indeed holds since
\[
1-p_{e,S} \geq 1 - \hat{p}_{e,S} - \epsilon \geq 1-b-\epsilon \enspace.
\ifbool{SODACamera}{}{\qedhere}
\]
\end{proof}
}

\subsection{OCRSs for matchings in general graphs\ifbool{SODACamera}{.}{}} \label{ssc:matching}

In this section we describe an OCRS that works for a relaxation $P_G$ of matching in a general graph $G=(V,E)$. Specifically, the relaxation polytope $P_G$ is defined as:
\[
	\begin{array}{lll}
		\sum_{g \in \delta(u)} x_g & \leq 1 & \forall\; u \in V \\
		x_g & \geq 0 & \forall\; g \in E \enspace,
	\end{array}
\]
where $\delta(u)$ is the set of edges incident to the node $u$. Observe that this relaxation is weaker than the matching polytope, hence, our results hold also for the matching polytope. We would like to stress that the ground set in this section is the set of edges, and thus, unlike in the rest of the paper, we denote it by $E$. Some ideas from the proof of Theorem~\ref{thm:matching_OCRS} can be traced to offline CRSs given by~\cite{feldman_2011_unified}.\footnote{The details of these CRSs are omitted in~\cite{feldman_2011_unified}, but can be found in~\cite{feldman_2011_maximization}.}

\begin{theorem} \label{thm:matching_OCRS}
  For every $b \in [0,1]$, there exists a $(b, e^{-2b})$-selectable randomized greedy OCRS for the relaxation $P_G\subseteq [0,1]^E$ of matching in a graph $G=(V,E)$.
\end{theorem}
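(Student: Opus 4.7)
The plan is to build a randomized greedy OCRS via independent per-edge thinning, with the thinning probabilities calibrated so that the ``blocking mass'' at each endpoint telescopes into a clean exponential factor.

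Concretely, for each edge $e$ set
\begin{equation*}
\alpha_e \;=\; \frac{1 - e^{-x_e}}{x_e}
\end{equation*}
(with $\alpha_e := 1$ when $x_e = 0$, by continuity); note $\alpha_e \in (0,1]$ since $1 - e^{-x} \leq x$. Independently for every $e \in E$, declare $e$ to be \emph{valid} with probability $\alpha_e$, and define
\begin{equation*}
\mathcal{F}_x \;=\; \{\,M \subseteq E : M \text{ is a matching and every edge of } M \text{ is valid}\,\}.
\end{equation*}
The family $\mathcal{F}_x$ is down-closed, its distribution depends only on $x$, and the associated greedy rule (accept an arriving active edge $e$ iff joining it to the already-selected set stays in $\mathcal{F}_x$) is online. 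Hence this is indeed a randomized greedy OCRS.

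Next I would verify $(b, e^{-2b})$-selectability. Fix an edge $e = \{u,v\}$; I claim $e$ is selectable exactly when (i) $e$ itself is valid and (ii) no edge $e' \in \delta(e) \setminus \{e\}$ is simultaneously active and valid. Indeed, if (i) or (ii) fails then taking $I = \varnothing$ or $I = \{e'\}$ (respectively) produces an $I \in \mathcal{F}_x$ with $I \cup \{e\} \notin \mathcal{F}_x$; conversely, when (i) and (ii) hold, for any matching $I \subseteq R(x)$ of valid edges no $e' \in I$ is adjacent to $e$, so $I \cup \{e\}$ is still a matching of valid edges. Since $\Pr[e' \text{ active and valid}] = \alpha_{e'} x_{e'} = 1 - e^{-x_{e'}}$ and all events are independent across edges,
\begin{equation*}
\Pr[e \text{ selectable}] \;=\; \alpha_e \prod_{e' \in \delta(e) \setminus \{e\}} e^{-x_{e'}} \;=\; \alpha_e \exp\Bigl(-\!\!\sum_{e' \in \delta(e) \setminus \{e\}} x_{e'}\Bigr).
\end{equation*}
Using $x \in b P_G$, each of $\sum_{e' \in \delta(u)\setminus\{e\}} x_{e'}$ and $\sum_{e' \in \delta(v)\setminus\{e\}} x_{e'}$ is at most $b - x_e$, so the exponent above is at most $2b - 2x_e$. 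Therefore
\begin{equation*}
\Pr[e \text{ selectable}] \;\geq\; \alpha_e\, e^{-(2b - 2x_e)} \;=\; \frac{(1 - e^{-x_e})\, e^{2x_e}}{x_e}\cdot e^{-2b},
\end{equation*}
and the claimed bound follows once $(1 - e^{-x_e}) e^{2x_e}/x_e \geq 1$, equivalently $e^{x_e}(e^{x_e}-1)/x_e \geq 1$, which is immediate from $e^{x_e} \geq 1$ and $(e^{x_e}-1)/x_e \geq 1$ (the latter being the standard inequality $e^t \geq 1 + t$).

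The only genuinely nontrivial step is the choice of $\alpha_e$. The unthinned greedy ($\alpha_e \equiv 1$) gives $\prod(1-x_{e'})$, which collapses to $0$ as soon as a single neighbor has $x_{e'}$ close to $1$. The specific calibration $\alpha_{e'} x_{e'} = 1 - e^{-x_{e'}}$ is what makes the blocking product collapse to $e^{-\sum x_{e'}}$; after that, only a routine inequality is needed to combine with the degree constraint. Pinning down this calibration is where the real content lies, but once it is in place the verification is mechanical.
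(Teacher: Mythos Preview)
Your proof is correct and essentially identical to the paper's own argument: the paper also independently ``thins'' each edge $g$ into a set $K$ with probability $(1-e^{-x_g})/x_g$, takes $\mathcal{F}_x$ to be the matchings contained in $K$, and bounds the selectability of $g'=uv$ via exactly the same chain $\alpha_{g'}\prod_{g\in\delta(u)\cup\delta(v)\setminus\{g'\}}e^{-x_g}\geq \alpha_{g'}e^{-2(b-x_{g'})}=\frac{e^{x_{g'}}(e^{x_{g'}}-1)}{x_{g'}}\,e^{-2b}\geq e^{-2b}$. Your additional commentary on why the calibration $\alpha_e x_e=1-e^{-x_e}$ is the right choice is a nice touch not spelled out in the paper.
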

\begin{proof}
	Let $x \in bP_G$ be the input point to the OCRS, and let $A \sim R(x)$ be the set of active elements. Our OCRS begins by selecting a subset $K$ of potential edges, where every edge $g \in E$ belongs to $K$ with probability $(1 - e^{-x_g}) / x_g$, independently (observe that this probability in indeed always within the range $[0, 1]$). Whenever an edge $g$ reveals whether it is active, the OCRS selects it if $g \in A \cap K$ and the addition of $g$ to the set of already selected edges does not make this set an illegal matching. Observe that for any fixed choice of $K$ this OCRS is a deterministic greedy OCRS, and thus, for a random $K$ it is a randomized greedy OCRS.

	Next, let us show that our OCRS is $(b, e^{-2b})$-selectable. Consider an arbitrary edge $g' = uv \in E$. We need to prove that with probability at least $e^{-2b}$ the edge $g'$ is in $K$, and can be added to any matching which is a subset of $A \cap K$. Formally, we need to prove:
	\[
		\Pr[g' \in K, A \cap K \cap (\delta(u) \cup \delta(v) \setminus \{g'\}) = \varnothing]
		\geq
		e^{-2b}
		\enspace.
	\]
	Clearly, every edge $g \in E$ belongs to $A \cap K$ with probability $x_g \cdot (1 - e^{-x_g}) / x_g = 1 - e^{-x_g}$. Since the membership of every edge in $K$ and $A$ is independent from the membership of other edges in these sets, we get:
  \ifbool{SODACamera}
  {
	\allowdisplaybreaks
	\begin{align*}
		\Pr[g' &\in K, A \cap K \cap (\delta(u) \cup \delta(v) \setminus \{g'\}) = \varnothing] \\
		&=
		\Pr[g' \in K] \cdot \prod_{g \in \delta(u) \cup \delta(v) \setminus \{g'\}} \mspace{-27mu} \Pr[g \not \in A \cap K]\\
		& ={} 
		\frac{(1 - e^{-x_{g'}})}{x_{g'}} \cdot \prod_{g \in \delta(u) \cup \delta(v) \setminus \{g'\}} \mspace{-27mu} e^{-x_g} \\
		& =
		\frac{(1 - e^{-x_{g'}})}{x_{g'}} \cdot e^{-\sum_{g \in \delta(u) \cup \delta(v) \setminus \{g'\}} x_g}\\
		& \geq{}
		\frac{(1 - e^{-x_{g'}})}{x_{g'}} \cdot e^{-2(b - x_{g'})} \\
		& =
		\frac{e^{x_{g'}}(e^{x_{g'}} - 1)}{x_{g'}} \cdot e^{-2b}
		\geq
		e^{-2b}
		\enspace,
	\end{align*}
  }
  {
	\begin{align*}
		&
		\Pr[g' \in K, A \cap K \cap (\delta(u) \cup \delta(v) \setminus \{g'\}) = \varnothing]
		=
		\Pr[g' \in K] \cdot \prod_{g \in \delta(u) \cup \delta(v) \setminus \{g'\}} \mspace{-27mu} \Pr[g \not \in A \cap K]\\
		={} &
		\frac{(1 - e^{-x_{g'}})}{x_{g'}} \cdot \prod_{g \in \delta(u) \cup \delta(v) \setminus \{g'\}} \mspace{-27mu} e^{-x_g}
		=
		\frac{(1 - e^{-x_{g'}})}{x_{g'}} \cdot e^{-\sum_{g \in \delta(u) \cup \delta(v) \setminus \{g'\}} x_g}\\
		\geq{} &
		\frac{(1 - e^{-x_{g'}})}{x_{g'}} \cdot e^{-2(b - x_{g'})}
		=
		\frac{e^{x_{g'}}(e^{x_{g'}} - 1)}{x_{g'}} \cdot e^{-2b}
		\geq
		e^{-2b}
		\enspace,
	\end{align*}
  }
	where the first inequality holds since the membership of $x$ in $bP_G$ guarantees that the total $x$-values of the edges in $\delta(u) \setminus \{g'\}$ (or $\delta(v) \setminus \{g'\}$) is at most $b - x_{g'}$.
\end{proof}

\noindent \textbf{Remark:} If one is interested in a \emph{deterministic} greedy OCRS, it is possible to set $K$ equal to $E$ deterministically in the greedy OCRS described by the last proof. A simple modification of the proof shows that the resulting deterministic greedy OCRS is $(b, (1-b)^2)$-selectable.

\subsection{OCRSs for knapsack constraint\ifbool{SODACamera}{.}{}} \label{ssc:knapsack}

In this section we consider the problem of defining an OCRS for a polytope $P\subseteq [0,1]^N$ defined by a single knapsack constraint. That is, each element $e\in N$ has an associated size $s_e \in [0,1]$ and $P$ is defined by
\[\begin{array}{lll}
  \sum_{e\in N} s_e x_e & \leq 1 & \\
  x_e & \in [0,1] & \mbox{for } e\in N
	\enspace. 
\end{array}\]

We begin with an interesting simple observation.

\begin{proposition}
  For every $n \geq 1$, there exists a knapsack constraint over a ground set of $n$ elements such that no deterministic greedy OCRS for the polytope defined by this constraint is $(b, c)$-selectable for any pair of $b \in [0, 1]$ and $c > (1-b)^{n-1}$.
\end{proposition}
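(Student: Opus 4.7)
The plan is to exhibit a knapsack instance on $n$ elements together with a specific fractional point $x \in bP$ such that, no matter how a deterministic greedy OCRS $\pi$ chooses its family $\mathcal{F}_{\pi,x}$, some element has selectability at most $(1-b)^{n-1}$ on this input.

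The instance I would use has one ``heavy'' element of size $s_1 = 1$ and $n-1$ ``light'' elements of size $s_2 = \cdots = s_n = \delta$ for a small $\delta \in (0, 1/(n-1))$. With this choice every singleton and every subset of $\{2, \ldots, n\}$ is feasible, but $\{1,i\}$ is infeasible for every $i \geq 2$ since $s_1 + s_i = 1 + \delta > 1$. As the fractional input I take $x_1 = b(1 - (n-1)\delta)$ and $x_i = b$ for $i \geq 2$; a direct computation gives $\sum_i s_i x_i = b$, so $x \in bP$.

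The main step is a short case analysis of the down-closed family $\mathcal{F}_x := \mathcal{F}_{\pi,x} \subseteq \mathcal{F}$ that $\pi$ commits to on input $x$. If $\{i\} \notin \mathcal{F}_x$ for some $i$, then element $i$ has selectability $0$, and hence the minimum selectability on this $x$ is $0 \leq (1-b)^{n-1}$. Otherwise every singleton lies in $\mathcal{F}_x$. In that case, because $\{1,i\} \notin \mathcal{F}$ (hence not in $\mathcal{F}_x$) for every $i \geq 2$, the event that element $1$ is selectable is contained in the event $R(x) \cap \{2, \ldots, n\} = \varnothing$: whenever some $i \geq 2$ is active, the singleton $I = \{i\}$ is a subset of $R(x)$ that lies in $\mathcal{F}_x$, yet $I \cup \{1\}$ does not. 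By independence of the elements of $R(x)$, the latter event has probability $\prod_{i \geq 2}(1-x_i) = (1-b)^{n-1}$, so element $1$'s selectability is at most $(1-b)^{n-1}$, and so is the minimum selectability of $\pi$ on this input.

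I do not expect a real obstacle: once the construction is in place the argument is a clean two-case split driven by the pairwise conflicts between element $1$ and the light elements. The only minor points to verify are membership $x \in bP$ (which just requires $\delta < 1/(n-1)$) and that the case $n = 1$ is vacuous because $(1-b)^0 = 1$ bounds any probability from above.
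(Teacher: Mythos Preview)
Your proposal is correct and follows essentially the same approach as the paper: one heavy element of size $1$ that conflicts pairwise with $n-1$ light elements each carrying mass $x_i=b$, so that once all singletons must lie in $\mathcal{F}_x$ the heavy element is selectable only when no light element is active. The only cosmetic differences are the index of the heavy element, the specific size of the light elements (the paper fixes $1/n$ where you leave $\delta$ free), and the framing as a direct case split rather than a contradiction argument.
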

\begin{proof}
  Consider the knapsack constraint defined by the ground set $N = \{1, 2, \dots, n\}$ and sizes $s_1 = s_2
  = \dots = s_{n-1} = 1/n$,  $s_n = 1$. Assume towards a contradiction that there exists a deterministic greedy OCRS that is $(b, c)$-selectable for some $b \in [0, 1]$ and $c > (1-b)^{n-1}$, and consider the family $\cF_x$ of feasible sets used by this OCRS for the possible input $x_1 = x_2 =\dots = x_{n-1} = b, x_n = b/n$.
	
  Since the OCRS is $(b, c)$-selectable for $c > 0$, each element $e \in N$ must be included in at least one set of the family $\cF_x$. Thus, by the down-monotonicity of $\cF_x$ it must contain the set $\{e\}$ for every $e \in N$. On the other hand, for every $e \in N \setminus \{n\}$ we have $\{e, n\} \not \in \cF$, and thus, also $\{e, n\} \not \in \cF_x$. Combining all these observations, we get:
  \ifbool{SODACamera}
  {
  \begin{align*}	
		\Pr[I \cup &\{n\} \in \mathcal{F}_x \;\;\; \forall \;I\subseteq R(x), I\in \mathcal{F}_x] \\
		&\leq
		\Pr[e \not \in R(x) \;\;\; \forall \;e \in N \setminus \{n\}] \\
		&=
		(1-b)^{n-1}
		<
		c
		\enspace,
  \end{align*}
  }
  {
	\[
		\Pr[I \cup \{n\} \in \mathcal{F}_x \;\;\; \forall \;I\subseteq R(x), I\in \mathcal{F}_x]
		\leq
		\Pr[e \not \in R(x) \;\;\; \forall \;e \in N \setminus \{n\}]
		=
		(1-b)^{n-1}
		<
		c
		\enspace,
	\]
}
	which contradicts the $(b, c)$-selectability of the assumed OCRS.
\end{proof}

The next theorem shows that randomized greedy OCRSs can do much better. Some of the ideas used by this theorem can be traced back to an offline CRS presented by~\cite{chekuri_2014_submodular} for knapsack constraints.

\begin{theorem}
  For every $b \in [0,1/2]$, there exists a $\left(b, \frac{1-2b}{2 - 2b}\right)$-selectable
  randomized greedy OCRS for any polytope $P$ defined by
  a knapsack constraint.
  \label{thm:ocrsknapsack}
\end{theorem}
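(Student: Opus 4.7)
The plan is to follow the template of the offline knapsack CRSs of Chekuri, Vondr\'ak, and Zenklusen~\cite{chekuri_2014_submodular}, adapting their small/large split to the online setting. First I would partition the ground set into the \emph{small} elements $S = \{e : s_e \leq 1/2\}$ and the \emph{large} elements $L = N \setminus S$. Then I would define the randomized greedy OCRS by flipping an initial biased coin that commits the algorithm to one of two modes: with probability $p$ (to be determined) the \emph{small mode} uses $\cF_x = \{I \subseteq S : s(I) \leq 1\}$ and greedily accepts small active elements fitting in the residual capacity, ignoring all large arrivals; with probability $1 - p$ the \emph{large mode} uses $\cF_x = \{\varnothing\} \cup \{\{g\} : g \in L\}$ and accepts only the first active large element. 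Both families are down-closed and contained in the knapsack feasibility family, so the construction is a valid randomized greedy OCRS.

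Next I would bound the selectability of each element in its ``own'' mode. For $e \in S$ in the small mode, I would observe that non-selectability requires a feasible subset of $R(x) \cap S \setminus \{e\}$ of total size in $(1 - s_e, 1]$; Markov's inequality applied to $s(R(x) \cap S \setminus \{e\})$, combined with $\sum_{g \in S} s_g x_g \leq b$, yields a conditional selectability of at least $1 - (b - s_e x_e)/(1 - s_e)$. For $e \in L$ in the large mode, non-selectability reduces to the event that some other large element is active; the constraint $\sum_{g \in L} s_g x_g \leq b$ together with $s_g > 1/2$ on $L$ gives $\sum_{g \in L \setminus \{e\}} x_g < 2(b - s_e x_e)$, and the Weierstrass product inequality then yields a conditional selectability of at least $1 - 2(b - s_e x_e)$.

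The hard part will be to choose $p$ (and possibly further refine the two modes) so that the overall selectability reaches the claimed $c = (1-2b)/(2-2b)$ uniformly over $e$, rather than the naive $(1-2b)/2$ that results from setting $p = 1/2$ and taking worst-case per-mode Markov bounds. The target $c$ satisfies the identity $(1-b)(1-c) = 1/2$, which suggests that an extra multiplicative factor of $1/(1-b)$ beyond the naive bound must be harvested from the analysis. I would attempt to harvest it either by exploiting the per-element slacks $s_e x_e$ in the Markov estimates (which strictly improve the bounds away from the boundary $s_e = 1/2$, $x_e = 0$), or by introducing in the small mode a randomized reservation of one ``slot'' for a large element, which would allow the two modes to share selectability mass without paying the full factor of $1/2$ of a clean mode split. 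Verifying that these per-element improvements aggregate into a uniform $c$-selectability guarantee for every $e \in N$ simultaneously, and choosing the resulting $p$ (possibly depending on the distribution of $x$ across $S$ and $L$) to reach exactly the ratio $(1-2b)/(2-2b)$, is the step I expect to require the most care.
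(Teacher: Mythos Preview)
Your proposal follows the paper's approach almost exactly: the same small/large split at threshold $1/2$, the same two greedy modes, and the same Markov/union-bound estimates within each mode. The step you are missing is the concrete choice of the biasing probability, which you correctly suspect should depend on $x$. Let $\beta = \sum_{e \in L} s_e x_e \in [0,b]$ be the fractional knapsack mass carried by large elements. The paper takes the probability of entering the large mode to be
\[
p \;=\; \frac{1 - 2b + 2\beta}{2 - 2b}\,,
\]
so that $1-p = \dfrac{1-2\beta}{2-2b}$. With the uniform per-mode bounds (large-mode selectability $\geq 1 - 2\beta$ via the union bound on $L$, small-mode selectability $\geq 1 - 2(b-\beta)$ via Markov with threshold $1/2$ and mean at most $b-\beta$), both products collapse to the same quantity
\[
p\,(1-2\beta) \;=\; (1-p)\bigl(1-2(b-\beta)\bigr) \;=\; \frac{(1-2b+2\beta)(1-2\beta)}{2-2b}\;\geq\;\frac{1-2b}{2-2b},
\]
the last inequality because $(1-2b+2\beta)(1-2\beta) - (1-2b) = 4\beta(b-\beta)\geq 0$.

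Your two suggested refinements are detours. The per-element slacks $s_e x_e$ cannot be balanced by a single mixing probability; the right aggregate is $\beta$, not the individual $s_e x_e$. And no reservation mechanism is needed. Also note that for the small mode the paper uses the uniform threshold $1/2$ (not your element-dependent $1-s_e$) and the mean bound $b-\beta$ (not $b - s_e x_e$); this trades a little per-element sharpness for a bound that is constant across all small elements, which is exactly what lets the single choice of $p$ above equalize the two sides.
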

\begin{proof}
  Let $x \in bP$ be the input point to the OCRS, and let $N_{\mbox{\scriptsize big}}  = \{e\in N \mid s_e > 1/2\}$ be the subset of elements that are big. We use $b_{\mbox{\scriptsize big}}$ to denote the total part of the knapsack occupied by big elements in the fractional solution $x$. Formally,
	\[
		b_{\mbox{\scriptsize big}}
		=
		\sum_{e \in N_{\mbox{\scriptsize big}}} s_e x_e
		\enspace.
	\]
	Observe that $b_{\mbox{\scriptsize big}}$ is always within the range $[0, b]$. The randomized greedy OCRS we use is defined as follows. With probability $p_{\mbox{\scriptsize big}}$ accept greedily the elements of $N_{\mbox{\scriptsize big}}$ while respecting
  the knapsack inequality, where $p_{\mbox{\scriptsize big}}$ is the probability
	\[
		p_{\mbox{\scriptsize big}}
		=
		\frac{1 - 2b + 2b_{\mbox{\scriptsize big}}}{2 - 2b}
		\enspace.
	\]
	With the remaining probability accept greedily the
  small elements of $N \setminus N_{\mbox{\scriptsize big}}$ while respecting the
  knapsack inequality. It is easy to see that this OCRS is indeed a randomized greedy OCRS.

  We continue to analyze the selectability of the above OCRS. Observe that for any big element $e' \in N_{\mbox{\scriptsize big}}$:
  \ifbool{SODACamera}
  {
	\begin{multline} \label{eq:big_knapsack_simplification}
    \Pr[I \cup  \{e'\} \in \mathcal{F}_x \;\;\; \forall \;I\subseteq R(x),I\in \mathcal{F}_x] \\
		 \geq
		\Pr\left[\{e'\} \in \mathcal{F}_x, {\textstyle \sum_{e \in R(x) \cap N_{\mbox{\scriptsize big}}}} s_e \leq 1 - s_{e'}\right]
		\enspace.
	\end{multline}
  }
  {
	\begin{equation} \label{eq:big_knapsack_simplification}
		\Pr[ I \cup \{e'\} \in \mathcal{F}_x \;\;\; \forall \;I\subseteq R(x),I\in \mathcal{F}_x]
		\geq
		\Pr\left[\{e'\} \in \mathcal{F}_x, {\textstyle \sum_{e \in R(x) \cap N_{\mbox{\scriptsize big}}}} s_e \leq 1 - s_{e'}\right]
		\enspace.
	\end{equation}
  }
	The event $\{e'\} \in \mathcal{F}_x$ is simply the event that the OCRS decides to accept big elements. Given that this event occurs, the condition $\sum_{e \in R(x) \cap N_{\mbox{\scriptsize big}}} s_e \leq 1 - s_{e'}$ guarantees that for every subset $I$ of $R(x) \cap N_{\mbox{\scriptsize big}}$ one can add the element $e'$ without violating the knapsack constraint. We can lower bound~\eqref{eq:big_knapsack_simplification} as follows.
  \ifbool{SODACamera}
  {
	\begin{align*}
		\Pr &\left[\{e'\} \in \mathcal{F}_x, {\textstyle \sum_{e \in R(x) \cap N_{\mbox{\scriptsize big}}}} s_e \leq 1 - s_{e'}\right] \\
		& =
    p_{\mbox{\scriptsize big}} \cdot \Pr[R(x) \cap (N_{\mbox{\scriptsize big}} \setminus \{e'\}) = \varnothing]\\
    &\geq 
		\frac{1 - 2b + 2b_{\mbox{\scriptsize big}}}{2 - 2b} \cdot (1- x(N_{\mbox{\scriptsize big}})) \\
		&\geq
		\frac{1 - 2b + 2b_{\mbox{\scriptsize big}}}{2 - 2b} \cdot (1 - 2b_{\mbox{\scriptsize big}})
		\geq
		\frac{1 - 2b}{2 - 2b}
		\enspace,
  \end{align*}
  }
  {
	\begin{align*}
		&
		\Pr\left[\{e'\} \in \mathcal{F}_x, {\textstyle \sum_{e \in R(x) \cap N_{\mbox{\scriptsize big}}}} s_e \leq 1 - s_{e'}\right]
		=
    p_{\mbox{\scriptsize big}} \cdot \Pr[R(x) \cap (N_{\mbox{\scriptsize big}} \setminus \{e'\}) = \varnothing]\\
    \geq{} &
		\frac{1 - 2b + 2b_{\mbox{\scriptsize big}}}{2 - 2b} \cdot (1- x(N_{\mbox{\scriptsize big}}))
		\geq
		\frac{1 - 2b + 2b_{\mbox{\scriptsize big}}}{2 - 2b} \cdot (1 - 2b_{\mbox{\scriptsize big}})
		\geq
		\frac{1 - 2b}{2 - 2b}
		\enspace,
  \end{align*}
}
	where the first inequality follows from the union bound and the second inequality uses the fact that $s_e > 1/2$ for every $e \in N_{\mbox{\scriptsize big}}$. Similarly, for an element $e\not\in N_{\mbox{\scriptsize big}}$ we have:
  \ifbool{SODACamera}
  {
  \begin{align*}
		\Pr[& I \cup \{e'\} \in \mathcal{F}_x \;\;\; \forall \;I\subseteq R(x),I\in \mathcal{F}_x] \\
		& \geq
		\Pr\left[\{e'\} \in \mathcal{F}_x, {\textstyle \sum_{e \in R(x) \setminus N_{\mbox{\scriptsize big}}}} s_e \leq 1 - s_{e'}\right]\\
		& \geq{} 
    (1 - p_{\mbox{\scriptsize big}}) \cdot \Pr[ s(R(x) \setminus (N_{\mbox{\scriptsize big}} \cup \{e'\})) \leq 1/2] \\
		& \geq
		\left(1 - \frac{1 - 2b + 2b_{\mbox{\scriptsize big}}}{2 - 2b}\right) \cdot \left(1 - \frac{b - b_{\mbox{\scriptsize big}}}{1/2}\right)\\
		& ={} 
		\frac{1 - 2b_{\mbox{\scriptsize big}}}{2 - 2b} \cdot (1 - 2b + 2b_{\mbox{\scriptsize big}})
		\geq
		\frac{1 - 2b}{2 - 2b}
		\enspace,
  \end{align*}
  }
  {
  \begin{align*}
		&
		\Pr[ I \cup \{e'\} \in \mathcal{F}_x \;\;\; \forall \;I\subseteq R(x),I\in \mathcal{F}_x]
		\geq
		\Pr\left[\{e'\} \in \mathcal{F}_x, {\textstyle \sum_{e \in R(x) \setminus N_{\mbox{\scriptsize big}}}} s_e \leq 1 - s_{e'}\right]\\
		\geq{} &
    (1 - p_{\mbox{\scriptsize big}}) \cdot \Pr[ s(R(x) \setminus (N_{\mbox{\scriptsize big}} \cup \{e'\})) \leq 1/2]
		\geq
		\left(1 - \frac{1 - 2b + 2b_{\mbox{\scriptsize big}}}{2 - 2b}\right) \cdot \left(1 - \frac{b - b_{\mbox{\scriptsize big}}}{1/2}\right)\\
		={} &
		\frac{1 - 2b_{\mbox{\scriptsize big}}}{2 - 2b} \cdot (1 - 2b + 2b_{\mbox{\scriptsize big}})
		\geq
		\frac{1 - 2b}{2 - 2b}
		\enspace,
  \end{align*}
}
	where the third inequality follows from Markov's inequality.
\end{proof}

\subsection{Combining constraints\ifbool{SODACamera}{.}{}} \label{ssc:combine}

Like offline CRSs (see~\cite{chekuri_2014_submodular}), greedy OCRSs can be combined to form greedy OCRSs for more involved constraints. 


\begin{definition}
Given two greedy OCRSs $\pi^1$ and $\pi^2$ for polytopes $P_1$ and $P_2$, the
\emph{combination} of $\pi^1$ and $\pi^2$ is a greedy OCRS $\pi$ for the
polytope $P = P_1 \cap P_2$. For every input $x \in P$, $\pi$ is defined by the
down-closed family $\cF_{\pi, x} = \cF_{\pi^1, x} \cap \cF_{\pi^2, x}$.
\end{definition}

Theorem~\ref{thm:combineOCRSs} is an immediate implication of the next lemma.

\begin{lemma}\label{lem:combineOCRSs}
Let $\pi^1$ and $\pi^2$ be $(b,c_1)$-selectable and $(b, c_2)$-selectable greedy OCRSs, respectively. The combination of $\pi^1$ and $\pi^2$ is $(b,c_1\cdot c_2)$-selectable.
\end{lemma}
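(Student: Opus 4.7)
The plan is to reduce the selectability of the combination $\pi$ to that of $\pi^1$ and $\pi^2$ via a set inclusion, and then lower bound the probability of the joint event by a correlation argument. Fix $x\in bP$ and $e\in N$; let $G$ be the selectability event for $\pi$ at $e$, and let $G_1, G_2$ be the analogous events for $\pi^1$ and $\pi^2$. I would first observe the inclusion $G_1\cap G_2\subseteq G$: if both $G_1$ and $G_2$ hold and $I\subseteq R(x)$ lies in $\cF_{\pi,x}=\cF_{\pi^1,x}\cap \cF_{\pi^2,x}$, then $I\in \cF_{\pi^i,x}$ for each $i$, so each event yields $I\cup\{e\}\in \cF_{\pi^i,x}$, and intersecting gives $I\cup\{e\}\in \cF_{\pi,x}$. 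Hence $\Pr[G]\ge \Pr[G_1\cap G_2]$, and it remains to show $\Pr[G_1\cap G_2]\ge c_1 c_2$.

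For this, I would exploit three independent sources of randomness: the active set $R(x)$, which is a product of Bernoullis; the internal randomness $\omega_1$ that $\pi^1$ uses to sample $\cF_{\pi^1,x}$; and the analogous $\omega_2$ for $\pi^2$ (assumed independent of $\omega_1$, since the two schemes are run in parallel). Once we condition on $(\omega_1,\omega_2)$, the events $G_1$ and $G_2$ become deterministic functions of $R(x)$, and each is \emph{monotone decreasing} in $R(x)$ under the coordinatewise order: enlarging $R(x)$ can only add more subsets $I$ whose extension by $e$ must be checked, so $G_i$ becomes strictly harder to satisfy. The Harris--FKG inequality for the product measure $R(x)$ therefore yields
\[
\Pr[G_1\cap G_2\mid \omega_1,\omega_2]\;\ge\;\Pr[G_1\mid \omega_1,\omega_2]\cdot \Pr[G_2\mid \omega_1,\omega_2]\;=\;\Pr[G_1\mid \omega_1]\cdot \Pr[G_2\mid \omega_2],
\]
where the equality uses that, after averaging over $R(x)$, the event $G_i$ depends on $\omega_i$ alone. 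Taking expectation over the independent pair $(\omega_1,\omega_2)$ and using the $(b,c_i)$-selectability of $\pi^i$ gives
\[
\Pr[G_1\cap G_2]\;\ge\;\E_{\omega_1}\!\left[\Pr[G_1\mid \omega_1]\right]\cdot \E_{\omega_2}\!\left[\Pr[G_2\mid \omega_2]\right]\;=\;\Pr[G_1]\cdot \Pr[G_2]\;\ge\; c_1 c_2,
\]
which together with the earlier inclusion proves the lemma.

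The main obstacle is justifying the positive correlation of $G_1$ and $G_2$: they are not independent, because they share the randomness of $R(x)$, but each is a monotone function of that common product measure, and once the internal randomness is conditioned out the Harris--FKG inequality is precisely what is needed. Everything else—the set inclusion $G_1\cap G_2\subseteq G$, the conditional independence after fixing $(\omega_1,\omega_2)$, and the final averaging—is routine. I also note that if $\pi^1$ or $\pi^2$ is deterministic, the corresponding $\omega_i$ is degenerate and the same argument (with that conditioning step being trivial) still applies.
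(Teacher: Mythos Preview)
Your proof is correct and follows essentially the same approach as the paper: reduce to the two individual selectability events via a set inclusion, then use that each event is monotone decreasing in $R(x)$ and apply the FKG inequality on the product measure. The only notable difference is presentational: you explicitly condition on the internal randomness $(\omega_1,\omega_2)$ of the two OCRSs before applying FKG and then average it out, whereas the paper's write-up leaves this conditioning implicit; your treatment is if anything a bit more careful on this point.
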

\begin{proof}
Let $P_1$ and $P_2$ be the polytopes of $\pi^1$ and $\pi^2$, respectively. Additionally, let $x \in b(P_1 \cap P_2)$, and let $e$ be an arbitrary element of $N$. We need to prove that:
\ifbool{SODACamera}
{
  \begin{multline*}
	\Pr[I \cup \{e\} \in \cF_{\pi^1, x} \cap \cF_{\pi^2, x} \\ \forall \;I\subseteq R(x),  I\in \cF_{\pi^1, x} \cap \cF_{\pi^2, x}] 
	\geq
	c_1 \cdot c_2
	\enspace.
\end{multline*}
}
{
\[
	\Pr[ I \cup \{e\} \in \cF_{\pi^1, x} \cap \cF_{\pi^2, x} \;\;\; \forall \;I\subseteq R(x), I\in \cF_{\pi^1, x} \cap \cF_{\pi^2, x}]
	\geq
	c_1 \cdot c_2
	\enspace.
\end{equation*}
}
For ease of notation, let us denote by $\chi_e(A, \cF, \cF')$ an indicator for the event that $I \cup \{e\} \in \cF$ for every set $I\subseteq A$ obeying $I\in \cF'$. Using this notation, the inequality that we need to prove becomes: $\Pr[\chi_e(R(x), \cF_{\pi^1, x} \cap \cF_{\pi^2, x}, \cF_{\pi^1, x} \cap \cF_{\pi^2, x})] \geq c_1 \cdot c_2$.

On the other hand, observe that $x \in b(P_1 \cap P_2) \subseteq bP_1$. Hence, the $(b, c_1)$-selectability of $\pi^1$ implies that:
\ifbool{SODACamera}
{
\begin{align*}
	\Pr&[\chi_e(R(x), \cF_{\pi^1, x}, \cF_{\pi^1, x} \cap \cF_{\pi^2, x})] \\
	& \geq{}
	\Pr[\chi_e(R(x), \cF_{\pi^1, x}, \cF_{\pi^1, x})]\\
	& ={} 
	\Pr[ I \cup \{e\} \in \cF_{\pi^1, x} \;\;\; \forall \;I\subseteq R(x), I\in \cF_{\pi^1, x}] \\
	& \geq
	c_1
	\enspace.
\end{align*}
}
{
\begin{align*}
	\Pr[\chi_e(R(x), \cF_{\pi^1, x}, \cF_{\pi^1, x} \cap \cF_{\pi^2, x})]
	\geq{} &
	\Pr[\chi_e(R(x), \cF_{\pi^1, x}, \cF_{\pi^1, x})]\\
	={} &
	\Pr[ I \cup \{e\} \in \cF_{\pi^1, x} \;\;\; \forall \;I\subseteq R(x), I\in \cF_{\pi^1, x}]
	\geq
	c_1
	\enspace.
\end{align*}
}
where the first inequality follows since $\chi_e$ is a non-increasing function of its third argument (when the other two arguments are fixed). Similarly, we also get: $\Pr[\chi_e(R(x), \cF_{\pi^2, x}, \cF_{\pi^1, x} \cap \cF_{\pi^2, x})] \geq c_2$. Next, observe that $\chi_e$ is also non-increasing in its first argument (when the other two arguments are fixed). Hence, if we let $R$ be a random set distributed like $R(x)$, then, by the FKG inequality:
\ifbool{SODACamera}
{
\begin{align*}
	&\Pr[\chi_e(R(x), \cF_{\pi^1, x} \cap \cF_{\pi^2, x}, \cF_{\pi^1, x} \cap \cF_{\pi^2, x})] \\
   & =  \Pr[\chi_e(R, \cF_{\pi^1, x}, \cF_{\pi^1, x} \cap \cF_{\pi^2, x}) \\ & \qquad  \cdot \chi_e(R, \cF_{\pi^2, x}, \cF_{\pi^1, x} \cap \cF_{\pi^2, x})] \\
& 	\geq{} 
	\Pr[\chi_e(R(x), \cF_{\pi^1, x}, \cF_{\pi^1, x} \cap \cF_{\pi^2, x})] \\ & \qquad \cdot \Pr[\chi_e(R(x), \cF_{\pi^2, x}, \cF_{\pi^1, x} \cap \cF_{\pi^2, x})] \\
	& \geq
	c_1 \cdot c_2
	\enspace.
\end{align*}
}
{
\begin{align*}
	&
	\Pr[\chi_e(R(x), \cF_{\pi^1, x} \cap \cF_{\pi^2, x}, \cF_{\pi^1, x} \cap \cF_{\pi^2, x})]\\
	={} &
	\Pr[\chi_e(R, \cF_{\pi^1, x}, \cF_{\pi^1, x} \cap \cF_{\pi^2, x}) \cdot \chi_e(R, \cF_{\pi^2, x}, \cF_{\pi^1, x} \cap \cF_{\pi^2, x})]\\
	\geq{} &
	\Pr[\chi_e(R(x), \cF_{\pi^1, x}, \cF_{\pi^1, x} \cap \cF_{\pi^2, x})] \cdot \Pr[\chi_e(R(x), \cF_{\pi^2, x}, \cF_{\pi^1, x} \cap \cF_{\pi^2, x})]
	\geq
	c_1 \cdot c_2
	\enspace.
  \ifbool{SODACamera}{}{\qedhere}
\end{align*}
}
\end{proof}
}

\ifbool{shortVersion}{}{
\section{From Selectability to Approximation} \label{sec:selectability_to_approximation}

In this section we prove Theorems~\ref{thm:submodular_basic} and~\ref{thm:non_negative_special} which study the relation between the value of a vector $x$ and the the expected value of the output produced by an OCRS given this vector as input. Following is a restatement of Theorem~\ref{thm:submodular_basic}. Unlike the original statement of the theorem, this restatement uses the notation $S(p)$, where $S$ is a set and $p$ is a probability, to denote a random set containing every element $e \in S$ with probability $p$, independently. Observe that $S(p)$ has the same distribution as $R(p \cdot \characteristic_S)$.

\begin{reptheorem}{thm:submodular_basic}
Given a non-negative monotone submodular function $f\colon 2^N \to \bR_{\geq 0}$ and a $(b, c)$-selectable greedy OCRS for a polytope $P$, applying the greedy OCRS to an input $x \in bP$ results in a random set $S$ satisfying $\bE[f(S)] \geq c \cdot F(x)$, where $F$ is the multilinear extension of $f$. Moreover, even if $f$ is not monotone, $\bE[f(S(1/2))] \geq (c/4) \cdot F(x)$, where the random decisions used to calculate $S(1/2)$ are considered part of the algorithm, and thus, known to the almighty adversary.
\end{reptheorem}

To prove Theorem~\ref{thm:submodular_basic} we need to define some concepts regarding \emph{offline} CRSs. Recall that a CRS for a polytope $P$ is a (possibly random) function $\pi\colon 2^N \to 2^N$ that depends on an input vector $x \in P$, and for every set $S \subseteq N$ returns a subset $\pi(S) \subseteq S$ that obeys the polytope $P$ (\ie, $\characteristic_{\pi(S)} \in P$). 

\begin{definition}
A CRS $\pi$ for a polytope $P$ is:
\begin{itemize}\setlength{\itemsep}{0pt}
	\item \emph{$(b, c)$-balanced} some for $b, c \in [0, 1]$ if $\Pr[e \in \pi(R(x)) \mid e \in R(x)] \geq c \cdot x_e$ whenever the input vector $x$ belongs to $bP$ and $x_e > 0$.
	\item \emph{monotone} if $\Pr[e \in \pi(S_1)] \geq \Pr[e \in \pi(S_2)]$ whenever $e \in S_1 \subseteq S_2$.
\end{itemize}
\end{definition}

We now define and analyze an interesting notion that allows us to connect the world of OCRSs with that of CRSs.

\begin{definition}[characteristic CRS]
The \emph{characteristic CRS} $\bar{\pi}$ of a greedy OCRS $\pi$ for a polytope $P$ is a CRS for the same polytope $P$. It is defined for an input $x \in P$ and a set $A \subseteq N$ by $\bar{\pi}(A) = \{e \in A \mid I \cup \{e\} \in \cF_{\pi, x}\;\; \forall I \subseteq A, I \in \cF_{\pi, x}\}$.
\end{definition}

The following observation shows that $\bar{\pi}(A)$ obeys the polytope $P$ and $\bar{\pi}(A) \subseteq A$. Thus, the characteristic CRS $\bar{\pi}$ is a true CRS for the polytope $P$.

\begin{observation} \label{obs:subset}
For every set $A \subseteq N$ and a characteristic CRS $\bar{\pi}$ of a greedy OCRS $\pi$, the set $\bar{\pi}(A)$ is always a subset of the elements selected by $\pi$ when the active elements are the elements of $A$.
\end{observation}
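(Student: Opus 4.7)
The plan is to verify the claim directly from the definitions of a greedy OCRS and of the characteristic CRS $\bar{\pi}$. Fix a set $A \subseteq N$ and imagine running $\pi$ on the realization $R(x) = A$. Let $I_t$ denote the set of elements already selected by $\pi$ just before the $t$-th arrival is processed. Two invariants hold throughout the execution: (i) $I_t \subseteq A$, because $\pi$ only selects active elements, and (ii) $I_t \in \cF_{\pi,x}$, because $\pi$ starts from $\varnothing \in \cF_{\pi,x}$ (by down-closedness) and the greedy rule only adds an element when doing so keeps the selected set in $\cF_{\pi,x}$.

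Now consider an arbitrary element $e \in \bar{\pi}(A)$. By the definition of the characteristic CRS, $J \cup \{e\} \in \cF_{\pi,x}$ for every $J \subseteq A$ with $J \in \cF_{\pi,x}$. At the moment $e$ is revealed, the set $I$ of already-selected elements satisfies both $I \subseteq A$ and $I \in \cF_{\pi,x}$ by the invariants above, so the definition of $\bar{\pi}(A)$ applies to $I$ and yields $I \cup \{e\} \in \cF_{\pi,x}$. The greedy rule of $\pi$ therefore selects $e$, which is exactly what we needed to show.

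There is no real obstacle here — the statement is essentially a reformulation of the definitions once one notes the two invariants on $I_t$. The only point to be a little careful about is to observe that the second invariant requires down-closedness of $\cF_{\pi,x}$ (to get $\varnothing \in \cF_{\pi,x}$), which is guaranteed by the definition of a greedy OCRS.
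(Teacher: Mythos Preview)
Your proof is correct and follows essentially the same approach as the paper's own proof: both fix $e\in\bar{\pi}(A)$, consider the set of elements already selected by $\pi$ just before $e$ arrives, observe that this set lies in $\cF_{\pi,x}$ (and is contained in $A$), and then apply the defining property of $\bar{\pi}(A)$ to conclude that the greedy rule accepts $e$. Your version is slightly more explicit in spelling out the two invariants and the role of down-closedness, but the argument is the same.
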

\begin{proof}
Fix an element $e \in \bar{\pi}(A)$, and let $T$ be the set of elements selected by the greedy OCRS $\pi$ immediately before $e$ reveals whether it is active. The definition of $\bar{\pi}(A)$ guarantees that $T \cup \{e\} \in \cF_{\pi, x}$ since $T \in \cF_{\pi, x}$, and thus, $e$ is accepted by $\pi$.
\end{proof}

\begin{lemma} \label{lem:characteristic}
The characteristic CRS $\bar{\pi}$ of a $(b, c)$-selectable greedy OCRS $\pi$ is $(b, c)$-balanced and monotone.
\end{lemma}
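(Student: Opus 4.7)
The plan is to read both properties directly off the definition of $\bar{\pi}(A)$ and reduce them to the selectability hypothesis on $\pi$. The key reformulation is that, for a fixed realization of $\cF_{\pi,x}$ (which may be random in the randomized case), the condition ``$e\in\bar{\pi}(A)$'' is the conjunction of $e\in A$ with the event that $e$ is selectable with respect to $A$ in the sense of Definition~\ref{def:cSelectable}, with $A$ playing the role of $R(x)$.

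For the balancedness part, I would first observe that since $\cF_{\pi,x}$ is down-closed, any $I\subseteq R(x)$ containing $e$ already satisfies $I\cup\{e\}=I\in\cF_{\pi,x}$ trivially, so the selectability condition for $e$ depends only on those $I\subseteq R(x)\setminus\{e\}$ that lie in $\cF_{\pi,x}$. In particular, the event ``$e$ is selectable for $R(x)$'' is measurable with respect to $R(x)\setminus\{e\}$ (together with the independent randomness of $\cF_{\pi,x}$), and is therefore independent of the event $\{e\in R(x)\}$. Consequently, for $x\in bP$ with $x_e>0$, the chain
\begin{equation*}
\Pr[e\in\bar{\pi}(R(x))\mid e\in R(x)]
= \Pr[\,I\cup\{e\}\in\cF_{\pi,x}\ \forall I\subseteq R(x),\,I\in\cF_{\pi,x}\,]
\geq c
\end{equation*}
follows immediately from $(b,c)$-selectability of $\pi$, which is exactly the balancedness condition (equivalently, $\Pr[e\in\bar{\pi}(R(x))]\geq c\cdot x_e$).

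For monotonicity, I would argue by pointwise containment of events. Fix $e\in S_1\subseteq S_2$ and any realization of $\cF_{\pi,x}$. Every $I\subseteq S_1$ with $I\in\cF_{\pi,x}$ is also a subset of $S_2$ with $I\in\cF_{\pi,x}$, so the quantifier defining $e\in\bar{\pi}(S_2)$ is at least as strong as the one defining $e\in\bar{\pi}(S_1)$. Hence $\{e\in\bar{\pi}(S_2)\}\subseteq\{e\in\bar{\pi}(S_1)\}$ pointwise in the randomness of $\cF_{\pi,x}$, and taking probabilities gives $\Pr[e\in\bar{\pi}(S_1)]\geq\Pr[e\in\bar{\pi}(S_2)]$.

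I do not expect a real obstacle here; both properties are essentially tautological consequences of the definitions once one notes (i) the down-closedness of $\cF_{\pi,x}$ making the selectability event independent of $\{e\in R(x)\}$, and (ii) the fact that enlarging $A$ in the definition of $\bar{\pi}(A)$ only strengthens the universal quantifier. The only care needed is to phrase the argument so that it covers the randomized greedy OCRS case uniformly, which is handled by fixing a realization of $\cF_{\pi,x}$ and then integrating.
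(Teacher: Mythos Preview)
Your proposal is correct and follows essentially the same approach as the paper: for balancedness you drop the conditioning on $\{e\in R(x)\}$ by noting the selectability event depends only on $R(x)\setminus\{e\}$ (and the independent randomness in $\cF_{\pi,x}$), and for monotonicity you fix a realization of $\cF_{\pi,x}$ and observe that $e\in\bar{\pi}(S_2)\Rightarrow e\in\bar{\pi}(S_1)$ pointwise. One cosmetic remark: the independence step does not actually require down-closedness of $\cF_{\pi,x}$---the quantifier already ranges only over $I\in\cF_{\pi,x}$, so any such $I$ containing $e$ satisfies $I\cup\{e\}=I\in\cF_{\pi,x}$ tautologically, with no appeal to down-closedness needed.
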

\begin{proof}
Let $P$ denote the polytope of the greedy OCRS $\pi$ (and its characteristic CRS $\bar{\pi}$), and fix a vector $x \in bP$. Since $\pi$ is $(b, c)$-selectable, we get for a set $A$ distributed like $R(x)$ and an arbitrary element $e \in N$:
\ifbool{SODACamera}
{
\begin{align*}
	\Pr& [e \in \bar{\pi}(A) \mid e \in A] \\
	 & =
	\Pr[I \cup \{e\} \in \cF_{\pi,x} \;\;\; \forall \;I\subseteq A, I\in \mathcal{F}_{\pi,x} \mid e \in A]\\
	& =
	\Pr[I \cup \{e\} \in \cF_{\pi,x} \;\;\; \forall \;I\subseteq A, I\in \mathcal{F}_{\pi,x}]
	\geq
	c
	\enspace.
\end{align*}
}
{
\begin{align*}
	\Pr[e \in \bar{\pi}(A) \mid e \in A]
	={} &
	\Pr[I \cup \{e\} \in \cF_{\pi,x} \;\;\; \forall \;I\subseteq A, I\in \mathcal{F}_{\pi,x} \mid e \in A]\\
	={} &
	\Pr[I \cup \{e\} \in \cF_{\pi,x} \;\;\; \forall \;I\subseteq A, I\in \mathcal{F}_{\pi,x}]
	\geq
	c
	\enspace.
\end{align*}
}
The last inequality implies, by definition, that $\bar{\pi}$ is a $(b,c)$-balanced CRS.

Next, let us prove that $\bar{\pi}$ is monotone. Fix an instantiation of $\cF_{\pi, x}$, an element $e \in N$ and two sets $e \in A_1 \subseteq A_2 \subseteq N$. If $e \in \bar{\pi}(A_2)$, then we know that $I \cup \{e\} \in \cF_{\pi, x}$ for every $I\subseteq A_2, I\in \mathcal{F}_{\pi, x}$. Thus, clearly, $I \cup \{e\} \in \cF_{\pi, x}$ for every $I\subseteq A_1 \subseteq A_2, I\in \mathcal{F}_{\pi, x}$, which implies $e \in \bar{\pi}(A_1)$. In summary, we got that:
\[
	e \in \bar{\pi}(A_2)
	\Rightarrow
	e \in \bar{\pi}(A_1)
	\enspace,
\]
and thus, even when we do not fix the instantiation of $\cF_{\pi, x}$:
\[
	\Pr[e \in \bar{\pi}(A_2)] \leq \Pr[e \in \bar{\pi}(A_1)]
	\enspace,
\]
which completes the proof that $\bar{\pi}$ is monotone.
\end{proof}

To continue the proof of Theorem~\ref{thm:submodular_basic} we need to state some known results. The following lemma follows from the work of~\cite{chekuri_2014_submodular} (mostly from Theorem~1.3 in that work).

\begin{lemma} \label{lem:offline_crs}
For every given non-negative submodular function $f\colon 2^N \to \bR_{\geq 0}$, there exists a function $\eta_f \colon 2^N \to 2^N$ that always returns a subset of its argument (\ie, $\eta_f(S) \subseteq S$ for every $S \subseteq N$) having the following property. For every monotone $(b, c)$-balanced CRS $\pi$ for a polytope $P$ and input vector $x \in bP$:
\[
	\bE[f(\eta_f(\pi(R(x))))]
	\geq
	c \cdot F(x)
	\enspace,
\]
where $F(x)$ is the multilinear extension of $f$.
\end{lemma}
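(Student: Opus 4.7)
My plan is to define $\eta_f$ and establish the inequality via three ingredients: a marginal bound from $(b,c)$-balancedness, a comparison of $\pi(R(x))$ against an independent $c$-subsample of $R(x)$ that exploits the monotonicity of $\pi$, and concavity of the multilinear extension $F$ along rays from the origin. I first aim to show that the naive choice $\eta_f := \mathrm{id}$ already works; a fallback construction for non-monotone $f$ is described below.

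Set $T := \pi(R(x))$. \emph{Step one (marginals):} the $(b,c)$-balanced property gives $\Pr[e \in T] \geq c \cdot x_e$ for every $e \in N$. \emph{Step two (comparison to independent sampling):} let $\pi_c$ denote the trivial monotone CRS that independently keeps each element of its input with probability $c$, so $\pi_c(R(x)) \sim R(cx)$ and $\bE[f(\pi_c(R(x)))] = F(cx)$. The core of the proof is
\begin{equation*}
\bE[f(\pi(R(x)))] \geq F(cx).
\end{equation*}
I would prove this by revealing the elements of $N$ in an arbitrary order $e_1,\dots,e_n$ and telescoping
\begin{equation*}
f(T) = \sum_i \mathbf{1}[e_i \in T] \cdot f(e_i \mid T \cap \{e_1,\dots,e_{i-1}\}),
\end{equation*}
then taking expectations and invoking monotonicity of $\pi$ to lower-bound each $\Pr[e_i \in T \mid \text{history}]$ by the corresponding quantity under $\pi_c$, while using submodularity of $f$ to upper-bound the discrete derivatives by their counterparts inside an independent $R(cx)$. \emph{Step three (concavity):} since $f$ is non-negative submodular, $F$ is concave along the ray from $\vec 0$ to $x$, so $F(cx) \geq c F(x) + (1-c) F(\vec 0) \geq c F(x)$.

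\emph{Main obstacle.} The principal difficulty is Step two, and specifically its non-monotone case. The indicator $\mathbf{1}[e \in T]$ is increasing in $\mathbf{1}[e \in R(x)]$ but decreasing (by monotonicity of $\pi$) in the other inclusion indicators, so a naive FKG application is not available; one must combine the element-wise coupling with the monotonicity of $\pi$ to ensure the conditional acceptance probabilities are at least $c \cdot x_{e_i}$ uniformly. For monotone $f$, the telescoping-and-submodularity argument closes cleanly because every marginal $f(e_i \mid \cdot)$ is non-negative. For non-monotone $f$, these marginals can be negative, and extra care is required to prevent the negative contributions from destroying the bound; this is the subtle content of Theorem~1.3 in~\cite{chekuri_2014_submodular}. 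If the identity $\eta_f$ turns out to be insufficient in the non-monotone case, the fallback is to define $\eta_f(S) = \mathrm{argmax}_{T \subseteq S} f(T)$, the best-subset selector: the coupling argument then produces a random subset $T^\star \subseteq T$ distributed as $R(cx)$, and since $\eta_f(T) \supseteq \varnothing$ and $f(\eta_f(T)) \geq f(T^\star)$, taking expectations gives $\bE[f(\eta_f(T))] \geq F(cx) \geq c F(x)$, closing the argument.
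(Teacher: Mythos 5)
There is a genuine gap, and it sits exactly where you placed the weight of the argument. First, a remark on provenance: the paper does not prove this lemma at all; it imports it from Chekuri, Vondr\'ak and Zenklusen (Theorem~1.3 there), so the relevant comparison is to their proof. That proof establishes $\bE[f(\pi(R(x)))] \geq c \cdot \bE[f(R(x))] = c\cdot F(x)$ \emph{directly} (for monotone $f$), via the telescoping decomposition you write down followed by an FKG step: after conditioning on $e_i \in R(x)$, both the acceptance probability $\Pr[e_i \in \pi(R(x)) \mid R(x)]$ (by monotonicity of $\pi$) and the marginal $f(e_i \mid R(x) \cap \{e_1,\dots,e_{i-1}\})$ (by submodularity of $f$) are non-increasing functions of $R(x)\setminus\{e_i\}$, which is a product measure, so FKG decouples them and yields $\bE[\mathbf{1}[e_i \in \pi(R(x))]\, f(e_i \mid R(x)_{<i})] \geq c\, x_{e_i}\, \bE[f(e_i \mid R(x)_{<i})]$; summing gives $c F(x)$. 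Your ``main obstacle'' paragraph has this backwards: FKG is precisely the tool that closes the monotone case, not the thing that fails.

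The fatal step is your intermediate target $\bE[f(\pi(R(x)))] \geq F(cx)$. This is strictly stronger than what is needed (since $F(cx) \geq cF(x) + (1-c)f(\varnothing) \geq cF(x)$), and it is false. Take $N=\{1,2\}$, $f(S)=\min(|S|,1)$, $P=[0,1]^N$, $x=(1,1)$, and let $\pi$ be the scheme that returns its whole input with probability $c$ and $\varnothing$ otherwise. This $\pi$ is monotone and $(1,c)$-balanced, yet $\bE[f(\pi(R(x)))] = c$ while $F(cx)=2c-c^2 > c$. The reason your telescoping cannot reach $F(cx)$ is that the comparison of discrete derivatives points the wrong way: you would need $f(e_i \mid T_{<i}) \geq f(e_i \mid R(cx)_{<i})$, i.e.\ a coupling with $T_{<i} \subseteq R(cx)_{<i}$, and no such coupling exists (the same example shows $\pi(R(x))$ need not even contain a subset distributed as $R(cx)$, which also sinks the ``$T^\star \sim R(cx)$'' claim in your fallback). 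The fix is to abandon $F(cx)$ entirely and telescope against $c\,\bE[f(R(x))]$ as above. For non-monotone $f$ the correct $\eta_f$ is not the argmax selector but a greedy pruning (scan $N$ in a fixed order and discard elements of the input whose marginal with respect to the elements kept so far is negative); the argmax version gives you no handle on $\bE[f(\eta_f(\cdot))]$ because the purported comparison set $T^\star$ does not exist.
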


The following two known lemmata about submodular functions have been rephrased to make them fit our notation better.

\begin{lemma}[Lemma~2.2 of~\cite{Feige_2011_non-monotone}] \label{lem:sampling}
Let $g\colon 2^N \to \bR$ be a submodular function, and let $T$ be an arbitrary set $T \subseteq N$. For every random set $T_p \subseteq T$ which contains every element of $T$ with probability $p$ (not necessarily independently):
\[
	\bE[g(T_p)] \geq (1 - p) g(\varnothing) + p \cdot g(T)
	\enspace.
\]
\end{lemma}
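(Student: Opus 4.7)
The plan is to deduce the inequality from the convexity of the Lovász extension of $g$. Recall that the Lovász extension $\hat{g}\colon [0,1]^N \to \bR$ is defined by $\hat{g}(x) = \int_0^1 g(\{e \in N : x_e \geq t\})\, dt$, that it agrees with $g$ on $\{0,1\}^N$ (i.e., $\hat{g}(\characteristic_S) = g(S)$ for every $S \subseteq N$), and that a classical result of Lovász states that $\hat{g}$ is convex if and only if $g$ is submodular. This extension is especially well suited to the lemma because it handles \emph{arbitrary} (not necessarily product) distributions on $2^N$, matching the hypothesis that $T_p$ is specified only through its one-dimensional marginals.

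The first step is a direct evaluation of $\hat{g}$ at $p \cdot \characteristic_T$. The superlevel set $\{e \in N : (p \cdot \characteristic_T)_e \geq t\}$ equals $T$ for $t \in [0, p]$ and $\varnothing$ for $t \in (p, 1]$, so $\hat{g}(p \cdot \characteristic_T) = p \cdot g(T) + (1-p) \cdot g(\varnothing)$, which is exactly the right-hand side of the lemma.

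The second step applies convexity. Letting $\mu$ denote the distribution of $T_p$ on $2^T$, the marginal hypothesis is precisely $\sum_{S \subseteq T} \mu(S) \cdot \characteristic_S = p \cdot \characteristic_T$. Convexity of $\hat{g}$, together with $\hat{g}(\characteristic_S) = g(S)$, then yields $\hat{g}(p \cdot \characteristic_T) \leq \sum_S \mu(S) \cdot g(S) = \bE[g(T_p)]$, and chaining this with the identity from the previous step gives the desired inequality.

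Beyond invoking this standard machinery I do not anticipate a real obstacle; the one conceptual point worth emphasizing is that the argument is completely insensitive to correlations between the coordinates of $T_p$, which is exactly the generality required by the lemma's hypothesis.
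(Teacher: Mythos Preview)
Your proof via the Lov\'asz extension is correct: the identity $\hat{g}(p\cdot\characteristic_T)=p\,g(T)+(1-p)\,g(\varnothing)$ is immediate from the definition (the boundary $t=0$ is a measure-zero issue, harmless here), and Jensen's inequality applied to the convex $\hat{g}$ with the barycentre condition $\sum_S\mu(S)\,\characteristic_S=p\cdot\characteristic_T$ gives exactly the claimed bound.

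There is nothing to compare against in this paper: the lemma is quoted from Feige, Mirrokni, and Vondr\'ak and is not proved here. For what it is worth, your Lov\'asz-extension argument is essentially the standard one and matches the spirit of the original; an alternative elementary route orders the elements of $T$, writes $g(T_p)-g(\varnothing)$ as a telescoping sum of marginals, and uses submodularity to bound each marginal by the corresponding marginal along the full chain to $T$, but this buys nothing over your approach and is arguably less transparent about why correlations are irrelevant.
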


\begin{lemma}[Lemma~2.2 of~\cite{buchbinder_2014_cardinality}] \label{lem:max_prob_bound}
Let $g\colon 2^N \to \bR_{\geq 0}$ be a non-negative submodular function.
For every random set $N_p\subseteq N$ which contains every element
of $N$ with probability at most $p$ (not necessarily independently):
\[
	\bE[g(N_p)] \geq (1 - p)g(\varnothing)
	\enspace.
\]
\end{lemma}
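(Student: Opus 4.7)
The plan is to prove the lemma by introducing the \emph{Lov\'asz extension} $\hat g\colon [0,1]^N \to \bR$ of $g$, defined by $\hat g(x) = \int_0^1 g(\{e \in N : x_e \geq \theta\}) \, d\theta$, and exploiting the classical fact that $\hat g$ is convex whenever $g$ is submodular, together with the identity $\hat g(\characteristic_S) = g(S)$ for all $S \subseteq N$.

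First, I would let $q_e = \Pr[e \in N_p]$ denote the marginal probabilities of $N_p$, collected as $q = (q_e)_{e \in N}$; by hypothesis $q_e \leq p$ for every $e$. Writing the distribution of $N_p$ as a convex combination of point masses on characteristic vectors, so that $\bE[\characteristic_{N_p}] = q$, I would apply Jensen's inequality to the convex function $\hat g$ to conclude
\[
  \bE[g(N_p)] = \bE[\hat g(\characteristic_{N_p})] \geq \hat g\bigl(\bE[\characteristic_{N_p}]\bigr) = \hat g(q)
  \enspace.
\]

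Next, I would lower bound $\hat g(q)$ directly from its integral representation. Since $q_e \leq p$ for every $e$, the set $\{e : q_e \geq \theta\}$ is empty for all $\theta > p$, so the integrand equals $g(\varnothing)$ on $(p, 1]$. Splitting the defining integral at $\theta = p$ gives
\[
  \hat g(q)
  =
  \int_0^p g(\{e : q_e \geq \theta\}) \, d\theta + (1 - p) g(\varnothing)
  \geq (1-p) g(\varnothing)
  \enspace,
\]
where the inequality uses non-negativity of $g$. Chaining these two inequalities proves the lemma.

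The main obstacle is the second step, the Jensen-type bound relating $\bE[g(N_p)]$ to $\hat g(q)$, since it rests on the non-trivial classical fact that the Lov\'asz extension of a submodular function is convex (in fact this property characterises submodularity). Once this is invoked, the remaining argument is a short integral calculation that exploits only the assumption $q_e \leq p$ and the non-negativity of $g$; in particular, no coupling of $N_p$ with a distribution of marginals exactly $p$ is needed, which is what one would naively try by appealing to Lemma~\ref{lem:sampling} alone.
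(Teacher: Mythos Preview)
Your argument is correct. The paper itself does not supply a proof of this lemma; it simply quotes it as Lemma~2.2 of~\cite{buchbinder_2014_cardinality}. Your Lov\'asz-extension route---Jensen applied to the convex function $\hat g$ to pass from $\bE[g(N_p)]$ to $\hat g(q)$, followed by the observation that the level sets $\{e:q_e\geq\theta\}$ are empty for $\theta>p$---is exactly the proof given in the cited reference, so there is nothing to contrast.
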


We are now ready to prove Theorem~\ref{thm:submodular_basic}.
\begin{proof}[Proof of Theorem~\ref{thm:submodular_basic}]
Let $\bar{\pi}$ be the characteristic CRS of the OCRS $\pi$ we consider, and let $A \sim R(x)$ be the set of active elements. For notational convenience, let us also denote by $S'$ the set $\eta_f(\bar{\pi}(A))$. Notice that $S'$ is a subset of the set $S$ of accepted elements since $\bar{\pi}(A)$ is a subset of $S$ by Observation~\ref{obs:subset}. Lemmata~\ref{lem:characteristic} and~\ref{lem:offline_crs} imply together the inequality $\bE[f(S')] = \bE[f(\eta_f(\bar{\pi}(A)))] \geq c \cdot F(x)$. To complete the proof of the first part of the theorem, it is now enough to observe that by the monotonicity of $f$: $\bE[f(S)] \geq \bE[f(S')] \geq c \cdot F(x)$.

To prove the second part of the theorem, let us fix the set $A$ and the family $\cF_{\pi, x}$ which, for brevity, we denote by $\mathcal{F}_x$ in the rest of the proof. Observe that the set $S'$ is deterministic once $A$ and $\cF_{x}$ are fixed, and let us denote this set by $S'_{A, \cF_{x}}$. Hence, we can think of $S(1/2)$ as obtained by first calculating a set $S'_{A, \cF_{x}}(1/2)$ containing every element of $S'_{A, \cF_{x}}$ with probability $1/2$, independently, and then adding to it a random set $\Delta \subseteq N \setminus S'_{A, \cF_{x}}$. By controlling the order in which elements reveal whether they are active, the adversary can make the distribution of $\Delta$ depend on $S'_{A, \cF_{x}}(1/2)$; however, $\Delta$ is guaranteed to contain every element with probability at most $1/2$ for every given choice of $S'_{A, \cF_{x}}(1/2)$. Using this observation we get:
\ifbool{SODACamera}
{
\allowdisplaybreaks
\begin{align*}
	\bE&[f(S(1/2)) \mid A, \cF_x]
	=
	\bE[f(S'(1/2) \cup \Delta) \mid A, \cF_x]\\
	={} &
	\sum_{B \subseteq S'_{A, \cF_x}} \mspace{-18mu}
     \Pr[S'_{A, \cF_x}(1/2) = B \mid A, \cF_x] \\ &\cdot \bE[f(B \cup \Delta) \mid S'_{A, \cF_x}(1/2) = B, A, \cF_x]\\
	\geq{} &
	\sum_{B \subseteq S'_{A, \cF_x}} \mspace{-18mu}
     \Pr[S'_{A, \cF_x}(1/2) = B \mid A, \cF_x] \\ &\cdot \frac{\bE[f(B) \mid A, \cF_x]}{2} \\
	={} &
	\frac{\bE[f(S'(1/2)) \mid A, \cF_x]}{2}
	\geq{} 
	\frac{\bE[f(S') \mid A, \cF_x]}{4}
	\enspace,
\end{align*}
}
{
\begin{align*}
	\bE[f(S(1/2)) \mid A, \cF_x]
	={} &
	\bE[f(S'(1/2) \cup \Delta) \mid A, \cF_x]\\
	={} &
	\sum_{B \subseteq S'_{A, \cF_x}} \mspace{-18mu}
     \Pr[S'_{A, \cF_x}(1/2) = B \mid A, \cF_x] \cdot \bE[f(B \cup \Delta) \mid S'_{A, \cF_x}(1/2) = B, A, \cF_x]\\
	\geq{} &
	\sum_{B \subseteq S'_{A, \cF_x}} \mspace{-18mu}
     \Pr[S'_{A, \cF_x}(1/2) = B \mid A, \cF_x] \cdot \frac{\bE[f(B) \mid A, \cF_x]}{2}
	=
	\frac{\bE[f(S'(1/2)) \mid A, \cF_x]}{2}\\
	\geq{} &
	\frac{\bE[f(S') \mid A, \cF_x]}{4}
	\enspace,
\end{align*}
}
where the first inequality follows from Lemma~\ref{lem:max_prob_bound} since the function $h_B(T) = f(B \cup T)$ is non-negative and submodular for every set $B \subseteq N$, and the second inequality follows from Lemma~\ref{lem:sampling}. Taking an expectation over the possible values of the set $A$ and the family $\cF_x$, we get:
\ifbool{SODACamera}{
\begin{align*}
	\bE[f(S(1/2))]
	={} &
	\bE_{A, \cF_x}[\bE[f(S(1/2)) \mid A, \cF_x]]\\
	\geq{} &
	\bE_{A, \cF_x}\left[\frac{\bE[f(S') \mid A, \cF_x]}{4}\right]\\
  ={} &
	\frac{\bE[f(S')]}{4}
	\geq
	(c/4) \cdot F(x)
	\enspace.
  \ifbool{SODACamera}{}{\qedhere}
\end{align*}
}
{
\begin{align*}
	\bE[f(S(1/2))]
	={} &
	\bE_{A, \cF_x}[\bE[f(S(1/2)) \mid A, \cF_x]]\\
	\geq{} &
	\bE_{A, \cF_x}\left[\frac{\bE[f(S') \mid A, \cF_x]}{4}\right]
	=
	\frac{\bE[f(S')]}{4}
	\geq
	(c/4) \cdot F(x)
	\enspace.
  \ifbool{SODACamera}{}{\qedhere}
\end{align*}
}
\end{proof}

\ifbool{SODACamera}
{
  Notice that the result proved by Theorem~\ref{thm:submodular_basic} for
  non-monotone functions loses a factor of $4$ in the guarantee. To avoid that,
  we also consider online adversaries. In the full version of this paper~\cite{feldman_2015_online} we show that for such adversaries one can prove a variant of the above result that does not lose a factor $4$ for non-monotone functions.
}
{
Notice that the result proved by Theorem~\ref{thm:submodular_basic} for non-monotone functions loses a factor of $4$ in the guarantee. To avoid that, we also consider online adversaries. 
Unfortunately, we do not have an improved result for greedy OCRSs against online adversaries. Instead, we study the performance of a different family of ORCSs against such adversaries.

\begin{definition}[element-monotone OCRS]
An element-monotone OCRS is an OCRS characterized by a down-closed family $\cF_u \subseteq 2^{N \setminus \{e\}}$ for every element $e \in N$, where the families $\{\cF_e\}_{e \in N}$ can be either deterministic (a deterministic element-monotone OCRS) or taken from some joint distribution (a randomized element-monotone OCRS). Such an OCRS accepts an active element $e$ if and only if $A^{<e} \in \cF_e$, where $A^{<e}$ is the set of active elements that revealed that they are active before $e$ does so.
\end{definition}

One can observe that all the results we prove for greedy OCRSs in this paper can be easily applied also to element-monotone OCRSs with the following modified definition of $(b, c)$-selectability.
\begin{definition}[$(b,c)$-selectability for element-monotone OCRSs]
Let $b,c \in [0,1]$. An element-monotone OCRS for $P$ is $(b,c)$-selectable if for any $x\in b\cdot P$ we have
\[
	\Pr[R(x) \setminus \{e\} \in \cF_e] \geq c
  \qquad \forall e\in N
	\enspace.
\]
\end{definition}
\noindent Moreover, the greedy OCRSs we describe for specific polytopes can be converted into similar element-monotone OCRSs with the same $(b, c)$-selectability guarantee (in fact the greedy OCRS we describe for matroids is already an element-monotone OCRS without any modifications). Our objective in the rest of this section is to prove the next theorem. Notice that Theorem~\ref{thm:non_negative_special} is a special case of this theorem.

\begin{theorem} \label{thm:element_monotone_non_monotone_function}
Given an element-monotone $(b, c)$-selectable OCRS $\pi$ and a non-negative submodular function $f\colon 2^N \to \bR_{\geq 0}$, there exists an OCRS $\pi'$ that for every input vector $x \in bP$ and online adversary selects a random set $S$ such that $\bE[f(S))] \geq c \cdot F(x)$.
\end{theorem}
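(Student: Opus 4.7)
The strategy is to follow the skeleton of the proof of Theorem~\ref{thm:submodular_basic}, but replace the random ``sub-sampling with probability $1/2$'' step (which is the source of the factor-$4$ loss in the non-monotone case) with an online simulation of the offline filter $\eta_f$ of Lemma~\ref{lem:offline_crs}; this simulation is what uses the assumption that the adversary is online, not almighty.

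First I would associate to $\pi$ a \emph{characteristic CRS} $\bar\pi$, defined by $\bar\pi(A) = \{e \in A : A \setminus \{e\} \in \cF_e\}$, using the same internal randomness $\{\cF_e\}_{e \in N}$ as $\pi$. The down-closedness of each $\cF_e$ makes $\bar\pi$ monotone: if $e \in A_1 \subseteq A_2$ and $A_2 \setminus \{e\} \in \cF_e$, then $A_1 \setminus \{e\} \subseteq A_2 \setminus \{e\}$ is also in $\cF_e$. The element-monotone $(b,c)$-selectability of $\pi$ translates directly to $(b,c)$-balancedness: $\Pr[e \in \bar\pi(R(x)) \mid e \in R(x)] = \Pr[R(x) \setminus \{e\} \in \cF_e] \geq c$. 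Feasibility $\characteristic_{\bar\pi(A)} \in P$ is inherited from $\pi$, since $A^{<e} \subseteq A \setminus \{e\}$ together with down-closedness of $\cF_e$ implies that $\bar\pi(A)$ is always a subset of the set that $\pi$ would output on $A$ under any arrival order $\tau$. Applying Lemma~\ref{lem:offline_crs} to the monotone $(b,c)$-balanced CRS $\bar\pi$ then yields a filter $\eta_f$ with $\E[f(\eta_f(\bar\pi(R(x))))] \geq c \cdot F(x)$.

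The remaining step is to realize $\eta_f \circ \bar\pi$ as an OCRS $\pi'$ against an online adversary. The crucial observation is that the random set $\eta_f(\bar\pi(R(x)))$ depends only on $R(x)$ and on internal randomness (the families $\{\cF_e\}$ and the randomness inside $\eta_f$), and \emph{not} on the arrival order $\tau$; against an online adversary, $\tau$ depends step by step on past observations only, hence is independent of the as-yet-unrevealed randomness. I would pre-sample upfront all the internal randomness together with i.i.d.\ uniform variables $\{U_e\}_{e \in N}$ on $[0,1]$, coupling them to $R(x)$ via $e \in R(x) \iff U_e \leq x_e$. When $e$ arrives in the adversarial order $\tau$, $U_e$ is revealed; $\pi'$ then commits to a decision that is consistent with a single realization of $\eta_f(\bar\pi(R(x)))$, exploiting the fact that the adversary's choice of the next element cannot depend on unrevealed $U_{e'}$'s. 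Feasibility of $\pi'$'s output is automatic, since it is by construction a subset of $\bar\pi(R(x))$ and hence of $\pi$'s feasible output.

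The main obstacle is to execute this last step rigorously: producing online a random set whose full joint distribution equals that of $\eta_f(\bar\pi(R(x)))$, not merely one with matching marginals. This requires a careful coupling that uses the pre-sampled coins to decide each $e$ at arrival time without ``peeking'' at future $U_{e'}$'s, and is precisely the step that would fail against an almighty adversary (who sees all the $U_e$'s from the outset). I expect the detailed implementation to rely on either a martingale/stopping-time argument over the filtration induced by the revealed $U_e$'s, or on constructing an explicit online variant of $\eta_f$ that processes elements in the arrival order induced by the online adversary while mimicking the uniform-random-permutation analysis of Chekuri et al.
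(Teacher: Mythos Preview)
Your first two steps are fine, but the proposal has a genuine gap at exactly the point you flag as the ``main obstacle'': you cannot realize $\eta_f(\bar\pi(R(x)))$ online, and no coupling or martingale argument will fix this. Membership $e\in\bar\pi(R(x))$ is the event $R(x)\setminus\{e\}\in\cF_e$, which depends on the activity of \emph{every} element, including those that have not yet arrived. Pre-sampling uniforms $U_{e'}$ does not help, because those $U_{e'}$ are the environment's randomness, not yours; you only learn $U_{e'}$ (equivalently, whether $e'$ is active) when the adversary presents $e'$. So at the moment $e$ arrives you provably do not have the information needed to decide whether $e\in\bar\pi(R(x))$, let alone whether $e\in\eta_f(\bar\pi(R(x)))$. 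The filter $\eta_f$ of Lemma~\ref{lem:offline_crs} is itself an offline object (a function of the whole set), so layering it on top only compounds the problem.

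The paper's proof takes a completely different route and does \emph{not} try to simulate $\eta_f$ at all. Its $\pi'$ works as follows: when an active $e$ arrives, it draws a \emph{fresh independent} sample of the not-yet-arrived elements, $R(x\wedge\characteristic_{\sigma^{>e}})$, and accepts $e$ iff $A^{<e}\cup R(x\wedge\characteristic_{\sigma^{>e}})\in\cF_e$. This makes the acceptance probability of $e$ exactly $\Pr[R(x)\setminus\{e\}\in\cF_e]\geq c$ regardless of the online adversary's order, but the accepted set is \emph{not} distributed like $\bar\pi(R(x))$; in particular there is no attempt to match any joint distribution. The analysis then proceeds by a direct telescoping over marginal contributions $F(e\mid x\wedge\characteristic_{\sigma_a^{<e}})$ in a fixed reference order $\sigma_a$, after first preprocessing $x$ to be ``non-reducible'' so that all these marginals are nonnegative. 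The heart of the argument is an induction over the online adversary's strategy tree (Lemma~\ref{lem:history_adversary}), together with a short submodularity lemma (Lemma~\ref{lem:dependence_better}) showing that replacing a fresh future sample by the true future activity of a single element can only help the conditional marginal. None of these ingredients appear in your plan; the fresh-sample-completion idea and the non-reducibility reduction are precisely the missing pieces that let one bypass $\eta_f$ and avoid the factor-$4$ loss.
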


To prove Theorem~\ref{thm:element_monotone_non_monotone_function}, we need some notation. Let $\sigma_a$ denote an arbitrary fixed absolute order over the elements of $N$. Given element $e \in N$ and a vector $y \in [0, 1]^N$, let $F(e \mid y)$ denote the marginal contribution obtained by increasing the coordinate of $e$ in $y$ to $1$. Formally, $F(e \mid y) = F(y \vee \characteristic_{\{e\}}) - F(y)$. Finally, we say that $x$ is \emph{non-reducible} if $F(e \mid x \wedge \characteristic_{\sigma_a^{<e}}) \geq 0$ for every element $e \in \cN$ having $x_e > 0$, where $\sigma_a^{<e}$ is the set of elements that appear before $e$ in the order $\sigma_a$.

\begin{observation} \label{obs:to_non_reducible}
If $x$ is reducible, then there exists a non-reducible vector $x' \leq x$ obeying $F(x') \geq F(x)$.
\end{observation}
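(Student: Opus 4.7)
The plan is to construct $x'$ from $x$ by a greedy zeroing procedure that walks through the elements in the order $\sigma_a$ and sets to $0$ any coordinate whose marginal is ``harmful'' at the truncated vector. Enumerate $N = \{e_1, \ldots, e_n\}$ in the order $\sigma_a$ and iteratively build vectors $x^{(0)} = x, x^{(1)}, \ldots, x^{(n)}$ as follows: at step $i$, if $F(e_i \mid x^{(i-1)} \wedge \characteristic_{\sigma_a^{<e_i}}) < 0$, define $x^{(i)}$ from $x^{(i-1)}$ by zeroing coordinate $e_i$; otherwise set $x^{(i)} = x^{(i-1)}$. Output $x' := x^{(n)}$. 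Clearly $x' \leq x$, so the two remaining obligations are that $F(x') \geq F(x)$ and that $x'$ is non-reducible.

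For monotonicity of $F$ along the procedure, recall that $F$ is linear in each coordinate. At step $i$, letting $y$ denote $x^{(i-1)}$ with the $e_i$-th coordinate set to $0$, linearity yields
\[
F(x^{(i-1)}) - F(x^{(i)}) \;=\; x^{(i-1)}_{e_i} \cdot F(e_i \mid y).
\]
Now $y$ and $x^{(i-1)} \wedge \characteristic_{\sigma_a^{<e_i}}$ agree on $\sigma_a^{<e_i} \cup \{e_i\}$, and $y$ is coordinate-wise at least as large on the remaining coordinates, so $y \geq x^{(i-1)} \wedge \characteristic_{\sigma_a^{<e_i}}$. Submodularity of $F$ then gives $F(e_i \mid y) \leq F(e_i \mid x^{(i-1)} \wedge \characteristic_{\sigma_a^{<e_i}}) < 0$, so the right-hand side is nonpositive and $F(x^{(i)}) \geq F(x^{(i-1)})$. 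Chaining these inequalities yields $F(x') \geq F(x)$.

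For non-reducibility of $x'$, observe that the truncated vector $x^{(i-1)} \wedge \characteristic_{\sigma_a^{<e_i}}$ depends only on coordinates indexed by $\sigma_a^{<e_i}$, and the procedure never touches those coordinates again after step $i-1$. Hence $x^{(i-1)} \wedge \characteristic_{\sigma_a^{<e_i}} = x' \wedge \characteristic_{\sigma_a^{<e_i}}$. If $x'_{e_i} > 0$, then step $i$ did not zero $e_i$, so by the branch taken we have $F(e_i \mid x' \wedge \characteristic_{\sigma_a^{<e_i}}) \geq 0$, which is exactly the condition required by non-reducibility at $e_i$.

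The argument is largely routine, with only one mildly subtle point: one must verify that the ``check vector'' $x \wedge \characteristic_{\sigma_a^{<e}}$ appearing in the definition of non-reducibility is stable under subsequent modifications, and this is precisely what justifies processing the elements in the order $\sigma_a$ rather than in some other order. Everything else follows from the linearity of $F$ in each coordinate together with a single use of submodularity.
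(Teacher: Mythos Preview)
Your proof is correct and follows essentially the same approach as the paper: both construct $x'$ by the same greedy zeroing pass through $\sigma_a$, and both use submodularity to argue that each zeroing step can only increase $F$. Your argument for the $F$-increase is slightly more direct (a single application of multilinearity plus one marginal comparison) than the paper's telescoping-sum computation, but the underlying idea is identical.
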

\begin{proof}
Consider the vector $x'$ obtained from $x$ by the following process. Start with $x' \gets x$. Scan the elements in the order $\sigma_a$. For every element $e$, if $F(e \mid x' \wedge \characteristic_{\sigma_a^{<e}}) < 0$, then reduce $x'_e$ to $0$. Clearly this process ends up with a non-reducible vector $x'$. Moreover, every step of the process only increases the value of $F(x')$, and thus, this value ends up at least as large as its initial value $F(x)$. To see why this is the case, consider a step in which the value of $x'_e$ is reduced to $0$ for some element $e \in N$, and let $x^1$ and $x^2$ be the vector $x'$ before and after the reduction. Then,
\begin{align*}
	F(x^1)
	={} &
	f(\varnothing) + \sum_{e' \in N} x^1_{e'} \cdot F(e' \mid x^1 \wedge \characteristic_{\sigma_a^{<e'}})
	\leq
	f(\varnothing) + \sum_{e' \in N \setminus \{e\}} x^1_{e'} \cdot F(e' \mid x^1 \wedge \characteristic_{\sigma_a^{<e'}})\\
	\leq{} &
	f(\varnothing) + \sum_{e' \in N \setminus \{e\}} x^1_{e'} \cdot F(e' \mid x^2 \wedge \characteristic_{\sigma_a^{<e'}})
	=
	F(x^2)
	\enspace,
\end{align*}
where the first inequality holds since $F(e \mid x^1 \wedge \characteristic_{\sigma_a^{<e}}) < 0$, and the second inequality holds by submodularity.
\end{proof}

Observation~\ref{obs:to_non_reducible} shows that it is enough to prove Theorem~\ref{thm:element_monotone_non_monotone_function} for non-reducible $x$. If $x$ is reducible, the OCRS $\pi'$ can calculate a non-reducible $x'$ having $F(x') \geq F(x)$, and then ``pretend'' that some active elements are in fact inactive in a way that makes every element $e$ active with probability $x'_e$, independently. Determining $x'$ requires exponential time, but even if one wants a polynomial time OCRS $\pi'$ it is possible to use sampling to get, for every constant $d > 0$, a non-reducible vector $x'$ having $F(x') \geq F(x) - |N|^{-d} \cdot \max\{f(\{e\}) \mid x_e > 0\}$.

The OCRS $\pi'$ we use in the proof of Theorem~\ref{thm:element_monotone_non_monotone_function} works as follows: whenever $\pi'$ learns that an element $e$ is active it complements the set $A^{<e}$ to have the distribution $R(x \wedge \characteristic_{N \setminus \{e\}})$, and then checks whether the resulting random set is in $\cF_e$. More formally, let $\sigma$ be the order in which the elements reveal whether they are active, and let $\sigma^{>e}$ be the set of elements that appears after $e$ in the order $\sigma$. Then, when $e$ reveals that it is active, the OCRS $\pi'$ decides to accept it if $A^{<e} \cup R(x \wedge \characteristic_{\sigma^{>e}}) \in \cF_e$ (for a random realization of the random set $R(x \wedge \characteristic_{\sigma^{>e}})$.

\begin{observation}
The OCRS $\pi'$ always selects a set $S \in \cF$.
\end{observation}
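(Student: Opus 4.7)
The plan is to prove that, for every realization of all internal randomness, the set $S$ output by $\pi'$ is contained in the set $S_\pi$ that the given element-monotone OCRS $\pi$ would output on the same active set and the same adversarial arrival order. Feasibility of $S$ then comes essentially for free, because $\pi$ is itself an OCRS for $P$ and $\cF$ is down-closed.

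First, I would fix the active set $A \sim R(x)$, the adversarial arrival order $\sigma$, and the realization of the families $\{\cF_e\}_{e\in N}$, so that both $\pi$ and $\pi'$ observe exactly the same prefix $A^{<e}$ when $e$ arrives. Next, I would unpack the acceptance rule of $\pi'$: at the moment $e$ arrives, $\pi'$ draws a fresh sample $R_e \sim R(x \wedge \characteristic_{\sigma^{>e}})$ and accepts $e$ precisely when $A^{<e} \cup R_e \in \cF_e$. Since $A^{<e} \subseteq A^{<e} \cup R_e$ and $\cF_e$ is down-closed by the very definition of an element-monotone OCRS, acceptance by $\pi'$ forces $A^{<e} \in \cF_e$, which is exactly the acceptance criterion used by $\pi$. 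Hence every element selected by $\pi'$ would also have been selected by $\pi$, which gives $S \subseteq S_\pi$ pointwise.

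To conclude, I would invoke the fact that $\pi$, being an OCRS for a relaxation of $\cF$, satisfies $\characteristic_{S_\pi} \in P$ and therefore $S_\pi \in \cF$; down-closedness of $\cF$ then yields $S \in \cF$. I do not foresee any real obstacle here. The only bookkeeping detail worth mentioning is that the fresh samples $R_e$ used by $\pi'$ are independent of $A^{<e}$ and only ever tighten the acceptance condition relative to $\pi$, never relax it, so the pointwise containment $S \subseteq S_\pi$ is preserved no matter how the adversary orders the arrivals or how the randomness of $\{\cF_e\}_{e \in N}$ is resolved.
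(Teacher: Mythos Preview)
Your proposal is correct and follows essentially the same approach as the paper's proof: fix all randomness, use the down-closedness of $\cF_e$ to conclude that acceptance by $\pi'$ (which requires $A^{<e}\cup R_e \in \cF_e$) implies acceptance by $\pi$ (which requires only $A^{<e}\in \cF_e$), deduce $S\subseteq S_\pi$, and then invoke down-closedness of $\cF$. The paper's argument is identical in structure and content.
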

\begin{proof}
Fix the set $A$ of active elements, the families $\{\cF_e\}_{e \in N}$ and the order $\sigma$ in which elements reveal whether they are active. We prove that the observation holds for any possible choice for these fixed values.

Let $S'$ be the set of elements selected by $\pi$. The fact that $\pi'$ selects some element $e \in N$ (\ie, $e \in S$) means that $(S \cap \sigma^{<e}) \cup R(x \wedge \characteristic_{\sigma^{>e}}) \in \cF_e$ for some realization of the random set $R(x \wedge \characteristic_{\sigma^{>e}})$. Hence, $S \cap \sigma^{<e} \in \cF_e$, which implies that $\pi$ also selects $e$ (\ie, $e \in S'$). This argument implies that $S \subseteq S'$. The observation now follows by the down-monotonicity of $\cF$ since $\pi$ always returns a set in $\cF$.
\end{proof}

Next, we need to prove the following technical lemma.

\begin{lemma} \label{lem:dependence_better}
Let $X_1, X_2$ be two random sets that never contain elements $e, e' \in N$. Additionally, let $Y_1$ and $Y_2$ be two random sets distributed like $R(x \wedge \characteristic_{\{e'\}})$ which are independent from each other and from $X_1$, $X_2$ and $\cF_e$. Then, assuming $\Pr[X_2 \in \cF_e] > 0$,
\[
	\bE[f(e \mid X_1 \cup Y_1) \mid X_2 \cup Y_2 \in \cF_e]
	\leq
	\bE[f(e \mid X_1 \cup Y_1) \mid X_2 \cup Y_1 \in \cF_e]
	\enspace.
\]
\end{lemma}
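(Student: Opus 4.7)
The plan is to reduce the inequality to a pointwise comparison of two inner expectations over $Y_1$ alone, after exploiting the distributional symmetry between $Y_1$ and $Y_2$.

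First I would note that since $Y_1$ and $Y_2$ are identically distributed and both are independent of $(X_1, X_2, \cF_e)$, the joint laws of $(X_1, X_2, \cF_e, Y_1)$ and of $(X_1, X_2, \cF_e, Y_2)$ coincide. Hence the conditional law of $(X_1, X_2, \cF_e)$ given $E_2 := \{X_2 \cup Y_2 \in \cF_e\}$ equals its conditional law given $E_1 := \{X_2 \cup Y_1 \in \cF_e\}$, and in particular $\Pr[E_1] = \Pr[E_2] \geq (1-x_{e'}) \Pr[X_2 \in \cF_e] > 0$ so the conditioning is well defined.

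Next I would rewrite both sides using this common outer law. Because $Y_1$ is independent of $E_2$, on the LHS it can be pulled out as an inner expectation with its unconditional marginal, so combining with the previous step,
\begin{align*}
\text{LHS} &= \bE_{(X_1,X_2,\cF_e)\mid E_1}\bigl[\bE_{Y_1}[f(e \mid X_1\cup Y_1)]\bigr], \\
\text{RHS} &= \bE_{(X_1,X_2,\cF_e)\mid E_1}\bigl[\bE_{Y_1\mid X_1,X_2,\cF_e,E_1}[f(e \mid X_1\cup Y_1)]\bigr].
\end{align*}
It therefore suffices to prove, for every fixed $(X_1, X_2, \cF_e)$ in the support of the outer law (so in particular $X_2 \in \cF_e$ by down-closedness of $\cF_e$), the pointwise inequality $\bE_{Y_1}[f(e \mid X_1 \cup Y_1)] \leq \bE_{Y_1 \mid X_2 \cup Y_1 \in \cF_e}[f(e \mid X_1 \cup Y_1)]$.

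Finally I would finish with a two-atom comparison. The variable $Y_1$ takes the value $\varnothing$ with probability $1 - x_{e'}$ and $\{e'\}$ with probability $x_{e'}$. If $X_2 \cup \{e'\} \in \cF_e$, the conditioning event is automatic and the two inner expectations agree; otherwise conditioning forces $Y_1 = \varnothing$. In either case, the conditional probability that $Y_1 = \{e'\}$ is at most its unconditional value $x_{e'}$. By submodularity of $f$ (using $e \neq e'$; the case $e = e'$ is trivial since $f(e \mid X_1 \cup \{e\}) = 0 \leq f(e \mid X_1)$), we have $f(e \mid X_1 \cup \{e'\}) \leq f(e \mid X_1)$, so shifting mass from $\{e'\}$ onto $\varnothing$ only increases the inner expectation, which yields the required inequality. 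The main obstacle is Step 1 — legitimately swapping $Y_2$ for $Y_1$ in the conditioning without disturbing the law of the ``outer'' variables $(X_1, X_2, \cF_e)$; once this symmetry is established, the rest reduces to a transparent two-atom submodularity computation.
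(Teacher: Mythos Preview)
Your proof is correct and follows essentially the same approach as the paper: fix $(X_1, X_2, \cF_e)$, reduce to a two-atom comparison on $Y_1$, and finish by submodularity. Your Step~1 (the symmetry argument showing the conditional law of $(X_1, X_2, \cF_e)$ is the same under $E_1$ and $E_2$) is exactly the justification behind the paper's terse ``it is enough to prove the lemma for fixed values of $X_1, X_2, \cF_e$''; you have simply made this step explicit, which is an improvement in rigor. One minor quibble: your parenthetical handling of $e = e'$ asserts $0 \leq f(e \mid X_1)$, which need not hold for non-monotone $f$; but the lemma is only applied with $e \neq e'$, so this digression is harmless.
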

\begin{proof}
It is enough to prove the lemma for fixed values of $X_1$, $X_2$ and $\cF_e$ obeying $X_2 \in \cF_e$. If $X_2 \cup \{e'\} \in \cF_e$, then the conditions $X_2 \cup Y_2 \in \cF_e$ and  $X_2 \cup Y_1 \in \cF_e$ always hold and the lemma is trivial, thus, we also assume $X_2 \cup \{e'\} \not \in \cF_e$. Then, conditioned on the values of $X_1$, $X_2$ and $\cF_e$:
\begin{align*}
	\bE[f(e \mid X_1 \cup Y_1) \mid X_2 \cup Y_2 \in \cF_e]
	\leq{} &
	\bE[f(e \mid X_1) \mid X_2 \cup Y_2 \in \cF_e]
	=
	\bE[f(e \mid X_1) \mid Y_2 = \varnothing]\\
	={} &
	\bE[f(e \mid X_1 \cup Y_1) \mid Y_1 = \varnothing]\\
	={} &
	\bE[f(e \mid X_1 \cup Y_1) \mid X_2 \cup Y_1 \in \cF_e]
	\enspace,
\end{align*}
where the inequality holds by submodularity.
\end{proof}

Using the last lemma we can now prove the following one, which lower bounds the contribution of an element $e$ to the value of the solution selected by $\pi'$.

\begin{lemma} \label{lem:history_adversary}
Let $A \sim R(x)$ be the set of active elements, and let $S$ be the output of $\pi'$ given $A$ and an arbitrary online adversary. Then, for every element $e \in N$ having $x_e > 0$ and $\Pr[\varnothing \in \cF_e] > 0$, $\Pr[e \in S \mid e \in A] = \Pr[R(x) \setminus \{e\} \in \cF_e]$ and $\bE[f(e \mid A \cap \sigma_a^{<e}) \mid e \in S] \geq F(e \mid x \wedge \characteristic_{\sigma_a^{<e}})$.
\end{lemma}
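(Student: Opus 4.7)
My plan is to relate the OCRS $\pi'$ to its ``retroactive'' variant that accepts $e$ if and only if $e \in A$ and $A \setminus \{e\} \in \cF_e$, and then invoke FKG on the product Bernoulli measure of $A$.

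For part 1, the key observation is that at the moment $\pi'$ processes $e$, the online adversary has queried only the bits of $A$ on $\sigma^{<e}$. Hence, conditional on $(\sigma^{<e}, A^{<e})$, the remaining bits $A|_{\sigma^{>e}}$ are still independent Bernoulli with marginals $x_u$ and jointly distributed as $R(x \wedge \characteristic_{\sigma^{>e}})$, matching the fresh sample $Y$. Coupling $Y$ with $A \cap \sigma^{>e}$ gives
\[
\Pr[A^{<e} \cup Y \in \cF_e] = \Pr[A^{<e} \cup (A \cap \sigma^{>e}) \in \cF_e] = \Pr[A \setminus \{e\} \in \cF_e] = \Pr[R(x) \setminus \{e\} \in \cF_e].
\]
Because $A_e$ is independent of everything $\pi'$ uses to decide on $e$, conditioning on $e \in A$ does not affect this probability, proving part 1.

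For part 2, I would condition on $\cH = (\sigma^{<e}, A^{<e})$. Given $\cH$, the event $\mathcal{E}_1 := \{A^{<e} \cup Y \in \cF_e\}$ depends only on $Y$, while $f(e \mid A \cap \sigma_a^{<e})$ depends only on the iid Bernoulli bits of $A$ on $\sigma^{>e} \cap \sigma_a^{<e}$, which are independent of $Y$ given $\cH$. Hence $\mathcal{E}_1$ and $f(e \mid A \cap \sigma_a^{<e})$ are conditionally independent given $\cH$, and the inequality we want reduces to
\[
\bE_{\cH}[g(\cH) \, q(\cH)] \geq \bE_{\cH}[g(\cH)] \cdot \bE_{\cH}[q(\cH)],
\]
where $g(\cH) = \bE[f(e \mid A \cap \sigma_a^{<e}) \mid \cH]$ and $q(\cH) = \Pr[\mathcal{E}_1 \mid \cH]$. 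At fixed $\sigma^{<e}$, both $g$ and $q$ are non-increasing in $A^{<e}$: $g$ by submodularity of $f$, and $q$ by down-closedness of $\cF_e$.

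The main obstacle I anticipate is that, under an adaptive online adversary, the joint distribution of $\cH$ is not a product measure on $A^{<e}$, so FKG cannot be invoked directly. My plan is to simulate the adversary's reveals one at a time: at each partial history, the next revealed bit is a fresh Bernoulli independent of the past, and its single-coordinate effect on both $g$ and $q$ can be controlled via Lemma~\ref{lem:dependence_better} applied conditionally on that history. Chaining these elementwise correlation inequalities along the adversary's execution then yields $\mathrm{Cov}_\cH(g, q) \geq 0$ and closes the proof.
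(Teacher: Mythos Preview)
Your proposal is sound and ultimately follows the same route as the paper, though the packaging differs.

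For Part~1, your stopping-time/coupling argument is cleaner than what the paper writes. The paper obtains $\Pr[e\in S\mid e\in A]=\Pr[R(x)\setminus\{e\}\in\cF_e]$ as a byproduct of an induction on the adversary's decision tree; you get it in one shot by observing that, conditioned on the full history at the moment $e$ is processed, the unrevealed bits $A\cap\sigma^{>e}$ are still fresh Bernoulli with the same law as the auxiliary sample $Y$, so $A^{<e}\cup Y\stackrel{d}{=}A\setminus\{e\}$ given the history. This is correct and is a genuine simplification.

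For Part~2, your reduction to the covariance inequality $\bE_{\cH}[g(\cH)q(\cH)]\ge\bE_{\cH}[g(\cH)]\,\bE_{\cH}[q(\cH)]$ is valid, and you correctly identify the obstacle: the law of $\cH$ under an adaptive adversary is not a product measure, so FKG does not apply directly. Your proposed fix---simulate the adversary one reveal at a time and invoke Lemma~\ref{lem:dependence_better} conditionally at each step---is exactly what the paper does. The paper formalizes the adversary's strategy as a binary tree branching on accept/reject decisions, inducts on the number of leaves, and in the inductive step collapses one internal node (swapping the reveal of some $e'$ with the reveal of $e$) using Lemma~\ref{lem:dependence_better} to show the conditional expectation can only improve. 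Your ``chaining'' would, once written out, reproduce this induction essentially verbatim. So the two arguments coincide at the technical core; you have just rephrased the target as a correlation inequality rather than tracking $\bE[f(e\mid A\cap\sigma_a^{<e})\mid e\in S]$ directly through the tree.

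One minor caution: your monotonicity claims for $g$ and $q$ ``at fixed $\sigma^{<e}$'' are correct as pointwise statements, but since $\sigma^{<e}$ is itself a function of $A^{<e}$ under an adaptive adversary, these comparisons are not between states that actually coexist in the support of $\cH$. This is precisely why the direct FKG fails and why the step-by-step chaining (equivalently, the paper's tree induction) is needed; your proposal already anticipates this, so it is not a gap, just a point where the writeup will need care.
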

\begin{proof}
Observe that the conditions $x_e > 0$ and $\Pr[\varnothing \in \cF_e] > 0$ are equivalent to $\Pr[e \in S] > 0$, and thus, the second expectation we want to bound is well defined for every element $e$ having $x_e >0$ and $\Pr[\varnothing \in \cF_e] > 0$.

The strategy of an online adversary for determining the order in which elements reveal whether they are active, up to the point when the element $e$ does so, can be characterized by a binary tree with the following properties:
\begin{compactitem}
	\item Every leaf is marked by $e$.
	\item Every internal node is marked by some other element $e' \neq e$ and has two children called the ``accepted child'' and the ``rejected child''.
	\item The path from the root of the tree to every leaf does not contain a single element more than once.
\end{compactitem}
The semantics of understanding the tree as a strategy are as follows. The adversary starts at the root of the tree, and moves slowly down the tree till reaching a leaf. When the adversary is at some node $e'$ it makes $e'$ the next element that reveals whether it is active. If $e'$ is accepted by the OCRS, then the adversary moves to its ``accepted child'', otherwise it moves to its ``rejected child''. Notice that strategies defined by such trees do not allow the adversary to use the information whether $e'$ was active or not when the OCRS rejects it, however, since the activation of elements is independent and our OCRS behaves in the same way under both cases, this information cannot help the adversary.

In the rest of this proof, we show by induction on the number of leafs in the strategy of the adversary that the lemma holds for every online adversary. Let us start with the case of the single strategy having a single leaf. In this strategy the adversary makes $e$ the first element to reveal whether it is active. This means that:
\[
	\Pr[e \in S \mid e \in A] = \Pr[R(x) \setminus \{e\} \in \cF_e]
	\enspace,
\]
and
\begin{align*}
	\bE[f(e \mid A \cap \sigma_a^{<e}) \mid e \in S]
	={} &
	\bE[f(e \mid A \cap \sigma_a^{<e}) \mid R(x) \setminus \{e\} \in \cF_e, e \in A]\\
	={} &
	\bE[f(e \mid A \cap \sigma_a^{<e})]
	=
	F(e \mid x \wedge \characteristic_{\sigma_a^{<e}})
	\enspace,
\end{align*}
where the second equality holds since $A \cap \sigma_a^{<e}$ is independent of the conditions $R(x) \setminus \{e\} \in \cF_e$ and $e \in A$.

Next, assume the lemma holds for strategies having $\ell - 1 \geq 1$ leaves, and let us prove it for a strategy $\cT$ having $\ell$ leaves. Since every internal node of $\cT$ has two children, $\cT$ must contain a node $w$ (marked by an element $e' \neq e$) having two leaves as children. Let $\cT'$ be the strategy resulting from $\cT$ by removing the two children of $w$ and making $w$ itself a leaf (by marking it with $e$). Since $\cT'$ has $\ell - 1$ leaves, it obeys the lemma by the induction hypothesis. Hence, it is enough to show that $\Pr[e \in S \mid e \in A]$ is identical under both $\cT$ and $\cT'$ and $\bE[f(e \mid A \cap \sigma_a^{<e}) \mid e \in S]$ is at least as large under $\cT$ as under $\cT'$.

Let $\chi_w$ be the event that the adversary reaches the node $w$ in its strategy. Clearly $\Pr[\chi_w \mid \cT] = \Pr[\chi_w \mid \cT']$, where by conditioning a probability on a strategy of the adversary we mean that the probability is calculated for the case that the adversary uses this strategy. If the last two probabilities are strictly smaller than $1$ we also have:
\[
	\Pr[e\in S \mid e \in A, \neg \chi_w, \cT]
	=
	\Pr[e\in S \mid e \in A, \neg \chi_w, \cT']
	\enspace.
\]
On the other hand, observe that for both strategies $\cT$ and $\cT'$ the event $\chi_w$ implies the same (deterministic) fixed values for the set $B_w$ of the elements that revealed whether they are active up to the point that the adversary reached $w$, and the set $A_w = A \cap B_w$. Thus, whenever $\Pr[\chi_w \mid \cT] = \Pr[\chi_w \mid \cT'] > 0$:
\begin{align*}
	&
	\Pr[e\in S \mid e \in A, \chi_w, \cT']
	=
	\Pr[A_w \cup (R(x) \setminus (B_w \cup \{e\})) \in \cF_e]\\
	={} &
	\Pr[A_w \cup (A \cap \{e'\}) \cup (R(x) \setminus (B_w \cup \{e, e'\})) \in \cF_e]
	=
	\Pr[e\in S \mid e \in A, \chi_w, \cT]
	\enspace.
\end{align*}
In conclusion, we got: $\Pr[e\in S \mid e \in A, \cT] = \Pr[e\in S \mid e \in A, \cT']$. Notice that both probabilities must be strictly positive by the induction hypothesis.

By Bayes' law we now get:
\begin{align*}
	&
	\Pr[\chi_w \mid \cT, e \in S]
	=
	\Pr[\chi_w \mid e \in A, \cT, e \in S]
	=
	\Pr[e \in S \mid e \in A, \cT, \chi_w] \cdot \frac{\Pr[\chi_w \mid e \in A, \cT]}{\Pr[e \in S \mid e \in A, \cT]}\\
	={} &
	\Pr[e \in S \mid e \in A, \cT', \chi_w] \cdot \frac{\Pr[\chi_w \mid e \in A, \cT']}{\Pr[e \in S \mid e \in A, \cT']}
	=
	\Pr[\chi_w \mid e \in A, \cT', e \in S]
	=
	\Pr[\chi_w \mid \cT', e \in S]
	\enspace,
\end{align*}
and when the probabilities $\Pr[\chi_w \mid \cT, e \in S] = \Pr[\chi_w \mid \cT', e \in S]$ are strictly smaller than $1$ we also have:
\[
	\bE[f(e \mid A \cap \sigma_a^{<e}) \mid e \in S, \neg \chi_w, \cT]
	=
	\bE[f(e \mid A \cap \sigma_a^{<e}) \mid e \in S, \neg \chi_w, \cT']
	\enspace.
\]
Hence, to prove that $\bE[f(e \mid A \cap \sigma_a^{<e}) \mid e \in S]$ is at least as large under $\cT$ as under $\cT'$ we are only left to show the inequality:
\[
	\bE[f(e \mid A \cap \sigma_a^{<e}) \mid e \in S, \chi_w, \cT]
	\geq
	\bE[f(e \mid A \cap \sigma_a^{<e}) \mid e \in S, \chi_w, \cT']
\]
whenever $\Pr[\chi_w \mid \cT, e \in S] > 0$. The last inequality holds since:
\begin{align*}
	&
	\bE[f(e \mid A \cap \sigma_a^{<e}) \mid e \in S, \chi_w, \cT']\\
	={} &
	\bE\Big[f(e \mid (A_w \cup (A \setminus B_w)) \cap \sigma_a^{<e})
    \;\Big\vert\; A_w \cup ((R(x) \setminus B_w) \setminus \{e\}) \in \cF_e\Big]\\
	\leq{} &
	\bE\Big[f(e \mid (A_w \cup (A \setminus B_w)) \cap \sigma_a^{<e}) \;\Big\vert\; A_w \cup (A \cap \{e'\}) \cup ((R(x) \setminus B_w) \setminus \{e', e\}) \in \cF_e\Big]\\
	={} &
	\bE[f(e \mid A \cap \sigma_a^{<e}) \mid e \in S, \chi_w, \cT]
	\enspace,
\end{align*}
where the inequality holds by Lemma~\ref{lem:dependence_better}.
\end{proof}

We are now ready to prove Theorem~\ref{thm:element_monotone_non_monotone_function}.
\begin{proof}[Proof of Theorem~\ref{thm:element_monotone_non_monotone_function}]
If $c = 0$, then the theorem is trivial. Thus, we may assume $c > 0$. Recall that $S$ is the set produced by the OCRS $\pi'$, and let $A \sim R(x)$ be the set of active elements. Since, $\pi$ is a $(b, c)$-selectable element-monotone contention resolution scheme, $A \setminus \{e\}$ must belong to $\cF_e$, for every element $e \in N$, with a positive probability. By the down-monotonicity of $\cF_e$, we get that $\varnothing \in \cF_e$ for every $e \in N$.

Next, observe that:
\begin{align*}
	\bE[f(S)]
	={} &
	f(\varnothing) + \sum_{e \in N, x_e > 0} \mspace{-9mu} \Pr[e \in S] \cdot \bE[f(e \mid S \cap \sigma_a^{<e}) \mid e \in S]\\
	\geq{} &
	f(\varnothing) + \sum_{e \in N, x_e > 0} \mspace{-9mu} \Pr[e \in S] \cdot \bE[f(e \mid A \cap \sigma_a^{<e}) \mid e \in S]\\
	\geq{} &
	f(\varnothing) + \sum_{e \in N, x_e > 0} \mspace{-9mu} \Pr[R(x) \setminus \{e\} \in \cF_e,  e\in A] \cdot F(e \mid x \wedge \characteristic_{\sigma_a^{<e}})\\
  ={} &
	f(\varnothing) + \sum_{e \in N, x_e > 0} \mspace{-9mu} x_e \cdot \Pr[R(x) \setminus \{e\} \in \cF_e] \cdot F(e \mid x \wedge \characteristic_{\sigma_a^{<e}})
	\enspace.
\end{align*}
where the first inequality holds by submodularity and the second by Lemma~\ref{lem:history_adversary}. Since $\pi$ is a $(b, c)$-selectable contention resolution scheme, the probability $\Pr[R(x) \setminus \{e\} \in \cF_e]$ must be at least $c$ for every element $e \in N$. Additionally, since we assumed that $x$ is non-reducible, we also have $F(e \mid x \cap \characteristic_{\sigma_a^{<e}}) \geq 0$ for every element $e \in N$ obeying $x_e > 0$. Plugging both observations into the previous inequality gives:
\[
	\bE[f(S)]
	\geq
	f(\varnothing) + \sum_{e \in N, x_e > 0} \mspace{-9mu} c \cdot x_e\cdot F(e \mid x \wedge \characteristic_{\sigma_a^{<e}})
  = c\cdot F(x) + (1-c)\cdot f(\varnothing)
	\geq
	c \cdot F(x)
	\enspace.
  \ifbool{SODACamera}{}{\qedhere}
\]
\end{proof}
}

\section{Details on applications}\label{sec:app}

In this section we prove the theorems stated in
Section~\ref{subsec:app} and provide some additional
information. We reuse notation introduced
in Section~\ref{subsec:app}.

\subsection{Prophet inequalities for Bayesian
online selection problems}\label{subsec:appProphet}

\subsubsection*{Proof of Theorem~\ref{thm:prophet}\ifbool{SODACamera}{.}{}}

We start by introducing a relaxation for the expected
value of the prophet, \ie,
$\E[\max\{\sum_{e\in S} Z_e \mid S\in\mathcal{F}\}]$.
Similar relaxation techniques to the one we use here
have been been used previously for different
related problems (see, \eg,~\cite{yan_2010_mechanism}).

Let $I^* \in \argmax\{\sum_{e\in I}Z_i \mid I\in \mathcal{F} \}$
be an optimal (random) set for the prophet, and we define
$p^*_e = \Pr[e\in I^*]$ for $e\in N$.
Our relaxation seeks probabilities $p_e$ that
have the role of $p^*_e$.
Observe first that since $I^*\in \mathcal{F}$ with probability
$1$, we have that $p^*$ is a convex combination
of characteristic vectors of feasible sets, and hence,
$p^*\in P$ (we recall that $P$ is a relaxation for
$\mathcal{F}$).

Our relaxation assigns an optimistic objective value
to each probability vector $p\in P$. More precisely,
for any $e\in N$ we assume in the relaxation
that element $e$ gets selected when $Z_e$ takes
one of its $p_e$-fraction of highest values.
In particular, if $Z_e$ follows a continuous distribution
with cumulative distribution function $F_e$, then
we assume that $Z_e$ gets selected whenever
$Z_e \geq q_e:=F_e^{-1}(1-p_e)$ and, consequently,
the contribution
of $e$ to the objective of the prophet is
$g_e(p_e)=\int_{q_e}^{\infty} x \cdot dF_e(x)$.
More generally, when $Z_e$ does not follow a
continuous distribution, the expected value of $Z_e$
on the highest $p_e$-fraction of realizations can
be described as follows:
\ifbool{SODACamera}
{
\begin{align*}
g_e(p_e) &=
(p_e - (1-F_e(q_e)))\cdot q_e \\
& \qquad +
\int_{x\in (q_e,\infty)} x\cdot dF_e(x),
  \quad\text{ where}\\[0.5em]
q_e &= q_e(p_e) = \min\{\alpha \mid F_e(\alpha)\geq 1-p_e\}
\enspace.
\end{align*}
}
{
\begin{align*}
g_e(p_e) &=
(p_e - (1-F_e(q_e)))\cdot q_e +
\int_{x\in (q_e,\infty)} x\cdot dF_e(x),
  \quad\text{ where}\\[0.5em]
q_e &= q_e(p_e) = \min\{\alpha \mid F_e(\alpha)\geq 1-p_e\}
\enspace.
\end{align*}
}
In words, we assume that $e$ gets selected whenever
$Z_e > q_e$. Moreover, if $Z_e=q_e$, then $e$ gets
selected with probability $p_e - 1 + F_e(q_e)$.

Putting things together, the relaxation we consider
is the following.
\begin{equation}\label{eq:prophetRelax}
\max_{p\in P} \sum_{e\in N} g_e(p_e) 
\end{equation}
By the above discussion this is indeed a relaxation.
Moreover, one can easily observe that $g_e$ is
a concave functions, and thus, in many settings one
can efficiently obtain a near-optimal solution to
this relaxation using convex optimization techniques.
However, if we are only interested in proving the
existence of a prophet inequality as stated
in Theorem~\ref{thm:prophet}, we do not need an
efficient procedure to solve~\eqref{eq:prophetRelax}.

Let $x\in P$ be an optimal
solution to~\eqref{eq:prophetRelax}.
We now create an algorithm for the Bayesian online
selection problem based on the point $x\in P$
and the $c$-selectable OCRS for $P$ which exists
by assumption.
Whenever an element $e\in N$ reveals in the
Bayesian online selection problem we say that
$e$ is active if its random variable $Z_e$ realizes
within the largest $x_e$-fraction of realizations.
More formally, $e$ is active if either:
\begin{enumerate}[(i)]
\setlength\itemsep{0em}
\item $Z_e > q_e(x)$, or

\item if $Z_e=q_e(x)$ (assuming $\Pr[Z_e=q_e(x)]>0$),
we toss a coin
and declare $e$ to be active with probability
$\frac{x_e -1 + F_e(q_e(x))}{\Pr[Z_e=q_e(x)]}$.
\end{enumerate}
Let $A\subseteq N$ be the random set of active
elements. Observe that $A$ is distributed like $R(x)$,
the random subset of $N$ that contains each element
$e\in N$ with probability $x_e$ independently
of the others.
Also, by definition of active elements we have
\begin{equation}\label{eq:actContr}
\E[Z_e \cdot \characteristic_{e\in A}] = g_e(x_e)
  \qquad \forall e\in N
	\enspace.
\end{equation}
Our algorithm for the Bayesian online selection
problem applies a $c$-selectable OCRS to the
set $A$ to obtain a random set
$I\subseteq A, I\in \mathcal{F}$.
To prove the theorem, we show that the
expected value of $I$ is at least a $c$-fraction
of the optimal value of~\eqref{eq:prophetRelax},
\ie,
$\E[\sum_{e\in I} Z_e] \geq
  c \cdot \sum_{e\in N}g_e(x_e)$.

Since the OCRS is $c$-selectable, we have
\begin{equation}\label{eq:prophetCSel}
\Pr[e\in I] \geq c\cdot x_e \qquad \forall e\in N
\enspace.
\end{equation}
A key observation is that the distribution of $Z_e$
conditioned on $e\in I$ is the same as the distribution
of $Z_e$ conditioned on $e\in A$. This follows from
the fact that the OCRS does not consider the precise
value of $Z_e$, but only knows whether $e\in A$ or not.
In particular, this implies
\begin{equation}\label{eq:selToAct}
\E[Z_e \mid e\in I] = \E[Z_e \mid e\in A]
  \qquad \forall e\in N
	\enspace.
\end{equation}
Combining the above observations we get
\ifbool{SODACamera}
{
\begin{align*}
\E\left[\sum_{e\in I} Z_e\right] &=
  \sum_{e\in N} \Pr[e\in I] \cdot \E[Z_e \mid e\in I]\\
   &\geq c \cdot \sum_{e\in N} x_e \cdot \E[Z_e \mid e\in A] \\
   &= c\cdot \sum_{e\in N} \Pr[e\in A] \cdot \E[Z_e \cdot \characteristic_{e\in A} \mid e \in A] \\
   &= c\cdot \sum_{e\in N} g_e(x_e)
			\enspace,
\end{align*}
where the first inequality follows from \eqref{eq:prophetCSel}
and~\eqref{eq:selToAct}; the second equality since $\Pr[e\in A] = x_e$; and the last equality by~\eqref{eq:actContr}.
This shows that our procedure is worse by at most a factor
of $c$ compared to the value of the relaxation~\eqref{eq:prophetRelax},
which completes the proof.
}
{
\begin{align*}
\E\left[\sum_{e\in I} Z_e\right] &=
  \sum_{e\in N} \Pr[e\in I] \cdot \E[Z_e \mid e\in I]\\
   &\geq c \cdot \sum_{e\in N} x_e \cdot \E[Z_e \mid e\in A]
      && \text{(by \eqref{eq:prophetCSel}
             and~\eqref{eq:selToAct})}\\
   &= c\cdot \sum_{e\in N} \Pr[e\in A] \cdot \E[Z_e \cdot \characteristic_{e\in A} \mid e \in A]
			&& \text{(since $\Pr[e\in A] = x_e$)}\\
   &= c\cdot \sum_{e\in N} g_e(x_e)
      && \text{(by~\eqref{eq:actContr})}
			\enspace.
\end{align*}
The last inequality shows that our procedure is worse by at most a factor
of $c$ compared to the value of the relaxation~\eqref{eq:prophetRelax},
which completes the proof.
}

\subsection{Oblivious posted pricing mechanisms}

\subsubsection*{Proof of Theorem~\ref{thm:copm}\ifbool{SODACamera}{.}{}}

The proof for Theorem~\ref{thm:copm} goes along the same
lines as the proof of Theorem~\ref{thm:prophet}
presented in Section~\ref{subsec:appProphet}, but
uses a different relaxation.
The relaxation we employ is the same as the one
used by Yan~\cite{yan_2010_mechanism}. For completeness
and ease of understanding we replicate some of the
arguments in~\cite{yan_2010_mechanism} and refer
to the excellent discussion of this relaxation
in~\cite{yan_2010_mechanism}
for more details about it.

Consider the random set of agents
$I^*\subseteq N, I^*\in \mathcal{F}$ served by
Myerson's mechanism, \ie, an optimal truthful
mechanism, and let $q_e^*=\Pr[e\in I^*]$ be
the probability that $e$ gets served.
Since only feasible subsets of agents can be
served, we have $q^* \in P$, because 
$P$ is a relaxation of $\mathcal{F}$.

For this fixed $q^*$ we can now define
independent mechanism design problems for the
different agents as follows. For each $e\in N$,
we are interested in finding a price distribution
(for the price offered to $e$), that maximizes
the expected revenue under the constraint
that $e$ gets served with probability
equal to $q^*_e$.
Based on results by Myerson~\cite{myerson_1981_optimal},
it follows that the optimal price distribution can be
chosen to be a two-price distribution,
which can be determined through
a well-known technique in mechanism design
known as \emph{ironing} (we refer to~\cite{yan_2010_mechanism}
for details).
We denote by 
$R_e(q^*_e)$ the expected revenue of this optimal
distribution, which
can be shown to be concave in $q^*_e$, 
and by $\mathcal{D}_e(q^*_e)$ the distribution
itself.
Since the family of these independent mechanism
design problems for the agents is less constrained
than the original BSMD, in which we also had to make sure
that the set of all served agents is in $\mathcal{F}$,
we have that $\sum_{e\in N} R_e(q^*_e)$
is an upper bound to the expected
revenue of an optimal mechanism for the  
original BSMD (we refer to~\cite{yan_2010_mechanism} for
a formal proof).
Hence, the following is a convex relaxation of
the original BSMD.
\begin{equation}\label{eq:relaxBSMD}
  \max\left\{\sum_{e\in N}R_e(q_e)
    \;\middle\vert\; q\in P\right\}
\end{equation}

The COPM we construct needs probabilities $x\in P$ such
that $\sum_{e\in N} R_e(x_e)$ is an upper bound on the
revenue of Myerson's mechanism.
By the above discussion,
this holds for $x=q^*$, or for $x$ being an optimal
solution to~\eqref{eq:relaxBSMD}.
To make this step constructive one can follow, for example,
the sampling-based approach of Chawla et
al.~\cite{chawla_2010_multi-parameter_LONG};
they estimate the probabilities $q^*$ by
running Myerson's mechanism sufficiently many times.
Alternatively, one could use convex optimization
techniques to optimize the relaxation~\eqref{eq:relaxBSMD}.%
\footnote{We highlight that even if only an approximately optimal
$x\in P$ can be obtained, \ie, one such that
$\alpha \cdot \sum_{e\in N}R_e(x_e)$ upper bounds the
optimal revenue for some $\alpha >1$, then all of
what follows still goes through simply with an additional
loss of a factor $\alpha$. This will lead to
a COPM that is at most a factor of $\alpha\cdot c$
worse than Myerson's mechanism.}
Our COPM is randomized and defined by the following
randomization over tuples $(p, \mathcal{F}')$. The price
vector $p\in\mathbb{R}_{\geq 0}^N$ is drawn according to
the product distribution
$\bigtimes_{e\in N} \mathcal{D}_e(x_e)$, where each $p_e$ for
$e\in N$ is drawn independently according
to the two-price distribution $\mathcal{D}_e(x_e)$.
The family $\mathcal{F}'$ is chosen to be equal to
the family $\mathcal{F}_x$ of the $c$-selectable OCRS
for the point $x\in P$. Hence, if our OCRS is deterministic,
then also $\mathcal{F}'=\mathcal{F}_x$ is deterministic,
in which case the randomization of our COPM is solely on
the price vector $p$.

To prove Theorem~\ref{thm:copm} we show that
our COPM has an expected revenue of at least
$c \cdot \sum_{e\in N} R_e(x_e)$.
We call an agent $e\in N$ \emph{active} if its personal
(random) valuation $Z_e$ is at least as large as
the (random) price $p_e$. In other words, we say
that $e$ is active if it would accept the offer
presented by our COPM.
By the definition of the distributions $\mathcal{D}_e(x_e)$,
we have that each agent $e$ is active with probability
$x_e$, independently of all other agents.
Notice that an agent $e\in N$ being active does
not imply that $e$ gets selected, because a COPM
is allowed to reject an agent if feasibility
in $\mathcal{F}'$ is not maintained.
However, because our OCRS is $c$-selectable,
we have that for any $e\in N$ with probability
at least $c$, the agent $e$ can be added to the agents served
so far no matter which subset of the active agents
has been already served; furthermore, this event
is independent of whether $e$ is active itself.
The expected revenue that our COPM gets from agent
$e$ is therefore at least $c\cdot R_e(x_e)$, which
completes the proof.

\subsubsection*{Extension to other objectives\ifbool{SODACamera}{.}{}}

Finally, we notice that, as highlighted
by Yan~\cite{yan_2010_mechanism}, the used
relaxation can easily be extended to a much larger
class of objective functions that are decomposable
with respect to the agents. More precisely, these
are objectives of the form 
$\E[\sum_{e\in N} \characteristic_{Z_e\geq p_e}\cdot
g_e(Z_e,p_e)]$, where for $e\in N$,
$g_e$ is a function of the
(random) valuation $Z_e$ of $e$ and the 
price $p_e$ offered to $e$.
In particular, the maximization of revenue which
we discussed above corresponds to
$g_e(Z_e, p_e) = p_e$. Similarly, one can deal with
welfare maximization or surplus maximization by
defining $g_e(Z_e,p_e)=Z_e$ and $g_e(Z_e,p_e)=Z_e-p_e$,
respectively.
The above reasoning why our COPM is at most a factor
of $c$ worse than the optimal truthful mechanism
extends to such objectives without modifications.

\subsection{Stochastic probing\ifbool{SODACamera}{.}{}}

We begin with the proof of Theorem~\ref{thm:ocrsProbing}. Consider the following relaxation. We use $w$ in the relaxation to denote the natural extension of $w$ to vectors (formally, $w(x) = \sum_{e \in N} w(e) \cdot x_e$). Additionally, the relaxation uses the expression $p \circ x$ to denote the element-wise multiplication of the probabilities vector $p$ and the variables vector $x$. Clearly this relaxation can be solved efficiently when there is a separation oracle for the polytopes $P_{in}$ and $P_{out}$.

\[ \begin{array}{lll}
	(R1) & \max & w(p \circ x) \\
	&& p \circ x \in P_{in} \\
	&& x \in P_{out} \\
	&& x \in [0, 1]^N \\
\end{array} \]

The following lemma proves an important property of $(R1)$. The proof of this lemma is based on the observation that one feasible solution for $(R1)$ is the vector $x \in [0, 1]^N$ in which $x_e$ is equal to the marginal probability that the optimal algorithm probes element $e$.

\begin{lemma}[Claim~3.1 of~\cite{gupta_2013_stochastic}] \label{lem:optimal_bound}
The optimal value of $(R1)$ upper bounds the the expected performance of the optimal algorithm for the weighted stochastic probing problem.
\end{lemma}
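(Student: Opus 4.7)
The plan is to exhibit a feasible solution to $(R1)$ whose objective value equals the expected value produced by an optimal algorithm $\mathrm{ALG}^*$ for the weighted stochastic probing problem. Concretely, for each $e \in N$, set
\[
x^*_e = \Pr[\mathrm{ALG}^* \text{ probes } e]\enspace,
\]
where the probability is over both the active/inactive realizations of the elements and any internal randomness of $\mathrm{ALG}^*$. I would then verify that $x^*$ satisfies all constraints of $(R1)$ and that $w(p \circ x^*)$ equals the expected weight of the elements selected by $\mathrm{ALG}^*$.

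For the outer constraint, I would argue that $x^* \in P_{out}$ by observing that at every execution the set $Q$ of probed elements lies in $\mathcal{F}_{out}$, so $x^*$ is an average of characteristic vectors $\characteristic_Q$ with $Q \in \mathcal{F}_{out}$. Hence $x^*$ lies in the convex hull $P_{\mathcal{F}_{out}}$, and since $P_{out}$ is a relaxation of $\mathcal{F}_{out}$ we have $P_{\mathcal{F}_{out}} \subseteq P_{out}$, yielding $x^* \in P_{out}$. For the inner constraint, the key observation is that the activity of any element $e$ is independent of the entire history that determines whether $\mathrm{ALG}^*$ decides to probe $e$, because $\mathrm{ALG}^*$ only learns $e$'s activity upon probing. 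Consequently,
\[
(p \circ x^*)_e = p_e \cdot \Pr[\mathrm{ALG}^* \text{ probes } e] = \Pr[e \in S]\enspace,
\]
where $S$ is the (random) set of selected elements. Since $S \in \mathcal{F}_{in}$ in every execution, the same convex combination argument gives $p \circ x^* \in P_{\mathcal{F}_{in}} \subseteq P_{in}$.

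Finally, the objective value of $x^*$ in $(R1)$ is
\[
w(p \circ x^*) = \sum_{e \in N} w(e) \cdot \Pr[e \in S] = \E\!\left[\sum_{e \in S} w(e)\right]\enspace,
\]
which is precisely the expected weight collected by $\mathrm{ALG}^*$. Thus the optimum of $(R1)$ is at least the expected performance of $\mathrm{ALG}^*$, which establishes the lemma. The only subtle point, and the one deserving careful phrasing, is the independence used to rewrite $(p\circ x^*)_e$ as $\Pr[e \in S]$: the event ``$\mathrm{ALG}^*$ probes $e$'' is measurable with respect to the activities of elements probed strictly before $e$ together with the algorithm's internal randomness, none of which involves $e$'s own activity, so independence indeed holds.
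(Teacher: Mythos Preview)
Your proposal is correct and follows exactly the approach the paper indicates: the paper does not give a full proof but states that it ``is based on the observation that one feasible solution for $(R1)$ is the vector $x \in [0, 1]^N$ in which $x_e$ is equal to the marginal probability that the optimal algorithm probes element $e$,'' which is precisely the vector you construct and analyze. Your handling of the independence step (that the decision to probe $e$ is measurable with respect to information not involving $e$'s own activity) is the standard justification and is fine.
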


Let $x^*$ be an optimal solution for $(R1)$. By Lemma~\ref{lem:optimal_bound}, to prove Theorem~\ref{thm:ocrsProbing} we only need to show an algorithm for the weighted stochastic probing problem which finds a solution of expected weight at least $b(c_{in} \cdot c_{out}) \cdot w(p \circ x^*)$. Our algorithm for the weighted stochastic probing problem is given as Algorithm~\ref{alg:probing}.

\ifbool{SODACamera}{\begin{algorithm2e}}{\begin{algorithm}}
\caption{\textsf{Probing Algorithm}} 
\label{alg:probing}
\DontPrintSemicolon
Let $A_{out}$ be a random set distributed like $R(bx^*)$.\\
Let $\hat{\cF}_{out} \subseteq \cF_{out}$ be an instantiation of the random family $\cF_{\pi_{out}, bx^*}$.\\
Let $\hat{\cF}_{in} \subseteq \cF_{in}$ be an instantiation of the random family $\cF_{\pi_{in}, p \circ (bx^*)}$.\\

\BlankLine

Let $Q, S \gets \varnothing$.\\
\For{every element $e \in N$ in the order chosen by the adversary}
{
	\If{$e \in A_{out}$, $S \cup \{e\} \in \hat{\cF}_{in}$ and $Q \cup \{e\} \in \hat{\cF}_{out}$ \label{line:conditions}}{
		Add $e$ to $Q$ and probe $e$.\\
		\lIf{$e$ is active}{Add $e$ to $S$.}
	}
}
\Return{S}.
\ifbool{SODACamera}{\end{algorithm2e}}{\end{algorithm}}

The next observation shows that Algorithm~\ref{alg:probing} is a legal algorithm for the weighted stochastic probing problem. 

\begin{observation}
The following always hold when Algorithm~\ref{alg:probing} terminates:
\begin{compactitem}
\item The set $Q$ of probed elements is in $\hat{\cF}_{out} \subseteq \cF_{out}$.
\item The set $S$ of selected elements is in $\hat{\cF}_{in} \subseteq \cF_{in}$.
\item The set $S$ contains exactly the active elements of $Q$.
\end{compactitem}
\end{observation}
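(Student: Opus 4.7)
The plan is to establish all three bullets by a single loop invariant argument maintained throughout the execution of Algorithm~\ref{alg:probing}. Concretely, I would prove by induction on the number of iterations of the \textbf{for} loop that at every point in the execution the following three properties hold simultaneously: (i) $Q \in \hat{\cF}_{out}$, (ii) $S \in \hat{\cF}_{in}$, and (iii) $S$ is exactly the set of elements of $Q$ that were revealed to be active upon being probed.

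For the base case, before the loop starts we have $Q = S = \varnothing$. By the definition of a greedy OCRS, the families $\cF_{\pi_{out}, bx^*}$ and $\cF_{\pi_{in}, p \circ (bx^*)}$ are down-closed, so their instantiations $\hat{\cF}_{out}$ and $\hat{\cF}_{in}$ both contain $\varnothing$, and (iii) holds vacuously.

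For the inductive step, fix an iteration processing an element $e$ and assume the three invariants hold at the start of that iteration. The sets $Q$ and $S$ are modified only when the conditions on line~\ref{line:conditions} pass, which in particular verify $Q \cup \{e\} \in \hat{\cF}_{out}$ and $S \cup \{e\} \in \hat{\cF}_{in}$. If these conditions fail, nothing changes and the invariants are preserved. Otherwise, the algorithm adds $e$ to $Q$, so (i) is preserved directly by the checked condition. If $e$ turns out to be inactive, $S$ is unchanged so (ii) still holds; moreover $e$ is added to $Q$ as an inactive probed element and not to $S$, so (iii) is preserved. If $e$ is active, then $e$ is added to $S$, and the new $S$ equals the previously checked set $S \cup \{e\} \in \hat{\cF}_{in}$, preserving (ii); finally (iii) is preserved because $e$ is added to both $Q$ and $S$ in this case, while previously selected elements remain correctly classified by the inductive hypothesis.

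The observation follows by applying the invariants to the final iteration. The argument is essentially mechanical, and I do not expect any real obstacle; the only subtlety worth flagging is the explicit appeal to the down-closedness of the families produced by a greedy OCRS (from the definition of greedy OCRS), which is what legitimizes the base case $\varnothing \in \hat{\cF}_{in} \cap \hat{\cF}_{out}$ and, more generally, makes the ``$S \cup \{e\} \in \hat{\cF}_{in}$'' and ``$Q \cup \{e\} \in \hat{\cF}_{out}$'' tests on line~\ref{line:conditions} meaningful (\ie, sensitive only to the current prefix of accepted elements, not to the future).
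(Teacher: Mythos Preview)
Your loop-invariant argument is correct and is exactly the natural formalization of this observation; the paper itself does not supply a proof, treating the claim as immediate from the structure of Algorithm~\ref{alg:probing}. One very small caveat: your base case relies on $\varnothing \in \hat{\cF}_{in}$ and $\varnothing \in \hat{\cF}_{out}$, which follows from down-closedness only if these families are nonempty---this is implicit in the paper's setup (and in any case forced whenever the selectability parameter is positive), so there is no real gap.
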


In the rest of the section we use notation and results introduced in Section~\ref{sec:selectability_to_approximation}. Let $\bar{\pi}_{in}$ and $\bar{\pi}_{out}$ denote the characteristic CRSs of the greedy OCRSs $\pi_{in}$ and $\pi_{out}$, respectively. We would like to use $\bar{\pi}_{in}$ and $\bar{\pi}_{out}$ to lower bound the value of a subset of $S$, and through that subset also the value of $S$. To achieve that objective we first need to describe the said subset of $S$ as an expression of $\bar{\pi}_{in}$ and $\bar{\pi}_{out}$.

Let $A_{in}$ be the set of elements that belong to $A_{out}$ and are also active. One can observe that $A_{in}$ is distributed like $R(p \circ x)$. Additionally, let us couple the randomness of Algorithm~\ref{alg:probing} and the CRSs $\bar{\pi}_{in}$ and $\bar{\pi}_{out}$ as follows: we use the same instantiations for the random families $\cF_{\pi_{in}, p \circ (bx^*)}$ and $\cF_{\pi_{out}, bx^*}$ in both Algorithm~\ref{alg:probing} and the definitions of $\bar{\pi}_{in}$ and $\bar{\pi}_{out}$, respectively. Notice that this coupling implies:
\ifbool{SODACamera}
{
\begin{multline*}
	\bar{\pi}_{in}(A_{in})
	=
	\{e \in A_{in} \mid I \cup \{e\} \in \hat{\cF}_{in}\\ \forall I \subseteq A_{in}, I \in \hat{\cF}_{in}\}
\end{multline*}
}
{
\[
	\bar{\pi}_{in}(A_{in})
	=
	\{e \in A_{in} \mid I \cup \{e\} \in \hat{\cF}_{in}\;\; \forall I \subseteq A_{in}, I \in \hat{\cF}_{in}\}
\]
}
and
\ifbool{SODACamera}
{
\begin{multline*}
	\bar{\pi}_{out}(A_{out})
	=
	\{e \in A_{out} \mid I \cup \{e\} \in \hat{\cF}_{out}\\ \forall I \subseteq A_{out}, I \in \hat{\cF}_{out}\}
	\enspace.
\end{multline*}
}
{
\[
	\bar{\pi}_{out}(A_{out})
	=
	\{e \in A_{out} \mid I \cup \{e\} \in \hat{\cF}_{out}\;\; \forall I \subseteq A_{out}, I \in \hat{\cF}_{out}\}
	\enspace.
\]
}

\begin{observation} \label{obs:subset_pi}
$\bar{\pi}_{in}(A_{in}) \cap \bar{\pi}_{out}(A_{out}) \subseteq S$.
\end{observation}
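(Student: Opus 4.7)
My plan is to show that any element $e \in \bar{\pi}_{in}(A_{in}) \cap \bar{\pi}_{out}(A_{out})$ satisfies all the selection criteria of Algorithm~\ref{alg:probing} at the moment it is processed, and is hence added to $S$. Fix such an element $e$ and let $Q_e, S_e$ denote the values of $Q$ and $S$ immediately before $e$ is considered by the main loop.

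The first step is to record two invariants maintained by Algorithm~\ref{alg:probing}: at every point in its execution one has (i) $Q_e \subseteq A_{out}$ and $Q_e \in \hat{\cF}_{out}$, and (ii) $S_e \subseteq A_{in}$ and $S_e \in \hat{\cF}_{in}$. Both follow directly from the guard on line~\ref{line:conditions}: an element is only added to $Q$ if it lies in $A_{out}$ and the extended set belongs to $\hat{\cF}_{out}$, and it is only added to $S$ if it was also active (hence in $A_{in}$) and the extended set belongs to $\hat{\cF}_{in}$.

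The second step is to verify each of the three conditions on line~\ref{line:conditions} for the element $e$ itself. Since $e \in A_{in} \subseteq A_{out}$, the first condition $e \in A_{out}$ holds. For the second, the definition of the characteristic CRS $\bar{\pi}_{in}$ applied to $e \in \bar{\pi}_{in}(A_{in})$ states that $I \cup \{e\} \in \hat{\cF}_{in}$ for every $I \subseteq A_{in}$ with $I \in \hat{\cF}_{in}$; instantiating with $I = S_e$ (which lies in $\hat{\cF}_{in}$ and is contained in $A_{in}$ by invariant (ii)) gives $S_e \cup \{e\} \in \hat{\cF}_{in}$. The analogous argument, using $e \in \bar{\pi}_{out}(A_{out})$ together with invariant (i), yields $Q_e \cup \{e\} \in \hat{\cF}_{out}$. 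Thus Algorithm~\ref{alg:probing} probes $e$, and since $e \in A_{in}$ means that $e$ is active, it is added to $S$, proving $e \in S$.

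There is no real obstacle here: the statement is essentially an unpacking of the definitions, and the crucial point is the coupling noted just before the observation, which ensures that the instantiations $\hat{\cF}_{in}$ and $\hat{\cF}_{out}$ appearing in Algorithm~\ref{alg:probing} are identical to those used when evaluating $\bar{\pi}_{in}(A_{in})$ and $\bar{\pi}_{out}(A_{out})$. Without this coupling the two sides would refer to independent random families and the containment would fail.
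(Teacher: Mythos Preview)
Your proof is correct and follows essentially the same approach as the paper: fix $e$, note that the algorithm's invariants give $S_e\subseteq A_{in}$, $S_e\in\hat{\cF}_{in}$, $Q_e\subseteq A_{out}$, $Q_e\in\hat{\cF}_{out}$, and then use the defining property of the characteristic CRSs to conclude that all conditions on line~\ref{line:conditions} are met, so the active element $e$ is added to $S$. Your explicit mention of the coupling between $\hat{\cF}_{in},\hat{\cF}_{out}$ in the algorithm and in $\bar{\pi}_{in},\bar{\pi}_{out}$ is a useful clarification that the paper leaves implicit.
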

\begin{proof}
Consider an element $e \in \bar{\pi}_{in}(A_{in}) \cap \bar{\pi}_{out}(A_{out})$. Since $e \in \bar{\pi}_{in}(A_{in}) \subseteq A_{in}$, $e$ must be both active and in $A_{out}$. Let $S_e$ and $Q_e$ denote the sets $S$ and $Q$ immediately before $e$ is processed by Algorithm~\ref{alg:probing}. Clearly $S_e \subseteq A_{in}$ and $S_e \in \hat{\cF}_{in}$. Together with the fact that $e \in \bar{\pi}_{in}(A_{in})$, these observations imply that $S_e \cup \{e\} \in \hat{\cF}_{in}$. An analogous arguments shows also that $Q_e \cup \{e\} \in \hat{\cF}_{out}$. Hence, we proved that $e$ is an active element obeying all the conditions on Line~\ref{line:conditions} when processed by Algorithm~\ref{alg:probing}, and therefore, $e$ is added to $S$ by the algorithm.
\end{proof}

We defined $A_{out}$ as a random set distributed like $R(bx^*)$ and $A_{in}$ as the intersection of $A_{out}$ and the set of active elements. However, for the purpose of analyzing the value of the set $\bar{\pi}_{in}(A_{in}) \cap \bar{\pi}_{out}(A_{out})$ we can assume any construction procedure that results in the same joint distribution of $A_{in}$ and $A_{out}$. The following is a convenient construction that we are going to use from this point on: the set $A_{in}$ is a random set distributed like $R(p \circ (bx^*))$. The set $A_{out}$ is calculated by starting with $A_{in}$ and adding to it every element $e \not \in A_{in}$ with probability $bx^*_e(1 - p_e)/(1-bp_ex^*_e)$, independently. Notice that this construction indeed produces the same joint distribution of the sets $A_{in}$ and $A_{out}$ as the original construction. For ease of notation, let us denote by $z$ a vector in $[0, 1]^N$ defined by: $z_e = bx^*_e(1 - p_e)/(1-bp_ex^*_e)$ for every $e \in N$. Using this notation we get that $A_{out}$ has the same distribution as $A_{in} \cup R(z)$.

The new construction of $A_{out}$ implies that $A_{out}$ is a random function of $A_{in}$, and so is the expression $\bar{\pi}_{in}(A_{in}) \cap \bar{\pi}_{out}(A_{out})$. Thus, we can define a new CRS $\bar{\pi}$ for $P_{in}$ by the equality $\bar{\pi}(A_{in}) = \bar{\pi}_{in}(A_{in}) \cap \bar{\pi}_{out}(A_{out})$. Notice that $\bar{\pi}$ is a true CRS for $P_{in}$ in the sense that it always outputs a set in $\cF_{in}$ since $\bar{\pi}_{in}(A_{in})$ is guaranteed to be in $\cF_{in}$. Let us now study the properties of $\bar{\pi}$.

\begin{lemma} \label{lem:monotone}
The CRS $\bar{\pi}$ is monotone.
\end{lemma}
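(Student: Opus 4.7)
The plan is to show monotonicity by a pointwise coupling argument that piggybacks on the per-sample monotonicity already established for $\bar{\pi}_{in}$ and $\bar{\pi}_{out}$ in the proof of Lemma~\ref{lem:characteristic}. Recall that the proof of that lemma actually showed something stronger than the probabilistic monotonicity stated in the definition: for every fixed instantiation of $\cF_{\pi,x}$, if $e \in A_1 \subseteq A_2$ and $e \in \bar{\pi}(A_2)$, then $e \in \bar{\pi}(A_1)$.

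I would first introduce a convenient coupling of the randomness defining $\bar{\pi}$. Namely, sample once and for all: the family $\hat{\cF}_{in}$ (distributed as $\cF_{\pi_{in}, p \circ (bx^*)}$), the family $\hat{\cF}_{out}$ (distributed as $\cF_{\pi_{out}, bx^*}$), and an independent family $\{U_e\}_{e \in N}$ of uniform $[0,1]$ variables. Use these same random bits to define $\bar{\pi}(S)$ for every set $S$: the augmented set is $A_{out}(S) = S \cup \{e \notin S : U_e < z_e\}$, and
\[
  \bar{\pi}(S) = \bar{\pi}_{in}(S) \cap \bar{\pi}_{out}(A_{out}(S)),
\]
where $\bar{\pi}_{in}$ and $\bar{\pi}_{out}$ are evaluated using $\hat{\cF}_{in}$ and $\hat{\cF}_{out}$.

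Next I would fix an element $e$ and sets $e \in S_1 \subseteq S_2$, and show the deterministic implication $e \in \bar{\pi}(S_2) \Rightarrow e \in \bar{\pi}(S_1)$ under this coupling. Two observations drive this. First, $A_{out}(S_1) \subseteq A_{out}(S_2)$: any $f \in A_{out}(S_1)$ is either in $S_1 \subseteq S_2 \subseteq A_{out}(S_2)$, or else $f \notin S_1$ and $U_f < z_f$, in which case $f \in A_{out}(S_2)$ regardless of whether $f$ lies in $S_2$. Second, $e \in S_1 \subseteq A_{out}(S_1) \cap A_{out}(S_2)$. Now if $e \in \bar{\pi}(S_2)$, then $e \in \bar{\pi}_{in}(S_2)$ and $e \in \bar{\pi}_{out}(A_{out}(S_2))$. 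Applying the pointwise monotonicity of $\bar{\pi}_{in}$ to the pair $e \in S_1 \subseteq S_2$ gives $e \in \bar{\pi}_{in}(S_1)$; applying it to $\bar{\pi}_{out}$ with $e \in A_{out}(S_1) \subseteq A_{out}(S_2)$ gives $e \in \bar{\pi}_{out}(A_{out}(S_1))$. Together, $e \in \bar{\pi}(S_1)$.

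Finally, taking expectations over the coupled randomness yields $\Pr[e \in \bar{\pi}(S_1)] \geq \Pr[e \in \bar{\pi}(S_2)]$, which is the desired monotonicity. I do not anticipate any real obstacle: the only subtlety is choosing a coupling under which $A_{out}$ is set-monotone in its input, and the $\{U_e\}$ construction handles this cleanly because only elements outside the input set are candidates for augmentation.
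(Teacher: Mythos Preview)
Your argument is correct. Both proofs reduce to the monotonicity of the characteristic CRSs $\bar{\pi}_{in}$ and $\bar{\pi}_{out}$ established in Lemma~\ref{lem:characteristic}, but they handle the random augmentation $A_{out}$ differently. The paper fixes $T_1\subseteq T_2$, uses the \emph{independence} of $\hat{\cF}_{in}$, $\hat{\cF}_{out}$, and $R(z)$ to factor
\[
\Pr[e\in\bar{\pi}_{in}(T_i)\cap\bar{\pi}_{out}(T_i\cup R(z))]=\Pr[e\in\bar{\pi}_{in}(T_i)]\cdot\Pr[e\in\bar{\pi}_{out}(T_i\cup R(z))],
\]
and then applies monotonicity to each factor separately (averaging over $R(z)$ for the second one). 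You instead couple the augmentations via common uniforms $\{U_e\}$, obtain the deterministic containment $A_{out}(S_1)\subseteq A_{out}(S_2)$, and invoke the \emph{pointwise} monotonicity of each characteristic CRS to get a sample-path implication before averaging. Your route avoids the factoring step, so it would go through even if the random families $\hat{\cF}_{in}$ and $\hat{\cF}_{out}$ were correlated; the paper's proof genuinely uses their independence. Conversely, the paper's argument only needs the distributional monotonicity stated in the definition of a monotone CRS, whereas yours leans on the stronger per-instantiation implication proved inside Lemma~\ref{lem:characteristic}.
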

\begin{proof}
We need to show that every element $e \in N$ and two sets $e \in T_1 \subseteq T_2 \subseteq N$ obey the inequality:
\ifbool{SODACamera}
{
\begin{align*}
	\Pr&[e \in \bar{\pi}(A_{in}) \mid A_{in} = T_1] \\
	&=
	\Pr[e \in \bar{\pi}_{in}(A_{in}) \cap \bar{\pi}_{out}(A_{out}) \mid A_{in} = T_1]\\
	& \geq{} 
	\Pr[e \in \bar{\pi}_{in}(A_{in}) \cap \bar{\pi}_{out}(A_{out}) \mid A_{in} = T_2] \\
	& =
	\Pr[e \in \bar{\pi}(A_{in}) \mid A_{in} = T_2]
	\enspace.
\end{align*}
}
{
\begin{align*}
	&
	\Pr[e \in \bar{\pi}(A_{in}) \mid A_{in} = T_1]
	=
	\Pr[e \in \bar{\pi}_{in}(A_{in}) \cap \bar{\pi}_{out}(A_{out}) \mid A_{in} = T_1]\\
	\geq{} &
	\Pr[e \in \bar{\pi}_{in}(A_{in}) \cap \bar{\pi}_{out}(A_{out}) \mid A_{in} = T_2]
	=
	\Pr[e \in \bar{\pi}(A_{in}) \mid A_{in} = T_2]
	\enspace.
\end{align*}
}
This is true since:
\ifbool{SODACamera}
{
\begin{align*}
	\Pr&[e \in \bar{\pi}_{in}(A_{in}) \cap \bar{\pi}_{out}(A_{out}) \mid A_{in} = T_1] \\
	& =
	\Pr[e \in \bar{\pi}_{in}(T_1) \cap \bar{\pi}_{out}(T_1 \cup R(z))]\\
	& ={} 
	\Pr[e \in \bar{\pi}_{in}(T_1)] \cdot \Pr[e \in \bar{\pi}_{out}(T_1 \cup R(z))] \\
	& \geq
	\Pr[e \in \bar{\pi}_{in}(T_2)] \cdot \Pr[e \in \bar{\pi}_{out}(T_2 \cup R(z))]\\
	& =
	\Pr[e \in \bar{\pi}_{in}(T_2) \cap \bar{\pi}_{out}(T_2 \cup R(z))] \\
	& =
	\Pr[e \in \bar{\pi}_{in}(A_{in}) \cap \bar{\pi}_{out}(A_{out}) \mid A_{in} = T_2]
	\enspace,
\end{align*}
}
{
\begin{align*}
	&
	\Pr[e \in \bar{\pi}_{in}(A_{in}) \cap \bar{\pi}_{out}(A_{out}) \mid A_{in} = T_1]
	=
	\Pr[e \in \bar{\pi}_{in}(T_1) \cap \bar{\pi}_{out}(T_1 \cup R(z))]\\
	={} &
	\Pr[e \in \bar{\pi}_{in}(T_1)] \cdot \Pr[e \in \bar{\pi}_{out}(T_1 \cup R(z))]
	\geq
	\Pr[e \in \bar{\pi}_{in}(T_2)] \cdot \Pr[e \in \bar{\pi}_{out}(T_2 \cup R(z))]\\
	=&
	\Pr[e \in \bar{\pi}_{in}(T_2) \cap \bar{\pi}_{out}(T_2 \cup R(z))]
	=
	\Pr[e \in \bar{\pi}_{in}(A_{in}) \cap \bar{\pi}_{out}(A_{out}) \mid A_{in} = T_2]
	\enspace,
\end{align*}
}
where the inequality follows since both $\bar{\pi}_{in}$ and $\bar{\pi}_{out}$ are monotone by Lemma~\ref{lem:characteristic}.
\end{proof}

The following lemma shows that $\bar{\pi}$ obeys a weak variant of balanceness.

\begin{definition}
A CRS $\pi$ for a polytope $P$ is $(x, c)$-balanced for a vector $x$ and $c \in [0, 1]$ if $\Pr[e \in \pi(R(x)) \mid e \in R(x)] \geq c$ for every element $e \in N$ having $x_e > 0$.
\end{definition}

\begin{lemma} \label{lem:balanceness}
The CRS $\bar{\pi}$ is $(p \circ (bx^*), c_{in} \cdot c_{out})$-balanced.
\end{lemma}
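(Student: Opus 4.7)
The plan is to apply the FKG inequality to the two selection events, exploiting the alternative (equivalent) sampling construction where the randomness of $A_{in}$ and $T$ is independent. Concretely, draw $A_{in}\sim R(p\circ(bx^*))$ and an independent $T\sim R(z)$, and set $A_{out}=A_{in}\cup T$; a short computation using $z_f = bx^*_f(1-p_f)/(1-bp_fx^*_f)$ confirms that each $f\in N$ satisfies $\Pr[f\in A_{out}]=bx^*_f$, so the joint law is unchanged. Fix an element $e$ with $p_e\cdot b\cdot x^*_e>0$. Condition on $e\in A_{in}$ (which automatically forces $e\in A_{out}$); the remaining randomness consists of independent Bernoulli variables indicating membership of each $f\neq e$ in $A_{in}$ and in $T$, together with the two independent instantiations of $\cF_{\pi_{in},p\circ(bx^*)}$ and $\cF_{\pi_{out},bx^*}$.

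Next, I would define the indicator random variables $\chi_1 = \mathbb{1}[e\in\bar\pi_{in}(A_{in})]$ and $\chi_2 = \mathbb{1}[e\in\bar\pi_{out}(A_{out})]$ and argue that, after fixing the two instantiations, both $\chi_1$ and $\chi_2$ are \emph{decreasing} functions of the vector of Bernoullis $(\mathbb{1}[f\in A_{in}\setminus\{e\}],\mathbb{1}[f\in T])_{f\neq e}$. This monotonicity is exactly what was established in the proof of Lemma~\ref{lem:characteristic}: enlarging the argument of a characteristic CRS can only remove $e$ from its output. Since both functions are decreasing in independent variables, FKG gives
\[
\E[\chi_1\chi_2\mid\text{inst.},\,e\in A_{in}]\;\ge\;\E[\chi_1\mid\text{inst.},\,e\in A_{in}]\cdot\E[\chi_2\mid\text{inst.},\,e\in A_{in}].
\]
Because the two instantiations are independent of each other and each factor on the right depends on only one of them, taking expectation over the instantiations preserves the inequality:
\[
\E[\chi_1\chi_2\mid e\in A_{in}]\;\ge\;\E[\chi_1\mid e\in A_{in}]\cdot\E[\chi_2\mid e\in A_{in}].
\]

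Finally, I would lower-bound each factor separately using Lemma~\ref{lem:characteristic}. For $\chi_1$, the vector $p\circ(bx^*)$ lies in $bP_{in}$ (since $p\circ x^*\in P_{in}$ by $(R1)$), so the $(b,c_{in})$-balancedness of $\bar\pi_{in}$ yields $\E[\chi_1\mid e\in A_{in}]\ge c_{in}$. For $\chi_2$, conditional on $e\in A_{in}$ each $f\neq e$ lies in $A_{out}$ independently with probability $bx^*_f$ and $e$ lies in $A_{out}$ with probability $1$; equivalently, conditional on $e\in A_{in}$, $A_{out}$ is distributed as $R(bx^*)$ conditioned on $e\in R(bx^*)$. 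Since $bx^*\in bP_{out}$, the $(b,c_{out})$-balancedness of $\bar\pi_{out}$ gives $\E[\chi_2\mid e\in A_{in}]\ge c_{out}$. Multiplying produces $c_{in}\cdot c_{out}$, which is the desired bound.

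The main obstacle is the correlation between the two events through their shared dependence on $A_{in}\setminus\{e\}$; FKG is the natural tool, but one must set it up against the correct independence structure (independent Bernoullis for $A_{in}$ and $T$, plus independent randomness for the two OCRSs) and verify the common monotonicity direction of $\chi_1$ and $\chi_2$ before applying the inequality.
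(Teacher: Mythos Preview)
Your proof is correct and follows essentially the same approach as the paper's: both use the FKG inequality to decorrelate the two selection events $e\in\bar\pi_{in}(A_{in})$ and $e\in\bar\pi_{out}(A_{out})$, then bound each factor via the $(b,c_{in})$- and $(b,c_{out})$-balancedness of the characteristic CRSs from Lemma~\ref{lem:characteristic}. The only cosmetic difference is the order of operations: you fix the two OCRS instantiations first and apply FKG over the full product of independent Bernoullis (both the $A_{in}\setminus\{e\}$ and $T$ coordinates), whereas the paper first averages over the instantiations and over $R(z)$ to obtain two decreasing functions of $A_{in}$ alone and then applies FKG only over the $A_{in}$ coordinates---both routes are valid and yield the same bound.
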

\begin{proof}
Since $A_{in}$ is distributed like $R(p \circ (bx^*))$, we need to show that $\Pr[e \in \bar{\pi}_{in}(A_{in}) \cap \bar{\pi}_{out}(A_{out}) \mid e \in A_{in}] \geq c_1 \cdot c_2$ holds for every element $e \in N$ having $bp_ex^*_e > 0$. 

Since $\bar{\pi}_{in}$ and $\bar{\pi}_{out}$ are monotone by Lemma~\ref{lem:characteristic}, we get that both $\Pr[e \in \bar{\pi}_{in}(T)]$ and $\Pr[e \in \bar{\pi}_{out}(T \cup R(z))]$ are decreasing functions of $T$ as long as $e$ is in $T$. Thus, by the FKG inequality:
\ifbool{SODACamera}
{\allowdisplaybreaks
\begin{align*}
	\Pr& [e \in \bar{\pi}_{in}(A_{in}) \cap \bar{\pi}_{out}(A_{out}) \mid e\in A_{in}] \\
	& =
	\Pr[e \in \bar{\pi}_{in}(A_{in}) \cap \bar{\pi}_{out}(A_{in} \cup R(z)) \mid e \in A_{in}]\\
	& \geq{} 
	\Pr[e \in \bar{\pi}_{in}(A_{in}) \mid e \in A_{in}]   \\
  & \qquad \quad \cdot\Pr[e \in \bar{\pi}_{out}(A_{in} \cup R(z)) \mid e \in A_{in}]\\
	& =
	\Pr[e \in \bar{\pi}_{in}(A_{in}) \mid e \in A_{in}] \\
  & \qquad\quad \cdot \Pr[e \in \bar{\pi}_{out}(A_{out}) \mid e \in A_{out}]
	\enspace,
\end{align*}
}
{
\begin{align*}
	\Pr[e \in \bar{\pi}_{in}(A_{in}) \cap \bar{\pi}_{out}(A_{out}) \mid e&\in A_{in}]
	=
	\Pr[e \in \bar{\pi}_{in}(A_{in}) \cap \bar{\pi}_{out}(A_{in} \cup R(z)) \mid e \in A_{in}]\\
	\geq{} &
	\Pr[e \in \bar{\pi}_{in}(A_{in}) \mid e \in A_{in}] \cdot \Pr[e \in \bar{\pi}_{out}(A_{in} \cup R(z)) \mid e \in A_{in}]\\
	={} &
	\Pr[e \in \bar{\pi}_{in}(A_{in}) \mid e \in A_{in}] \cdot \Pr[e \in \bar{\pi}_{out}(A_{out}) \mid e \in A_{out}]
	\enspace,
\end{align*}
}
where the last equality uses the fact that the membership of every element in the sets $A_{in}$ and $R(z)$ is independent from the membership of other elements in these sets.

The set $A_{in}$ is distributed like $R(p \circ (bx^*))$ and the vector $p \circ (bx^*)$ is inside the polytope $bP_{in}$. Thus, since $\bar{\pi}_{in}$ is $(b, c_{in})$-balanced by Lemma~\ref{lem:characteristic}, we get: $\Pr[e \in \bar{\pi}_{in}(A_{in}) \mid e \in A_{in}] \geq c_{in}$. Similarly, the set $A_{out}$ is distributed like $R(bx^*)$ and the vector $bx^*$ is inside the polytope $bP_{out}$. Thus, since $\bar{\pi}_{out}$ is $(b, c_{out})$-balanced by Lemma~\ref{lem:characteristic}, we get: $\Pr[e \in \bar{\pi}_{in}(A_{out}) \mid e \in A_{out}] \geq c_{out}$. The lemma now follows by combining the above inequalities.
\end{proof}

In the following corollary we use $w$ to denote its natural extension to sets, \ie, $w(T) = \sum_{e \in T} w(e)$.

\begin{corollary}
$\bE[w(S)] \geq b(c_{in} \cdot c_{out}) \cdot w(p \circ x^*)$.
\end{corollary}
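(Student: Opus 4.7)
The plan is to chain the three results that have just been established: Observation~\ref{obs:subset_pi} says that $\bar{\pi}_{in}(A_{in}) \cap \bar{\pi}_{out}(A_{out})$ is always contained in the output $S$ of Algorithm~\ref{alg:probing}; by the definition of $\bar\pi$, this intersection equals $\bar\pi(A_{in})$; and Lemma~\ref{lem:balanceness} gives a pointwise lower bound on the probability that an element survives $\bar\pi$.

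Concretely, I would argue as follows. Since $w$ is non-negative (standard assumption in the weighted stochastic probing problem; the reduction is otherwise meaningless for negative weights, as these would never be probed), Observation~\ref{obs:subset_pi} implies
\begin{equation*}
\bE[w(S)] \;\geq\; \bE[w(\bar\pi(A_{in}))] \;=\; \sum_{e \in N} w(e) \cdot \Pr[e \in \bar\pi(A_{in})] \enspace.
\end{equation*}
For each $e \in N$, I would then condition on whether $e \in A_{in}$: since $\bar\pi(A_{in}) \subseteq A_{in}$, we have $\Pr[e \in \bar\pi(A_{in})] = \Pr[e \in A_{in}] \cdot \Pr[e \in \bar\pi(A_{in}) \mid e \in A_{in}]$. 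Because $A_{in}$ is distributed like $R(p \circ (bx^*))$, the first factor equals $b p_e x^*_e$; and Lemma~\ref{lem:balanceness} asserts that the second factor is at least $c_{in} \cdot c_{out}$ whenever $b p_e x^*_e > 0$ (for coordinates with $bp_e x^*_e = 0$, the product contributes zero, so the bound trivially holds).

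Putting these together,
\begin{equation*}
\bE[w(S)] \;\geq\; \sum_{e \in N} w(e) \cdot b p_e x^*_e \cdot (c_{in} \cdot c_{out}) \;=\; b(c_{in} \cdot c_{out}) \cdot w(p \circ x^*) \enspace,
\end{equation*}
which is the desired inequality. I do not expect any real obstacle: all the technical content has already been absorbed into Observation~\ref{obs:subset_pi}, the definition of $\bar\pi$, and the balanceness lemma. The only small subtlety is justifying the per-element conditioning and the fact that the monotonicity/balanceness of $\bar\pi$ (Lemmas~\ref{lem:monotone} and~\ref{lem:balanceness}) already account for the joint randomness of $A_{in}$, $R(z)$, and the instantiated families, so no further independence argument is needed here.
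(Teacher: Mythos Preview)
Your proposal is correct and follows essentially the same argument as the paper: both use linearity of expectation on $\bE[w(\bar\pi(A_{in}))]$, condition on $e\in A_{in}\sim R(p\circ(bx^*))$, apply Lemma~\ref{lem:balanceness} for the $c_{in}\cdot c_{out}$ factor, and then invoke non-negativity of $w$ together with Observation~\ref{obs:subset_pi} to pass from $\bar\pi(A_{in})$ to $S$. The only cosmetic difference is the order in which the subset inequality and the per-element computation are presented.
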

\begin{proof}
By Lemma~\ref{lem:balanceness}, linearity of the expectation and the observation that $A_{in}$ is distributed like $R(p \circ (bx^*))$, we get:
\ifbool{SODACamera}
{
\begin{align*}
	\bE  & [w(\bar{\pi}(A_{in}))]
	 ={} 
	\sum_{e \in N} w(e) \cdot \Pr[e \in \bar{\pi}(R(p \circ (bx^*)))]\\
	& ={} 
	\sum_{e \in N} w(e) \cdot \Pr[e \in R(p \circ (bx^*))] \\
  & \qquad \quad \cdot \Pr[e \in \bar{\pi}(R(p \circ (bx^*))) \mid e \in R(p \circ (bx^*))]\\
	& \geq{} 
	\sum_{e \in N} w(e) \cdot (p_e \circ (bx^*_e)) \cdot (c_{in} \cdot c_{out}) \\
	& =
	b(c_{in} \cdot c_{out}) \cdot w(p \circ x^*)
	\enspace.
\end{align*}
}
{
\begin{align*}
	\bE[w(\bar{\pi}(A_{in}))]
	={} &
	\sum_{e \in N} w(e) \cdot \Pr[e \in \bar{\pi}(R(p \circ (bx^*)))]\\
	={} &
	\sum_{e \in N} w(e) \cdot \Pr[e \in R(p \circ (bx^*))] \cdot \Pr[e \in \bar{\pi}(R(p \circ (bx^*))) \mid e \in R(p \circ (bx^*))]\\
	\geq{} &
	\sum_{e \in N} w(e) \cdot (p_e \circ (bx^*_e)) \cdot (c_{in} \cdot c_{out})
	=
	b(c_{in} \cdot c_{out}) \cdot w(p \circ x^*)
	\enspace.
\end{align*}
}
Since all the weights are non-negative and $\bar{\pi}(A_{in}) = \bar{\pi}_{in}(A_{in}) \cap \bar{\pi}_{out}(A_{out}) \subseteq S$ by Observation~\ref{obs:subset_pi}, the last inequality implies $\bE[w(S)] \geq \bE[w(\bar{\pi}(A_{in}))] \geq b(c_{in} \cdot c_{out}) \cdot w(p \circ x^*)$.
\end{proof}

Following the above discussion, Theorem~\ref{thm:ocrsProbing} is implied by the last corollary. We can now prove Theorem~\ref{thm:ocrsProbDeadline} as a direct consequence of Theorem~\ref{thm:ocrsProbing}.

\begin{proof}[Proof of Theorem~\ref{thm:ocrsProbDeadline}]
Consider a down-closed set $\cF \subseteq 2^N$ containing every set $T \subseteq N$ if and only if all the elements of $T$ can be queried while respecting the deadlines. Formally,
\[
	\cF
	=
	\Big\{T \subseteq 2^N \;\Big\vert\; \forall_{1 \leq d \leq |N|} \; |\{e \in T \mid d_e \leq d\}| \leq d\Big\}
	\enspace.
\]
By definition $(N, \cF)$ is a laminar matroid, thus, by Theorem~\ref{thm:direct_OCRS} there exists a $(b, 1 - b)$-selectable greedy OCRS for its matroid polytope $P_\cF$. Together with the existence of $\pi_{out}$ we get, by Theorem~\ref{thm:combineOCRSs}, a $(b, (1 - b)c_{out})$-selectable greedy OCRS for the polytope $P'_{out} = P_{out} \cap P_\cF$. Notice that this polytope is a relaxation of the down-closed family $\cF'_{out} = \cF_{out} \cap \cF$. Moreover, $P'_{out}$ has a separation oracle whenever $P_{out}$ has such an oracle.

For convenience, let us use $(\cF_1, \cF_2)$-probing as a shorthand for the weighted stochastic probing problem with $\cF_1$ and $\cF_2$ as the inner and outer constraints, respectively. Consider the best algorithm for $(\cF_{in}, \cF_{out})$-probing with deadlines. Since this algorithm probes with respect to the deadlines, the set of elements it probes must be in $\cF$. Hence, the same algorithm is also an algorithm for $(\cF_{in}, \cF'_{out})$-probing. Thus, by Theorem~\ref{thm:ocrsProbing} we have an algorithm $ALG$ for $(\cF_{in}, \cF'_{out})$-probing whose approximation ratio is $b(1 - b) \cdot c_{in} \cdot c_{out}$ compared to the best algorithm for $(\cF_{in}, \cF_{out})$-probing with deadlines. Moreover, the approximation ratio of $ALG$ holds regardless of the order in which $ALG$ can probe elements.

The algorithm we suggest for $(\cF_{in}, \cF_{out})$-probing with deadlines is $ALG$ when we allow it to probe elements in increasing deadlines order. We have already proved that $ALG$ has the approximation ratio guaranteed by the theorem, so we only need to explain why does it respect the deadlines. Assume towards a contradiction that $ALG$ probes element $e$ after time $d_e$. This means that $ALG$ probes a set $T$ of at least $d_e$ elements before it probes $e$. However, since $ALG$ can probe elements only in increasing deadlines order, all the elements of $T \cup \{e\}$ have a deadline of at most $d_e$. The last observation implies that $T \cup \{e\} \not \in \cF$, which contradicts the fact that the set of elements probed by $ALG$ is always in $\cF'_{out} \subseteq \cF$.
\end{proof}

The rest of this section is devoted to proving Theorem~\ref{thm:ocrsProbSubm}. Let $f\colon 2^N \to \bR_{\geq 0}$ be the non-negative monotone submodular objective function of the problem, \ie, the value of an output $S$ of the probing algorithm is $f(S)$. We need to introduce an extension of $f$ to $[0, 1]^N$ studied by~\cite{calinescu_2007_maximizing}.
\ifbool{SODACamera}
{
\begin{multline*}
	f^+(x) = \max\left\{\sum_{T \subseteq N}  \alpha_T \cdot f(T) ~\middle|~ \sum_{T \subseteq N} \alpha_T \leq 1,\right. \\ \left. \alpha_T \geq 0 \text{ and } \forall_{e \in N} \; \sum_{e \in T \subseteq N} \alpha_T \leq x_e\right\}
	\enspace.
\end{multline*}
}
{
\[
	f^+(x) = \max\left\{\sum_{T \subseteq N}  \alpha_T \cdot f(T) ~\middle|~ \sum_{T \subseteq N} \alpha_T \leq 1, \alpha_T \geq 0 \text{ and } \forall_{e \in N} \; \sum_{e \in T \subseteq N} \alpha_T \leq x_e\right\}
	\enspace.
\]
}

Intuitively, $f^+$ is equal to the largest possible expected value of $f$ over a distribution of sets in which every element $e$ appears with a marginal probability at most $x_e$. Using this extension, we can now introduce a variant of the relaxation $(R1)$ that works for monotone submodular objectives.

\[ \begin{array}{lll}
	(R2) & \max & f^+(p \circ x) \\
	&& p \circ x \in P_{in} \\
	&& x \in P_{out} \\
	&& x \in [0, 1]^N \\
\end{array} \]

Let us explain why we use the extension $f^+$ in $(R2)$ instead of the simpler multilinear extension. An algorithm for the submodular stochastic probing may choose the next element to probe based on the set of elements previously probed and the results of these probes. Thus, the membership of elements in the solution produced by the algorithm is not independent, and this is captured by $f^+$. Using this intuition, the work of~\cite{adamczyk_2014_submodular} implies the following counterpart of Lemma~\ref{lem:optimal_bound}.

\begin{lemma} \label{lem:optimal_bound_submodular}
The optimal value of $(R2)$ upper bounds the the expected performance of the optimal algorithm for the submodular stochastic probing problem.
\end{lemma}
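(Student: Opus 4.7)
The plan is to follow the standard approach of constructing a feasible solution to the relaxation $(R2)$ whose objective value is at least the expected performance of any algorithm $ALG$ for submodular stochastic probing, mimicking the proof of Lemma~\ref{lem:optimal_bound}. Let $ALG$ be an arbitrary algorithm, let $Q$ be the (random) set of elements probed by $ALG$, and let $S \subseteq Q$ be the subset of those elements that turn out to be active (hence, the output of $ALG$). For each $e \in N$, define $x_e = \Pr[e \in Q]$; I will show that $x$ is feasible for $(R2)$ and that its objective value $f^+(p \circ x)$ upper bounds $\bE[f(S)]$.

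First I would verify feasibility. Since $Q \in \cF_{out}$ with probability one, the vector $x = \bE[\characteristic_Q]$ is a convex combination of characteristic vectors of sets in $\cF_{out}$, hence lies in the matroid/combinatorial polytope of $\cF_{out}$ and therefore in its relaxation $P_{out}$. Next, observe that because the decision of whether to probe $e$ depends only on the outcomes of earlier probes (which are independent of whether $e$ itself is active), we have $\Pr[e \in S] = \Pr[e \in Q] \cdot p_e = p_e \cdot x_e$ for every $e \in N$. Since $S \in \cF_{in}$ always holds, the same argument as above shows that the vector $\bE[\characteristic_S] = p \circ x$ belongs to $P_{in}$. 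The constraints $x \in [0,1]^N$ are trivially satisfied.

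It remains to lower bound the objective $f^+(p \circ x)$ by $\bE[f(S)]$. Let $\alpha_T = \Pr[S = T]$ for each $T \subseteq N$. Then $\sum_T \alpha_T = 1$, $\alpha_T \geq 0$, and for every $e \in N$,
\begin{equation*}
\sum_{e \in T \subseteq N} \alpha_T = \Pr[e \in S] = p_e \cdot x_e
\enspace,
\end{equation*}
so $\{\alpha_T\}_{T \subseteq N}$ is feasible for the maximization defining $f^+(p \circ x)$. Consequently,
\begin{equation*}
\bE[f(S)] = \sum_{T \subseteq N} \alpha_T \cdot f(T) \leq f^+(p \circ x)
\enspace,
\end{equation*}
which, combined with feasibility, shows that the optimal value of $(R2)$ is at least $\bE[f(S)]$ for any algorithm $ALG$.

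The main (and really only) delicate step is the justification of the marginal identity $\Pr[e \in S] = p_e \cdot x_e$, since the algorithm is adaptive: one must observe that conditioning on the event $\{e \in Q\}$ does not change the (unconditional) probability that $e$ is active, because the probing rule never inspects $e$'s activity before probing $e$, and activities of distinct elements are independent. Once this is in place, both the feasibility argument and the $f^+$ bound follow by the definitions, exactly paralleling the linear case of Lemma~\ref{lem:optimal_bound}.
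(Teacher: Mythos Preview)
Your argument is correct and is precisely the standard proof: the paper does not give its own proof of this lemma but simply attributes it to~\cite{adamczyk_2014_submodular}, having only sketched the underlying intuition that $f^+(y)$ dominates the expected value of $f$ under any distribution with marginals at most $y$. Your write-up formalizes exactly that intuition, and the one point you flag as delicate (that $\Pr[e\in S]=p_e\cdot\Pr[e\in Q]$ despite adaptivity) is handled correctly via independence of $e$'s activity from the probing decision.
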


Additionally, \cite{adamczyk_2014_submodular} also shows that a variant of the continuous greedy algorithm of~\cite{culinescu_2011_maximizing} can be used to find a point $x$ obeying all the constraints of $(R2)$ and also the inequality $F(p \circ x) \geq (1 - e^{-1}) \cdot f^+(p \circ x^*)$, where $F$ is the multilinear extension of $f$ and $x^*$ is the optimal solution for $(R2)$. The same argument can also be used to show that by stopping the continuous greedy algorithm at time $b$, instead of letting it reach time $1$, one gets a point $\tilde{x} \in [0, 1]^N$ obeying $\tilde{p} \circ x \in bP_{in}$, $\tilde{x} \in bP_{out}$ and $F(p \circ \tilde{x}) \geq (1 - e^{-b} - o(1)) \cdot f^+(p \circ x^*)$.\footnote{The idea of running the continuous greedy for less time instead of scaling its output was first introduced by~\cite{feldman_2011_unified}.}

The algorithm we use to prove Theorem~\ref{thm:ocrsProbSubm} is Algorithm~\ref{alg:probing} modified by replacing the vector $bx^*$ with $\tilde{x}$. Additionally, we use the same definition of the CRS $\bar{\pi}$ given above, and observe that Lemmata~\ref{lem:monotone} and \ref{lem:balanceness} still apply to this CRS (with $\tilde{x}$ replacing $bx^*$ in the appropriate places). To analyze the output of this CRS we observe that the proof of~\cite{chekuri_2014_submodular} for Lemma~\ref{lem:offline_crs} in fact proves the following stronger version of the lemma. Note that this version strictly generalizes Lemma~\ref{lem:offline_crs} since being $(b, c)$-balanced is equivalent to being $(x, c)$-balanced for every vector $x \in bP$.

\begin{lemma} \label{lem:offline_crs_stronger}
For every given non-negative submodular function $f\colon 2^N \to \bR_{\geq 0}$, there exists a function $\eta_f \colon 2^N \to 2^N$ that always returns a subset of its argument (\ie, $\eta_f(S) \subseteq S$ for every $S \subseteq N$) having the following property. For every input vector $x \in [0, 1]^N$ and monotone $(x, c)$-balanced CRS $\pi$:
\[
	\bE[f(\eta_f(\pi(R(x))))]
	\geq
	c \cdot F(x)
	\enspace,
\]
where $F(x)$ is the multilinear extension of $f$.
\end{lemma}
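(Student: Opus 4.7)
The plan is to show that the lemma follows by a direct inspection of the proof of Lemma~\ref{lem:offline_crs} given in~\cite{chekuri_2014_submodular}: the balancedness hypothesis is invoked there only at the single input vector $x$, so it can be weakened from being $(b,c)$-balanced for every $x \in bP$ to merely $(x,c)$-balanced at one specific $x$. Once this observation is made, the stronger statement follows with essentially no modification to the construction of $\eta_f$ or to its analysis.

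More concretely, I would first treat the case of monotone $f$, for which one takes $\eta_f$ to be the identity map. Here the argument proceeds by fixing any ordering of the ground set and telescoping $f(\pi(R(x)))$ as a sum of marginal contributions along this ordering. The two ingredients that bound each marginal contribution are (i) submodularity of $f$, which is inherent to $f$ and independent of $\pi$, and (ii) a conditional lower bound of the form $\Pr[e \in \pi(R(x)) \mid A \subseteq \pi(R(x))] \geq c \cdot x_e$ for subsets $A$ not containing $e$, which is obtained by combining the monotonicity of $\pi$ with $(x,c)$-balancedness via FKG-type inequalities. Both ingredients refer only to the distribution of $\pi(R(x))$ for the single given $x$, so they remain available under the weakened hypothesis.

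For a non-monotone submodular $f$, the construction of $\eta_f$ in~\cite{chekuri_2014_submodular} augments the identity map with a randomized subselection step designed to prevent negative marginal contributions from accumulating; its analysis combines the monotone argument with a sampling bound in the spirit of Lemma~\ref{lem:sampling}. Crucially, this subselection operates only on the random set $\pi(R(x))$ and uses no additional balancedness information. Hence the proof again passes through under $(x,c)$-balancedness at the single $x$. The main obstacle is simply to verify that no hidden invocation of balancedness at some auxiliary vector $x' \neq x$ appears, for instance through a conditioning step that implicitly samples from $R(x')$; a careful rereading of the proof confirms that no such step occurs, which completes the argument.
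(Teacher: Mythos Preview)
Your proposal is correct and matches the paper's own treatment: the paper does not give an independent proof of this lemma but simply observes that the proof of Lemma~\ref{lem:offline_crs} in~\cite{chekuri_2014_submodular} already establishes the stronger statement, because balancedness is only ever invoked at the single vector $x$. Your write-up is in fact more detailed than the paper's, since you sketch the structure of the argument in~\cite{chekuri_2014_submodular} (telescoping plus FKG for the monotone case, and an additional pruning step for the non-monotone case) rather than merely asserting that it goes through; one small inaccuracy is that the pruning map $\eta_f$ in~\cite{chekuri_2014_submodular} is deterministic (greedily discard elements with negative marginal in a fixed order) rather than a ``randomized subselection,'' but this does not affect the validity of your reduction.
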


We are now ready to prove the next lemma, which together with Lemma~\ref{lem:optimal_bound_submodular}, proves Theorem~\ref{thm:ocrsProbSubm}.
\begin{lemma}
$\bE[f(S)] \geq (c_{in} \cdot c_{out}) \cdot F(p \circ \tilde{x})$, where $S$ is the output set of the modified Algorithm~\ref{alg:probing}.
\end{lemma}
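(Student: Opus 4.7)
The plan is to invoke the stronger offline CRS lemma (Lemma~\ref{lem:offline_crs_stronger}) applied to the combined CRS $\bar{\pi}$ defined as before by $\bar{\pi}(A_{in}) = \bar{\pi}_{in}(A_{in}) \cap \bar{\pi}_{out}(A_{out})$, where $A_{in}$ is the set of active elements chosen by the modified algorithm (using $\tilde{x}$ in place of $bx^*$). The key point is that, with $\tilde{x}$ replacing $bx^*$, the set $A_{in}$ is distributed like $R(p \circ \tilde{x})$, while $A_{out}$ is distributed like $R(\tilde{x})$, and since $p \circ \tilde{x} \in bP_{in}$ and $\tilde{x} \in bP_{out}$ by construction, the statements of Lemmata~\ref{lem:monotone} and~\ref{lem:balanceness} carry over verbatim. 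Thus $\bar{\pi}$ is monotone and $(p \circ \tilde{x}, c_{in} \cdot c_{out})$-balanced.

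Applying Lemma~\ref{lem:offline_crs_stronger} to the non-negative submodular function $f$ and the CRS $\bar{\pi}$ with input vector $p \circ \tilde{x}$ yields a function $\eta_f \colon 2^N \to 2^N$ with $\eta_f(T) \subseteq T$ for all $T$, such that
\[
\bE[f(\eta_f(\bar{\pi}(A_{in})))] \geq (c_{in} \cdot c_{out}) \cdot F(p \circ \tilde{x})
\enspace.
\]
Next, the analogue of Observation~\ref{obs:subset_pi} for the modified algorithm still holds (its proof only uses the definitions of $\bar{\pi}_{in}$, $\bar{\pi}_{out}$, and the selection rule on Line~\ref{line:conditions}, none of which change when $bx^*$ is replaced by $\tilde{x}$), giving $\bar{\pi}(A_{in}) \subseteq S$, and hence $\eta_f(\bar{\pi}(A_{in})) \subseteq \bar{\pi}(A_{in}) \subseteq S$.

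Finally, since $f$ is monotone, $f(S) \geq f(\eta_f(\bar{\pi}(A_{in})))$ pointwise, so taking expectations yields
\[
\bE[f(S)] \geq \bE[f(\eta_f(\bar{\pi}(A_{in})))] \geq (c_{in} \cdot c_{out}) \cdot F(p \circ \tilde{x})
\enspace,
\]
as required. I do not anticipate any genuine obstacle here: the only thing to check carefully is that the replacement of $bx^*$ by $\tilde{x}$ does not break any of the inclusions/containments used in Lemmata~\ref{lem:monotone} and~\ref{lem:balanceness} and in Observation~\ref{obs:subset_pi}, and this works precisely because the continuous greedy output $\tilde{x}$ is designed to satisfy $\tilde{x} \in bP_{out}$ and $p \circ \tilde{x} \in bP_{in}$, which are the only properties of $bx^*$ those proofs used.
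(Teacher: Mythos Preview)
Your proposal is correct and follows essentially the same approach as the paper: you invoke Lemmata~\ref{lem:monotone} and~\ref{lem:balanceness} (with $\tilde{x}$ in place of $bx^*$) to conclude that $\bar{\pi}$ is monotone and $(p\circ\tilde{x}, c_{in}\cdot c_{out})$-balanced, apply Lemma~\ref{lem:offline_crs_stronger}, use Observation~\ref{obs:subset_pi} to get $\eta_f(\bar{\pi}(A_{in}))\subseteq S$, and finish by monotonicity of $f$. This is exactly the paper's argument, only with the auxiliary checks spelled out a bit more explicitly.
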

\begin{proof}
Let $S'$ be the set $\eta_f(\pi(A_{in})) \subseteq \pi(A_{in}) = \bar{\pi}_{in}(A_{in}) \cap \bar{\pi}_{out}(A_{out}) \subseteq S$. By combining Lemmata~\ref{lem:monotone}, \ref{lem:balanceness} and~\ref{lem:offline_crs_stronger} and using the observation that $A_{in}$ is distributed like $R(p \circ \tilde{x})$, we get:
\[
	\bE[f(S')]
	\geq
	(c_{in} \cdot c_{out}) \cdot F(p \circ \tilde{x})
	\enspace.
\]
Using the monotonicity of $f$ we now get: $\bE[f(S)] \geq \bE[f(S')] \geq (c_{in} \cdot c_{out}) \cdot F(p \circ \tilde{x})$.
\end{proof}

}



\bibliographystyle{plain}
\bibliography{lit}

\ifbool{shortVersion}{}{
\appendix
\section{Definitions of Matroidal Concepts} \label{app:matroid_definitions}

Recall that a matroid $M=(N,\mathcal{F})$ is a tuple consisting of a finite ground set $N$, and a nonempty family $\mathcal{F}\subseteq 2^N$ of subsets of
the ground set, called \emph{independent} sets, which satisfy:
\begin{enumerate}[(i)]
\setlength\itemsep{0em}
\item $I\subseteq J \in \mathcal{F} \Rightarrow I\in \mathcal{F}$,
and
\item $I,J\in \mathcal{F}, |I|>|J|$ $\Rightarrow$
$\exists e\in I\setminus J$ s.t.
$J\cup\{e\}\in \mathcal{F}$.
\end{enumerate}

The \emph{rank} of a set $S \subseteq 2^N$ is the size of a maximum cardinality independent subset of $S$. The rank function of the matroid $M$ is a function $\rank\colon 2^N \to \mathbb{Z}_{\geq 0}$ (where $\mathbb{Z}_{\geq 0}$ is the set of all non-negative integers) assigning each set its rank.
More formally,
\[
	\rank(S)
	=
  \max\{|I| \mid I\in \mathcal{F}, I\subseteq S\}
	\enspace.
\]
Observe that the rank of an independent set is equal to its size. The rank of the matroid $M$ itself is defined as $\rank(N)$. Notice that $\rank(S) \leq \rank(N)$ for every subset $S \subseteq N$. A set $S \subseteq N$ is called a \emph{base} of $M$ if it is independent and has maximum rank, \ie, $|S| = \rank(S) = \rank(N)$.

We say that an element $e \in N$ is \emph{spanned} by a set $S \subseteq N$ if adding $e$ to $S$ does not increase the rank of $S$. On the other hand, the \emph{span} of a subset $S\subseteq N$ is the set of elements that are spanned by it. More formally, the span of a set $S$ is $\spn(S) = \{e\in N \mid \rank(S+e) = \rank(S)\}$.

Given a subset $N' \subseteq N$, the restriction of $M$ to $N'$, denoted by $M|_{N'}$, is the matroid obtained from $M$ by keeping only the elements of $N'$. Formally, $M|_{N'}$ is the matroid $(N', \mathcal{F} \cap 2^{N'})$. On the other hand, contracting $N'$ in $M$ results in another matroid, denoted by $M/N'$, over the ground set $N \setminus N'$. A set is independent in $M/N'$ if and only if adding a base of $N'$ to it results in an independent set of $M$.
It turns out that this definition is independent of the base that is chosen for $N'$.
Formally, $M/N'$ is the matroid $(N \setminus N', \mathcal{F}')$, where:
\[
  \mathcal{F}'
	=
	\{S \subseteq N \setminus N' \mid \rank(S \cup N') = |S| + \rank(N')\}
	\enspace.
\]

}

\InputIfFileExists{ricoNotes.tex}{}{}

%
%

\end{document}